
\documentclass[aps,pra,twocolumn,nofootinbib, superscriptaddress,10pt]{revtex4-1}


\usepackage{graphicx}
\usepackage{epstopdf}
\usepackage{amsmath}
\usepackage{amssymb}
\usepackage{mathrsfs}
\usepackage{amsthm}
\usepackage{bm}
\usepackage{url}
\usepackage[T1]{fontenc}
\usepackage{csquotes}
\MakeOuterQuote{"}


\newtheoremstyle{note}
  {\topsep/2}               
  {\topsep/2}               
  {}                      
  {\parindent}            
  {\itshape}              
  {.}                     
  {5pt plus 1pt minus 1pt}
  {}

\theoremstyle{note}
\newtheorem{theorem}{Theorem}
\newtheorem{lemma}{Lemma}

\newtheorem{corollary}{Corollary}
\newtheorem{proposition}{Proposition}

\theoremstyle{definition}

\theoremstyle{remark}
\newtheorem{remark}{Remark}


\def\vec#1{\bm{#1}} 

\newcommand{\tr}{\operatorname{tr}}

\newcommand{\diag}{\operatorname{diag}}



 \newcommand{\rmd}{\mathrm{d}}
 \newcommand{\rme}{\mathrm{e}}
 \newcommand{\rmi}{\mathrm{i}}

 \newcommand{\rmB}{\mathrm{B}}

 \newcommand{\rmT}{\mathrm{T}}

 \newcommand{\caF}{\mathcal{F}}
 \newcommand{\caH}{\mathcal{H}}
 
 \newcommand{\caM}{\mathcal{M}}


\newcommand{\be}{\begin{equation}}
\newcommand{\ee}{\end{equation}}
\newcommand{\ba}{\begin{align}}
\newcommand{\ea}{\end{align}}

\def\<{\langle}  
\def\>{\rangle}  

\newcommand{\dket}[1]{| #1\>\!\>}

\newcommand{\dbra}[1]{\<\!\< #1|}

\newcommand{\dinner}[2]{\<\!\< #1| #2\>\!\>}

\newcommand{\douter}[2]{| #1\>\!\>\<\!\< #2|}





\newcommand{\mse}{\mathcal{E}}

\newcommand{\msb}{\mathcal{E}_{\mathrm{SB}}}

\newcommand{\ml}{\mathrm{ML}}




\def\eqref#1{\textup{(\ref{#1})}}  
\newcommand{\eref}[1]{Eq.~\textup{(\ref{#1})}}

\newcommand{\fref}[1]{Fig.~\ref{#1}}

\newcommand{\sref}[1]{Sec.~\ref{#1}}

\newcommand{\thref}[1]{Theorem~\ref{#1}}
\newcommand{\Thref}[1]{Theorem~\ref{#1}}
\newcommand{\thsref}[1]{Theorems~\ref{#1}}
\newcommand{\Thsref}[1]{Theorems~\ref{#1}}

\newcommand{\lref}[1]{Lemma~\ref{#1}}

\newcommand{\pref}[1]{Proposition~\ref{#1}}
\newcommand{\Pref}[1]{Proposition~\ref{#1}}

\newcommand{\crref}[1]{Corollary~\ref{#1}}
\newcommand{\Crref}[1]{Corollary~\ref{#1}}

\newcommand{\cref}[1]{Conjecture~\ref{#1}}
\newcommand{\Cref}[1]{Conjecture~\ref{#1}}

\newcommand{\rcite}[1]{Ref.~\cite{#1}}
\newcommand{\rscite}[1]{Refs.~\cite{#1}}


\begin{document}
\title{Universally Fisher-Symmetric Informationally Complete Measurements}

\author{Huangjun Zhu}
\email{zhuhuangjun@fudan.edu.cn}
\affiliation{Institute for Theoretical Physics, University of Cologne,  Cologne 50937, Germany}

\affiliation{Department of Physics and Center for Field Theory and Particle Physics, Fudan University, Shanghai 200433, China}

\affiliation{Institute for Nanoelectronic Devices and Quantum Computing, Fudan University, Shanghai 200433, China}

\affiliation{State Key Laboratory of Surface Physics, Fudan University, Shanghai 200433, China}

\affiliation{Collaborative Innovation Center of Advanced Microstructures, Nanjing 210093, China}

\author{Masahito Hayashi}
\affiliation{Graduate School of Mathematics, Nagoya University, Nagoya, 464-8602, Japan}

\affiliation{Centre for Quantum Technologies, National University of Singapore, 3 Science Drive 2, 117542, Singapore}

\begin{abstract}
A quantum measurement is Fisher symmetric if it provides uniform and maximal information on all parameters that characterize the quantum state of interest. Using (complex projective) 2-designs, we  construct measurements on a pair of identically prepared quantum states that are  Fisher symmetric for all pure states. Such measurements are optimal in achieving the minimal statistical error without  adaptive measurements.
We then determine all collective measurements on a pair that are Fisher symmetric for the completely mixed state
and  for all pure states simultaneously.  
For a qubit,  these measurements are  Fisher symmetric for all  states. The minimal optimal measurements are tied to the elusive symmetric informationally complete measurements, which reflects a deep connection between local symmetry and global symmetry. In the study, we derive a fundamental constraint on the Fisher information matrix of any collective measurement on a pair, which offers a useful tool for characterizing the tomographic efficiency of collective measurements.

\end{abstract}

\date{\today}
\maketitle

\emph{Introduction.}---Quantum state tomography is a primitive of various quantum information processing tasks, such as quantum communication and metrology \cite{Hels76book, Hole82book,  Haya05book,BrauC94, PariR04Book,LvovR09,GiovLM11}. Crucial to achieving high tomographic efficiency is a judicial choice of the quantum measurement, which is usually  represented by a positive-operator-valued measure (POVM). A POVM is \emph{informationally complete} (IC) if all states  can be determined uniquely by the measurement statistics.  A symmetric informationally complete POVM (SIC for short) is a special IC POVM that is distinguished by global symmetry between POVM elements \cite{FuchHS17,Zaun11,ReneBSC04}. SICs are optimal on average in minimal or linear quantum state tomography \cite{Scot06,ZhuE11,Zhu14T} and are also interesting for many other reasons, including foundational studies \cite{FuchS13,ApplFSZ17,Zhu16Q,FuchHS17}.

In the paradigm of local tomography,  the quantum state is known to be in the neighborhood of a fiducial state. 
In this case,  the Fisher information matrix is a useful tool for analyzing the performance of a quantum measurement as its inverse sets a lower bound for the mean-square-error (MSE) matrix of any unbiased estimator \cite{Rao02book}. A quantum measurement is \emph{Fisher symmetric} if it provides uniform and maximal information on all parameters that characterize the quantum state of interest~\cite{LiFGK16}. Here uniformity means that the Fisher information matrix is proportional to the quantum Fisher information matrix \cite{Hels76book, Hole82book,Haya05book,BrauC94}, and  maximality means that no other measurement can provide more information \cite{GillM00}. Such measurements, if they exist, are as efficient as possible for estimating all parameters of interest.

For pure states, Li et al. \cite{LiFGK16} offered a  method for constructing  measurements that are Fisher symmetric for  any arbitrary, but fixed state. The covariant measurement composed of all pure states weighted by the Haar measure is simultaneously Fisher symmetric for all pure states \cite{Haya98,Zhu12the,Zhu14IOC}, but is not  realistic to implement.
For mixed states, Fisher-symmetric  measurements in general cannot exist except for the completely mixed state \cite{LiFGK16}. Even the covariant measurement is far from being Fisher symmetric for a generic state, which means mixed states cannot be estimated efficiently by  fixed measurements if  infidelity or  Bures distance is the figure of merit \cite{MahlRDF13,Zhu12the,Zhu14IOC,HouZXL16}.

In this paper, we show that many limitations mentioned above can be overcome if we can measure  a pair of identically prepared quantum states together. Such collective measurements are known to provide more information than separable measurements \cite{PereW91,MassP95,BagaBGM06S,Zhu12the,VidrDGJ14} and  are tied to  a number of nonclassical phenomena \cite{BennDFM99,GisiP99}. 
Using complex projective 2-designs \cite{Zaun11,ReneBSC04,Scot06}, we  construct quantum measurements  that are universally Fisher symmetric for all pure states and that have no more than $4d^2$ outcomes for a $d$-level system. These measurements are optimal in achieving the minimal statistical error without adaptive measurements. We then 
determine all  measurements that are Fisher symmetric for the completely mixed state and  all pure states.  
Interestingly, the minimal optimal measurements  are  tied to SICs \cite{Zaun11,ReneBSC04}, which reveals a deep connection between local symmetry and global symmetry. For a qubit, we determine all measurements on a pair that are universally Fisher symmetric for all  states and show that   they are  significantly more efficient than all local  measurements. This prediction was successfully verified in experiments recently \cite{HouTSZ17}.
In the  study, we derive a fundamental constraint on the Fisher information matrix of any collective measurement on a pair, 
which offers a useful tool for characterizing the efficiency of collective measurements.

\emph{Preliminaries.}---Suppose the quantum state $\rho(\theta)$ acting on the $d$-dimensional Hilbert space $\caH$ is characterized by  parameters $\theta_1, \theta_2, \ldots, \theta_g$, where $g=2d-2$ for pure states and $g=d^2-1$ for mixed states. A POVM is a set of positive operators that sum up to the identity. 
Given a POVM $\{\Pi_\xi\}$ on $\caH^{\otimes N}$ with positive integer $N$, the
probability of obtaining the outcome $\xi$ is
$p_\xi(\theta)=\tr[\rho(\theta)^{\otimes N}\Pi_\xi]$. The  \emph{Fisher
information matrix} $I^{(N)}(\theta)$ has matrix elements
\begin{equation}
I^{(N)}_{ab}(\theta)=\sum_{\xi,p_\xi>0}\frac{1}{p_\xi}\frac{\partial p_\xi}{\partial \theta_a}\frac{\partial p_\xi}{\partial \theta_b}.
\end{equation}
The inverse Fisher information matrix
sets a lower bound for the MSE matrix $C^{(N)}(\theta)$ (also known as the covariance matrix) of any unbiased estimator, which is known as the Cram\'er-Rao bound \cite{Rao02book}; see the supplement, which includes \rscite{HayaM08,KahnG09,Naga89A,Haya97inbook,BengZ06book,BarnG00,DankCEL09,GrosAE07,RoyS09, Appl05, Zhu10, HughS16,Szol14}.

The \emph{quantum Fisher information  matrix}  $J(\theta)$ \cite{Hels76book, Hole82book,BrauC94,Haya05book} has matrix elements,
\begin{equation}
J_{ab}(\theta)=\frac{1}{2}\tr\bigl [\rho(L_a L_b+L_bL_a )\bigr ],
\end{equation}
where  the Hermitian operator $L_a $  satisfies the equation $\frac{1}{2}(\rho
L_a+L_a\rho)=\rho_{,a}:=\partial
\rho(\theta)/\partial \theta_a$ and is known as  the symmetric logarithmic derivative associated with $\theta_a$. The matrix $J(\theta)$ is an upper bound for the scaled Fisher information matrix $I^{(N)}(\theta)/N$, so its inverse is a lower bound for the scaled  MSE matrix $NC^{(N)}(\theta)$, which is known as the quantum Cram\'er-Rao bound \cite{Hels76book, Hole82book,Haya05book, BrauC94}. The  bound
generally cannot be saturated except when different $L_a $ can be measured simultaneously.

Another fundamental constraint on the Fisher information matrix $I^{(N)}(\theta)$ is the following inequality derived by Gill and Massar (GM)~\cite{GillM00},
\begin{equation}\label{eq:GMineq}
\tr[ J^{-1}(\theta) I^{(N)}(\theta) ]\leq N(d-1),
\end{equation}
where $\tr [J^{-1}(\theta) I(\theta)]$ is independent of the parametrization. For pure states, the GM inequality applies to arbitrary measurements \cite{GillM00,Mats02}. For mixed states, it applies to arbitrary separable measurements, but may fail for certain collective measurements \cite{GillM00,Zhu12the}. When $N=1$, the GM inequality is saturated iff the  POVM $\{\Pi_\xi\}$ is rank one and  $\tr[\rho(\theta)\Pi_\xi]\neq 0$; see the supplement. In addition, the GM inequality is useful to studying uncertainty relations~\cite{Zhu15IC} and quantum steering~\cite{ZhuHC16}.

A POVM  on $\caH^{\otimes t}$ with positive integer $t$ is \emph{weakly Fisher symmetric} for $\rho(\theta)$ if $I^{(t)}(\theta)$ is proportional to $J(\theta)$ and  \emph{Fisher symmetric}  if  $\tr [J^{-1}(\theta) I^{(t)}(\theta)]$ further attains the maximum over all POVMs on $\caH^{\otimes t}$ \cite{LiFGK16}. Such POVMs are as efficient as possible for estimating all parameters  of interest.
Pure states are specified by $2(d-1)$ parameters, and the inequality in \eref{eq:GMineq} with $N=t$ holds for any POVM on $\caH^{\otimes t}$, 
so  Fisher symmetry means $I^{(t)}(\theta)=\frac{t}{2} J(\theta)$. 
In that  case, each parameter is determined with half  of the maximum resolution for determining this parameter separately. 
When $t=1$, Fisher symmetry for a  mixed state means $I(\theta) =\frac{1}{d+1}J(\theta)$. Such measurements, if they exist, are optimal in minimizing the mean square Bures distance (MSB) and mean infidelity \cite{Zhu12the,HouZXL16}; see the supplement.

When $t=1$,  Fisher-symmetric
measurements have been constructed by Li et al. for any given pure state~\cite{LiFGK16}. In addition, the covariant measurement composed of all pure states weighted by the Haar measure is simultaneously Fisher symmetric for all pure states \cite{Haya98,Zhu12the,Zhu14IOC}. 
However, no measurement with a finite number of outcomes has this property  as shown in \thref{thm:UFSnogo} below, which is proved in the supplement.  
For a mixed state, usually there is no Fisher-symmetric measurement except for the completely mixed  state and qubit states \cite{LiFGK16,Zhu12the,HouZXL16}. 
Even in the case of a qubit, no measurement is Fisher symmetric for all states; even the covariant measurement is far from being Fisher symmetric for a generic state \cite{Zhu14IOC}. 

\begin{theorem}\label{thm:UFSnogo}
No measurement on $\caH$ with a finite number of outcomes is  Fisher symmetric for all pure states. 
\end{theorem}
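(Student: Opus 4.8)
The plan is to reduce Fisher symmetry to saturation of the Gill--Massar inequality \eref{eq:GMineq} and then exploit the saturation condition recorded above. First I would observe that for a single copy ($t=1$) and a pure state $\rho(\theta)$, Fisher symmetry means $I^{(1)}(\theta)=\tfrac{1}{2}J(\theta)$. Since pure states carry $g=2(d-1)$ parameters and $J(\theta)$ is nondegenerate, this gives $\tr[J^{-1}(\theta)I^{(1)}(\theta)]=\tfrac{1}{2}\tr[J^{-1}(\theta)J(\theta)]=\tfrac{1}{2}g=d-1$, which is exactly the right-hand side of \eref{eq:GMineq} with $N=1$. Hence any measurement that is Fisher symmetric for a given pure state must saturate the GM inequality at that state.

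Next I would invoke the saturation condition stated above: for $N=1$ the GM inequality is saturated at $\rho(\theta)$ iff the POVM $\{\Pi_\xi\}$ is rank one and $\tr[\rho(\theta)\Pi_\xi]\neq0$ for every outcome $\xi$. Consequently, a POVM that is Fisher symmetric for all pure states must consist of rank-one elements $\Pi_\xi=c_\xi\outer{\phi_\xi}{\phi_\xi}$ with $c_\xi>0$, and must satisfy $\tr[\rho(\theta)\Pi_\xi]=c_\xi|\inner{\phi_\xi}{\psi}|^2\neq0$ for every pure state $\ket{\psi}$ and every $\xi$, where I write $\rho(\theta)=\outer{\psi}{\psi}$. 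Note that the rank-one property is forced already by demanding saturation at a single state, whereas the positivity of every outcome probability is an additional constraint that must now hold simultaneously at all states.

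The contradiction then comes purely from finiteness. Assume the measurement has only finitely many outcomes and fix any one of them, say $\xi$. For $d\geq2$ the orthogonal complement of $\ket{\phi_\xi}$ is nontrivial, so there exists a pure state $\ket{\psi}$ with $\inner{\phi_\xi}{\psi}=0$; at this state $\tr[\rho(\theta)\Pi_\xi]=0$, violating the saturation condition. Therefore the GM inequality is strict at $\ket{\psi}$, so $\tr[J^{-1}(\theta)I^{(1)}(\theta)]<d-1$ and Fisher symmetry fails there (the case $d=1$ is vacuous). This shows that no finite-outcome measurement on $\caH$ can be Fisher symmetric for all pure states.

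The step that demands the most care is the passage from Fisher symmetry to the full saturation condition, in particular the requirement $\tr[\rho(\theta)\Pi_\xi]\neq0$ for every outcome. One might object that an outcome with $p_\xi=0$ is simply dropped from the Fisher-information sum and should be harmless; the content of the saturation condition is precisely that such a vanishing outcome represents information unavoidably lost at that state, so the bound is attained only when every rank-one element has strictly positive probability. Once this is granted, the finiteness obstruction is immediate, since finitely many rank-one projectors can never avoid being orthogonal to some pure state.
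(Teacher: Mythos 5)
Your proposal is correct and follows essentially the same route as the paper's own proof: both reduce Fisher symmetry for a pure state to saturation of the Gill--Massar inequality, invoke the saturation condition (rank-one elements with $\tr[\rho\Pi_\xi]>0$ for every outcome), and then observe that a finite family of rank-one operators always admits an orthogonal pure state where saturation fails. Your explicit computation $\tr[J^{-1}I]=\tfrac12 g=d-1$ and the closing remark on why zero-probability outcomes cannot be dismissed are just more detailed renderings of the same argument.
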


Before studying collective measurements, we need to introduce several additional concepts. 
A weighted set of pure quantum states $\{|\psi_\xi\>, w_\xi\}$   in  $\caH$ with $w_\xi\geq0$ is  a (weighted) \emph{$t$-design} \cite{Zaun11,ReneBSC04,Scot06} if 
$\sum_{\xi} w_\xi (|\psi_\xi\>\<\psi_\xi|)^{\otimes t }$
is proportional to the projector $P_+
$  onto the  symmetric subspace $\caH_+$ of $\caH^{\otimes t}$.  We are mostly interested in 2-designs and will assume $t=2$ in the following discussion except when stated otherwise. Any 2-design $\{|\psi_\xi\>, w_\xi\}$ has at least $d^2$ elements, and the lower bound is saturated iff all weights $w_\xi$ are equal, and $\{|\psi_\xi\>\}$ forms a SIC \cite{Zaun11,ReneBSC04, Scot06,ApplFZ15G}, 
\begin{equation}
|\<\psi_\xi|\psi_\eta\>|^2=\frac{d\delta_{\xi\eta}+1}{d+1},\quad \forall \xi, \eta. 
\end{equation}

\emph{Fisher-symmetric measurements for pure states.}---Now we are ready to construct measurements  on $\caH^{\otimes 2}$ that are 
Fisher symmetric for all pure states. Since $\rho^{\otimes 2}$ is supported on the symmetric subspace $\caH_+$ whenever $\rho$ is pure, it suffices to construct a POVM on this subspace. Let $\{|\psi_\xi\rangle, w_\xi\}$ be a 2-design with $\sum_{\xi}w_\xi=d(d+1)/2$. Then the operators  $\Pi_\xi=w_\xi (|\psi_\xi\rangle\langle \psi_\xi|)^{\otimes 2}$ form a POVM  on $\caH_+$. Such POVMs have been studied before   and are known to be optimal on average for certain estimation problems \cite{MassP95, HayaHH05}. 
However, little is known about their performance with regard to local tomography. Here we shall show  that these POVMs are optimal for every parameter point simultaneously. Note that the  existence of such  an efficient POVM itself is highly nontrivial. 
\begin{theorem}
 Let $\{|\psi_\xi\rangle, w_\xi\}$ be a 2-design with $\sum_{\xi}w_\xi=d(d+1)/2$ and $\Pi_\xi=w_\xi (|\psi_\xi\rangle\langle \psi_\xi|)^{\otimes 2}$. Then the POVM $\{\Pi_\xi\}$ on  $\caH_+$ is Fisher symmetric for all pure states.
\end{theorem}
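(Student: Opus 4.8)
The plan is to collapse the entire statement into the single operator identity $I^{(2)}(\theta)=J(\theta)$ at every pure state $\rho=\rho(\theta)$. As recalled above, for pure states the Gill--Massar inequality \eqref{eq:GMineq} with $N=t=2$ holds for arbitrary POVMs on $\caH^{\otimes2}$, so the bound reads $\tr[J^{-1}I^{(2)}]\le 2(d-1)$, and Fisher symmetry for $t=2$ is precisely the condition $I^{(2)}=\tfrac{t}{2}J=J$. Hence once $I^{(2)}=J$ is shown, proportionality is immediate and maximality follows for free, since $\tr[J^{-1}I^{(2)}]=\tr[J^{-1}J]=2(d-1)$ saturates the bound; no separate saturation argument is needed.

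First I would write out the probabilities. Because $\rho^{\otimes2}=(\outer{\psi}{\psi})^{\otimes2}$, we have $p_\xi=\tr[\rho^{\otimes2}\Pi_\xi]=w_\xi|\inner{\psi}{\psi_\xi}|^4=w_\xi f_\xi^2$, with $f_\xi:=\tr[\rho P_\xi]$ and $P_\xi:=\outer{\psi_\xi}{\psi_\xi}$. Writing $\partial_a$ for $\partial/\partial\theta_a$, one has $\partial_a p_\xi=2w_\xi f_\xi\,\partial_a f_\xi$ and $\partial_a f_\xi=\tr[\rho_{,a}P_\xi]$, so
\begin{equation}
I^{(2)}_{ab}=\sum_{\xi,\,p_\xi>0}\frac{\partial_a p_\xi\,\partial_b p_\xi}{p_\xi}=4\sum_{\xi,\,p_\xi>0}w_\xi\,(\partial_a f_\xi)(\partial_b f_\xi).
\end{equation}
Since $\partial_a f_\xi=0$ whenever $f_\xi=0$, the restriction $p_\xi>0$ is harmless and I may extend the sum to all $\xi$.

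The crux is to fold this outcome sum into a single trace and invoke the design property. Using $(\partial_a f_\xi)(\partial_b f_\xi)=\tr[(\rho_{,a}\otimes\rho_{,b})P_\xi^{\otimes2}]$ gives $I^{(2)}_{ab}=4\,\tr[(\rho_{,a}\otimes\rho_{,b})\sum_\xi w_\xi P_\xi^{\otimes2}]$. The normalization $\sum_\xi w_\xi=d(d+1)/2=\dim\caH_+$ fixes the design constant to unity, so $\sum_\xi w_\xi P_\xi^{\otimes2}=P_+=\tfrac12(\mathbf{1}+S)$ with $S$ the swap operator. The identity part contributes $\tr[\rho_{,a}]\tr[\rho_{,b}]=0$, because $\tr\rho\equiv1$ forces $\tr\rho_{,a}=0$; the swap part contributes $\tr[\rho_{,a}\rho_{,b}]$. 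Therefore $I^{(2)}_{ab}=2\,\tr[\rho_{,a}\rho_{,b}]$.

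Finally I would close the loop with the pure-state form of the quantum Fisher information. Differentiating $\rho^2=\rho$ gives $\rho_{,a}\rho+\rho\rho_{,a}=\rho_{,a}$, so the symmetric logarithmic derivative is simply $L_a=2\rho_{,a}$; substituting into $J_{ab}=\tfrac12\tr[\rho(L_aL_b+L_bL_a)]$ and simplifying yields $J_{ab}=2\,\tr[\rho_{,a}\rho_{,b}]$, whence $I^{(2)}=J$. I expect the main difficulty to be bookkeeping rather than conceptual: pinning the overall constant exactly, so that maximality (not merely proportionality) holds. That constant is governed entirely by the two normalizations---the weight sum $\sum_\xi w_\xi=\dim\caH_+$, which forces the coefficient of $P_+$ to be one, and the factor in $L_a=2\rho_{,a}$---so the care lies in tracking these factors and in noting that $J$ is positive definite on the $2(d-1)$-dimensional tangent space, so that $J^{-1}$ in the Fisher-symmetry criterion is well defined.
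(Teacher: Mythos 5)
Your proposal is correct and follows essentially the same route as the paper's proof: factorized outcome probabilities $p_\xi=w_\xi(\<\psi_\xi|\rho|\psi_\xi\>)^2$, the 2-design identity collapsing the Fisher-information sum to $I^{(2)}_{ab}=2\tr(\rho_{,a}\rho_{,b})$, and the observation that $I^{(2)}=J$ saturates the Gill--Massar bound $2(d-1)$, which is what Fisher symmetry means here. The only minor difference is the last step: you verify $2\tr(\rho_{,a}\rho_{,b})=J_{ab}$ coordinate-freely via $L_a=2\rho_{,a}$ and $\rho\rho_{,a}+\rho_{,a}\rho=\rho_{,a}$, whereas the paper uses unitary invariance to reduce to a fiducial state and an explicit parametrization; both are sound.
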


\begin{proof}
The probability of obtaining outcome $\xi$  of $\{\Pi_\xi\}$ is $p_\xi=\tr(\rho^{\otimes 2}\Pi_\xi)=w_\xi(\langle\psi_\xi|\rho|\psi_\xi\rangle)^2$, which is factorized. 
These probabilities determine  probabilities associated with the companion POVM  $\bigl\{\frac{2w_\xi}{d+1} (|\psi_\xi\rangle\langle \psi_\xi|)\bigr\}$ on $\caH$, which is  IC. So $\{\Pi_\xi\}$ is also  IC. The Fisher information matrix $I^+$ provided by $\{\Pi_\xi\}$ has matrix elements
\begin{align}\label{eq:FisherSym}
I^+_{ab}&=4\sum_{\xi}w_\xi\langle\psi_\xi|
\rho_{,a}|\psi_\xi\rangle\langle\psi_\xi| \rho_{,b}|\psi_\xi\rangle  \nonumber \\
&=4\tr\bigl[(\rho_{,a}\otimes \rho_{,b})P_+\bigr] = 2\tr (\rho_{,a} \rho_{,b}),
\end{align}
note that $\rho_{,a}$ are traceless and that $\langle\psi_\xi|
\rho_{,a}|\psi_\xi\rangle=0$ whenever $\langle\psi_\xi|
\rho|\psi_\xi\rangle=0$. Interestingly,
the Fisher information matrix is independent of the specific measurement, as long as $\{|\psi_\xi\rangle, w_\xi\}$ is a 2-design. In particular, it is invariant under
the unitary transformation $\Pi_\xi\rightarrow U^{\otimes 2}\Pi_\xi
(U^{\otimes 2})^\dag$ for any unitary $U$ on $\caH$. Therefore, to show that $\{\Pi_\xi\}$ is  Fisher symmetric for all pure states, it suffices to consider any given pure state, say $\rho=|0\rangle\langle 0|$, assuming $|j\rangle$ for $j=0,1,\ldots, d-1$ form an orthonormal basis. 
For pure states, we can choose a suitable parametrization such that $\rho_{,a}$ take on the  form \cite{GillM00}
\begin{equation}
\rho_{,a}=\begin{cases}
|a\rangle\langle 0|+|0\rangle\langle a| & 1\leq a\leq d-1,\\
\rmi(|a'\rangle\langle 0|-|0\rangle\langle a'|) &d\leq a\leq 2(d-1),
\end{cases}
\end{equation}
where $a'=a-d+1$. 
Then $L_a=2\rho_{,a}$ and
\begin{equation}\label{eq:Fisher2designPure}
I^+_{ab}=J_{ab}=4\delta_{ab}.
\end{equation}
So  $\{\Pi_\xi\}$ is  Fisher symmetric for all pure states. 
\end{proof}

If a SIC exists in dimension $d$ \cite{FuchHS17,Zaun11,ReneBSC04}, then we can construct a   Fisher-symmetric measurement for all pure states with only $d^2
$ outcomes. In every prime power dimension, such a measurement can be constructed using a complete set of mutually unbiased bases (MUB) \cite{WootF89,DurtEBZ10,Zhu15M}, which forms a 2-design with $d^2+d$ elements. In general, let $d'$ be the smallest prime power that is not smaller than $d$ (which satisfies $d'\leq 2d-2$); then a 2-design in dimension $d$ can be constructed by projecting  a complete set of mutually unbiased bases in dimension $d'$ to a subspace of dimension $d$. So  we can always  construct  a   Fisher-symmetric measurement for all pure states in  dimension~$d$ with no more than 
 $4d^2$ outcomes. It is worth pointing out that tensor products of POVMs constructed above are also Fisher symmetric for all pure states. So are POVMs on $\caH^{\otimes t}$
 constructed from $t$-designs with $t\geq3$. However, such POVMs offer little advantage over those constructed from 2-designs, but are much more difficult to implement.

\emph{Fisher-symmetric measurements for mixed states.}---Here we need to generalize the concepts of 2-designs and SICs. A set of positive operators $\{\Pi_\xi\}$ is called a \emph{generalized 2-design} if 
\begin{align}\label{eq:g2design}
\sum_{\xi}\frac{ \Pi_\xi \otimes \Pi_\xi}{\tr(\Pi_\xi)}=&\frac{\sum_{\xi} w_\xi}{d}\Bigl(\frac{1+\wp}{d+1}P_++\frac{1-\wp}{d-1}P_-\Bigr), 
\end{align}
where  $w_\xi=\tr(\Pi_\xi)$, 
$\wp=\sum_\xi w_\xi \wp_\xi/(\sum_{\xi} w_\xi)$,  $\wp_\xi= \tr(\Pi_\xi^2)/(\tr\Pi_\xi)^2$, and $P_-$ is the projector onto the antisymmetric subspace $\caH_-$ of $\caH^{\otimes 2}$; cf.~\rcite{GrayA16}. Here $\wp_\xi$ may be interpreted as the purity of $\Pi_\xi$, and $\wp$ as the  purity of the set $\{\Pi_\xi\}$.  A set  of $d^2$ positive operators  $\{\Pi_\xi\}$ is  \emph{a generalized SIC} if $\sum_{\xi}\Pi_\xi$ is proportional to the identity and 
$\tr(\Pi_\xi\Pi_\eta)=\alpha \delta_{\xi\eta}+\beta$
for some positive constants $\alpha,\beta$ \cite{Appl07,GourK14,Zhu14T}. Any generalized SIC is a generalized 2-design; see the supplement for a partial converse.

No  measurement on $\caH$ is Fisher symmetric for a mixed state $\rho$
except when $\rho$ is the completely mixed state or a qubit state \cite{LiFGK16}. In preparation for later applications, \pref{pro:WFSmix} and \crref{cor:FSmixMinOne} below  clarify the structure of (weakly) Fisher-symmetric measurements at the completely mixed state. The proofs are relegated to  the supplement, which also explains the connection with tight IC measurements introduced by Scott~\cite{Scot06}. 
\begin{proposition}\label{pro:WFSmix}
	A  POVM $\{\Pi_\xi\}$ on $\caH$ is (weakly) Fisher symmetric at the completely mixed state iff $\{\Pi_\xi\}$ is a (generalized) 2-design.
\end{proposition}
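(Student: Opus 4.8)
The plan is to reduce weak Fisher symmetry at the completely mixed state $\rho=\mathbf{1}/d$ to a single statement about the operator $M:=\sum_\xi \Pi_\xi\otimes\Pi_\xi/\tr(\Pi_\xi)$ appearing on the left-hand side of \eref{eq:g2design}, and then read off the (generalized) 2-design condition. First I would fix a convenient parametrization $\rho(\theta)=\mathbf{1}/d+\sum_{a=1}^{d^2-1}\theta_a G_a$, where $\{G_a\}$ is an orthonormal basis of traceless Hermitian operators, $\tr(G_aG_b)=\delta_{ab}$, so that $\rho_{,a}=G_a$. At $\rho=\mathbf{1}/d$ the symmetric logarithmic derivative is $L_a=d\,G_a$, whence $J_{ab}=d\,\delta_{ab}$; since the property $I\propto J$ and the value of $\tr(J^{-1}I)$ are parametrization independent, I may work in this frame. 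Using $p_\xi=\tr(\Pi_\xi)/d$ and $\partial p_\xi/\partial\theta_a=\tr(G_a\Pi_\xi)$ one obtains
\begin{equation}
 I_{ab}=d\sum_\xi \frac{\tr(G_a\Pi_\xi)\tr(G_b\Pi_\xi)}{\tr(\Pi_\xi)}=d\,\tr\bigl[(G_a\otimes G_b)M\bigr],
\end{equation}
so $\{\Pi_\xi\}$ is weakly Fisher symmetric iff $\tr[(G_a\otimes G_b)M]\propto\delta_{ab}$.

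Next I would expand $M$ in the orthonormal operator basis $\{\sigma_\mu\otimes\sigma_\nu\}$ of $\caH^{\otimes2}$ with $\sigma_0=\mathbf{1}/\sqrt d$ and $\sigma_a=G_a$ $(a\geq1)$, writing $M_{\mu\nu}=\tr[(\sigma_\mu\otimes\sigma_\nu)M]$; this matrix is symmetric, which is just the statement that $M$ commutes with the swap $S=\sum_\mu\sigma_\mu\otimes\sigma_\mu$. The key point is that POVM completeness fixes the mixed components: since $\sum_\xi\Pi_\xi=\mathbf{1}$ and $\tr(G_a)=0$,
\begin{equation}
 M_{0a}=\frac{1}{\sqrt d}\sum_\xi\tr(G_a\Pi_\xi)=\frac{1}{\sqrt d}\,\tr\Bigl(G_a\sum_\xi\Pi_\xi\Bigr)=0,\quad a\geq1.
\end{equation}
Weak Fisher symmetry is precisely $M_{ab}=c\,\delta_{ab}$ for $a,b\geq1$. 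Together these collapse $M$ to $M=M_{00}\,\sigma_0\otimes\sigma_0+c\sum_{a\geq1}\sigma_a\otimes\sigma_a$, which lies in the span of $\mathbf{1}\otimes\mathbf{1}$ and $S$, i.e.\ of $P_+$ and $P_-$. Hence $M=\alpha P_++\beta P_-$, and matching $\tr M=\sum_\xi w_\xi$ with $\tr(MS)=\wp\sum_\xi w_\xi$ forces $\alpha,\beta$ to equal the coefficients in \eref{eq:g2design}. The converse is immediate, since $\tr[(G_a\otimes G_b)P_\pm]=\pm\tfrac12\delta_{ab}$ makes $M=\alpha P_++\beta P_-$ yield $I_{ab}\propto\delta_{ab}$.

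For the non-weak statement I would invoke the GM saturation criterion recalled in the excerpt: at $N=1$ the bound $\tr(J^{-1}I)\le d-1$ is attained iff the POVM is rank one (here $\tr(\rho\Pi_\xi)=w_\xi/d>0$ automatically). A rank-one generalized 2-design has $\wp_\xi=1$ for every $\xi$, hence $\wp=1$, so the $P_-$ term drops and \eref{eq:g2design} reduces to $\sum_\xi w_\xi(\outer{\psi_\xi}{\psi_\xi})^{\otimes2}\propto P_+$, which is exactly a 2-design; conversely a 2-design is a rank-one generalized 2-design. Thus Fisher symmetry is equivalent to being a 2-design, completing both halves of the claim.

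I expect the main obstacle to be the backward implication (weak Fisher symmetry $\Rightarrow$ generalized 2-design), and specifically the realization that it actually holds. A priori the Fisher information only constrains the traceless--traceless block $M_{ab}$ $(a,b\geq1)$ of $M$, so one might fear that the mixed components $M_{0a}$ survive and prevent $M$ from lying in the $P_\pm$ sector. The point to get right is that these components vanish for free, forced by POVM normalization rather than by the symmetry hypothesis, so that the two mild-looking ingredients combine to pin down all of $M$.
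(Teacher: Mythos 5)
Your proof is correct and follows essentially the same route as the paper's: the same affine parametrization at $\mathbf{1}/d$ giving $J=d\,\mathbf{1}$ and $I_{ab}=d\,\tr[(G_a\otimes G_b)M]$ with $M=\sum_\xi\Pi_\xi\otimes\Pi_\xi/\tr(\Pi_\xi)$, followed by the observation that POVM normalization kills the $\sigma_0$--$\sigma_a$ cross terms so that weak Fisher symmetry pins $M$ into $\mathrm{span}\{P_+,P_-\}$. The only real difference is that you prove inline the equivalence the paper delegates to its \pref{pro:g2designTIC} (Lemma~1 of \rcite{ApplFZ15G}), and you handle the non-weak case via the rank-one saturation condition of the Gill--Massar inequality rather than by matching $I=J/(d+1)$ to purity $\wp=1$; both variants are sound and essentially equivalent.
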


\begin{corollary}\label{cor:FSmixMinOne}
	Any   POVM $\{\Pi_\xi\}$ on $\caH$ that is Fisher symmetric at the completely mixed state has at least $d^2$ elements; the lower bound is saturated iff $\{\Pi_\xi\}$ is a SIC.  
\end{corollary}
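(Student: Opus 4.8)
The plan is to deduce \crref{cor:FSmixMinOne} directly from \pref{pro:WFSmix} together with the classical extremal theory of complex projective 2-designs. By the strong half of \pref{pro:WFSmix}, a POVM on $\caH$ is Fisher symmetric at the completely mixed state precisely when it is an honest, rank-one 2-design $\{|\psi_\xi\rangle,w_\xi\}$ with $\Pi_\xi=w_\xi\outer{\psi_\xi}{\psi_\xi}$. The point that distinguishes Fisher symmetry from mere weak Fisher symmetry is that maximality of $\tr[J^{-1}I]$ demands saturation of the Gill--Massar inequality \eqref{eq:GMineq} at $N=1$; since at $\rho=I/d$ one has $\tr(\rho\Pi_\xi)=w_\xi/d>0$ for every outcome, saturation is equivalent to all $\Pi_\xi$ being rank one. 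Thus the generalized 2-design supplied by weak Fisher symmetry is promoted to a genuine one, and the task reduces to counting the elements of a 2-design and identifying the minimal configurations.

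For the lower bound I would show that the 2-design condition forces the projectors $\{\outer{\psi_\xi}{\psi_\xi}\}$ to span the $d^2$-dimensional real space of Hermitian operators on $\caH$, via invertibility of the frame superoperator. Using $\tr_2(X\otimes Y)=X\tr Y$ together with $P_+=\tfrac{1}{2}(I+F)$, where $F$ is the swap, one finds
\be
\caS(A):=\sum_\xi w_\xi\langle\psi_\xi|A|\psi_\xi\rangle\,\outer{\psi_\xi}{\psi_\xi}=\tr_2\!\Bigl[(I\otimes A)\sum_\xi w_\xi\bigl(\outer{\psi_\xi}{\psi_\xi}\bigr)^{\otimes2}\Bigr]\propto A+(\tr A)\,I.
\ee
The operator on the right is strictly positive on Hermitian matrices, with eigenvalue $d+1$ on multiples of $I$ and eigenvalue $1$ on the traceless part, hence invertible. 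Since the range of $\caS$ lies in $\mathrm{span}\{\outer{\psi_\xi}{\psi_\xi}\}$, invertibility is possible only if these projectors span, so there must be at least $d^2$ of them.

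For the equality case, exactly $d^2$ elements makes $\{\outer{\psi_\xi}{\psi_\xi}\}$ a linearly independent spanning set, i.e.\ a minimal tight informationally complete measurement. The standard analysis of equality in the frame (Welch) bound for 2-designs then forces all weights to coincide and all pairwise overlaps to meet the SIC condition $|\langle\psi_\xi|\psi_\eta\rangle|^2=(d\delta_{\xi\eta}+1)/(d+1)$; after normalization $\Pi_\xi=\tfrac{1}{d}\outer{\psi_\xi}{\psi_\xi}$ is a SIC. The converse is immediate, as every SIC is a $d^2$-element 2-design. This extremality analysis is classical \cite{Zaun11,ReneBSC04,Scot06}, so I would cite it rather than reproduce the computation.

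The corollary therefore carries little independent difficulty; the genuine content resides in \pref{pro:WFSmix} (proved in the supplement) and in the cited design bound. The one step requiring care is the rank-one reduction, and I expect it to be the only real obstacle internal to the corollary: without it the count collapses, since higher-rank generalized 2-designs evade the $d^2$ bound entirely---the trivial one-element family $\{I\}$ already satisfies the generalized 2-design identity---so one must verify that the hypothesis is genuine Fisher symmetry and use Gill--Massar saturation to pin the measurement to rank one before the classical bound can be applied.
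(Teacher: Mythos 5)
Your proposal is correct and takes essentially the same route as the paper: \crref{cor:FSmixMinOne} is obtained from \pref{pro:WFSmix} (Fisher symmetry at the completely mixed state forces a genuine, rank-one 2-design) together with the classical extremal fact that any 2-design has at least $d^2$ elements with equality iff it is a SIC (the paper's \crref{cor:SIC2design}). Your rank-one reduction via Gill--Massar saturation is equivalent to the purity-$1$ argument the paper runs inside the proof of \pref{pro:WFSmix}, and your frame-superoperator spanning argument is the same invertibility argument the paper uses for tight IC measurements in \pref{pro:g2designMin}.
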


It is much more difficult to study  Fisher-symmetric measurements   for mixed states when  $t\geq2$, because the GM inequality does not apply to collective measurements, and an extension of the GM inequality  has been a long-standing open problem.
Nevertheless, we have a simple solution in the case $t=2$. 
For simplicity,  $\rho(\theta)$ has full rank in the rest of the paper. 
\begin{theorem}\label{thm:GMT2}
The Fisher information matrix $I^{(2)}(\theta)$ at $\rho(\theta)$ of any POVM $\{\Pi_\xi\} $  on $\caH^{\otimes 2}$ satisfies
	\begin{equation}\label{eq:GMTmax}
	\tr [J^{-1}(\theta) I^{(2)}(\theta)] \leq 3d-3. 
	\end{equation}
	The inequality is saturated iff each $\Pi_\xi $ is proportional to either the tensor power of a pure state or a Slater-determinant state.  
\end{theorem}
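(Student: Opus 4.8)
The plan is to extend the Gill--Massar argument behind \eqref{eq:GMineq} to two copies, the essential new feature being that all the relevant operators respect the splitting $\caH^{\otimes2}=\caH_+\oplus\caH_-$. No special parametrization is needed: since $\rho$ has full rank and $g=d^2-1$, the operators $\{L_a\}_{a=1}^{d^2-1}$ together with $I$ form an orthonormal basis of the Hermitian operators under $\langle A,B\rangle_\rho:=\tfrac12\tr[\rho(AB+BA)]$. For any POVM I write $\tr[\partial_a(\rho^{\otimes2})\Pi_\xi]=\tr(\rho_{,a}R_\xi)$, where $R_\xi$ is the single-copy Hermitian operator $\tr_2[(I\otimes\rho)\Pi_\xi]+\tr_1[(\rho\otimes I)\Pi_\xi]$; for a rank-one element $\Pi_\xi=|\Phi_\xi\rangle\langle\Phi_\xi|$ with coefficient matrix $\Phi_\xi$ this is $R_\xi=\Phi_\xi\rho^T\Phi_\xi^\dagger+(\Phi_\xi^\dagger\rho\Phi_\xi)^T$. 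One checks directly that $\tr(\rho R_\xi)=2p_\xi$ and $\sum_\xi p_\xi=1$. Expanding $I^{(2)}$ and projecting $R_\xi$ onto $\{I\}^\perp=\operatorname{span}\{L_a\}$ then yields the parametrization-independent identity
\[
\tr[J^{-1}I^{(2)}]=\sum_{\xi}\frac{\tr(\rho R_\xi^2)-\tr(\rho R_\xi)^2}{p_\xi}=\sum_{\xi}\frac{\tr(\rho R_\xi^2)}{p_\xi}-4 .
\]

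Next I introduce the fixed operator $Y=(\rho\otimes I+I\otimes\rho)(2P_++P_-)$. Writing $P_\pm=\tfrac12(I\pm F)$ with $F$ the flip and using $\tr[(\rho\otimes I)F]=\tr\rho=1$, one gets $\tr[(\rho\otimes I+I\otimes\rho)P_+]=d+1$ and $=d-1$ on $P_-$, so $\tr Y=2(d+1)+(d-1)=3d+1$. The theorem thus follows from the per-outcome estimate $\tr(\rho R_\xi^2)/p_\xi\le\tr(Y\Pi_\xi)$: summing and using $\sum_\xi\Pi_\xi=I$ gives $\tr[J^{-1}I^{(2)}]\le\tr Y-4=3d-3$. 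To reduce to rank-one elements, spectrally decompose $\Pi_\xi=\sum_k|\Phi_{\xi k}\rangle\langle\Phi_{\xi k}|$. Since $R_\xi=\sum_k R(\Phi_{\xi k})$ and $p_\xi=\sum_k p_{\xi k}$, the elementary inequality $\|\sum_kR_k\|_\rho^2\le(\sum_k p_k)\sum_k\|R_k\|_\rho^2/p_k$ gives $\tr(\rho R_\xi^2)/p_\xi\le\sum_k\tr(\rho R(\Phi_{\xi k})^2)/p_{\xi k}$, so it suffices to bound a single rank-one term.

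The heart of the proof is therefore the rank-one inequality $\tr(\rho R(\Phi)^2)\le\langle\Phi|\rho^{\otimes2}|\Phi\rangle\,\langle\Phi|Y|\Phi\rangle$. Both sides are quartic in $\Phi$, so this is equivalent to positivity of a fixed operator on the symmetric subspace spanned by $\{|\Phi\rangle\otimes|\Phi\rangle\}$. I would prove it by splitting $|\Phi\rangle=|\Phi^+\rangle+|\Phi^-\rangle$ into its $\caH_\pm$ parts: because $\rho^{\otimes2}$, $\partial_a(\rho^{\otimes2})$, and $Y$ all commute with $F$, the norm $\langle\Phi|\rho^{\otimes2}|\Phi\rangle=p_++p_-$ and the weight $\langle\Phi|Y|\Phi\rangle=2h_++h_-$ (with $h_\pm=\langle\Phi^\pm|(\rho\otimes I+I\otimes\rho)|\Phi^\pm\rangle$) contain no cross term, whereas $R(\Phi)$, being bilinear in $\Phi$, acquires symmetric, antisymmetric, and genuinely mixed pieces. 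A direct computation shows the two pure sectors saturate the bound: $R(|\phi\rangle^{\otimes2})=2\langle\phi|\rho|\phi\rangle\,|\phi\rangle\langle\phi|$, and for a Slater determinant supported on $W=\operatorname{span}\{|\phi\rangle,|\chi\rangle\}$ one finds $R=p_-\rho_W^{-1}$ with $\rho_W$ the compression of $\rho$ to $W$; in either case $\tr(\rho R^2)$ equals $\langle\Phi|\rho^{\otimes2}|\Phi\rangle\,\langle\Phi|Y|\Phi\rangle$ exactly. The remaining step, which I expect to be the main obstacle, is to show that the symmetric--antisymmetric cross contributions to $\tr(\rho R(\Phi)^2)$ can only decrease the left-hand side, so that the maximum over mixed-symmetry directions is attained on the boundary where one sector vanishes.

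Finally I would track the equalities. The only inequalities used are the Cauchy--Schwarz step and the rank-one estimate. Equality in the former forces the $R(\Phi_{\xi k})$ to be mutually proportional, and equality in the latter forces each $\Phi_{\xi k}$ to be purely symmetric or purely antisymmetric of the saturating form above. Since a rank-one $R=2\langle\phi|\rho|\phi\rangle\,|\phi\rangle\langle\phi|$ and a rank-two $R=p_-\rho_W^{-1}$ can never be proportional across distinct directions, the two conditions collapse each $\Pi_\xi$ to a single term. Hence the bound $3d-3$ is attained iff every $\Pi_\xi$ is proportional to a tensor power $(|\phi_\xi\rangle\langle\phi_\xi|)^{\otimes2}$ or to a Slater-determinant projector, as claimed.
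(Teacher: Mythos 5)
Your skeleton is essentially the paper's: your $R_\xi$ is the paper's symmetrized marginal $\tilde{Q}_\xi=\tr_1(\tilde{\Pi}_\xi)+\tr_2(\tilde{\Pi}_\xi)$ conjugated by $\rho^{-1/2}$, your identity $\tr[J^{-1}I^{(2)}]=\sum_\xi \tr(\rho R_\xi^2)/p_\xi-4$ coincides with the paper's \eqref{eq:GMTtwo} (and your basis-free derivation via the projection onto $\operatorname{span}\{L_a\}$ is a clean substitute for \lref{lem:QFIsumMix}), and the per-outcome estimate $\tr(\rho R_\xi^2)/p_\xi\le\tr(Y\Pi_\xi)$ with $\tr Y=3d+1$ is exactly what the paper proves element by element before summing. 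The problem is that the heart of the argument --- the rank-one inequality $\tr\bigl(\rho R(\Phi)^2\bigr)\le\langle\Phi|\rho^{\otimes2}|\Phi\rangle\,\langle\Phi|Y|\Phi\rangle$ --- is not proved: you verify it only on the saturating set (tensor powers and Slater determinants) and explicitly defer the general case. Note first that the obstacle you single out is illusory: the cross term $|\Phi^+\rangle\langle\Phi^-|+|\Phi^-\rangle\langle\Phi^+|$ contributes nothing to $R(\Phi)$, since conjugation by the swap sends it to its negative while exchanging $\rho\otimes 1$ with $1\otimes\rho$, so the symmetrized marginal annihilates it. Hence $R(\Phi)=R(\Phi^+)+R(\Phi^-)$ with both summands positive, and your own Cauchy--Schwarz step reduces everything to the two pure sectors.

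The genuine gap is that the inequality on each pure sector is not established for vectors that are not generalized coherent states: a generic $|\Phi^+\rangle\in\caH_+$ (e.g., $|00\rangle+|11\rangle$) is not a tensor power, and a generic $|\Phi^-\rangle\in\caH_-$ for $d\ge4$ is not a Slater determinant, and for these your computation says nothing --- checking that the claimed maximizers attain the bound does not show the bound holds elsewhere. The paper closes precisely this step with the marginal-purity inequalities $(\tilde{Q}^+)^2\le\tr(\tilde{Q}^+)\,\tilde{Q}^+$ and $(\tilde{Q}^-)^2\le\frac12\tr(\tilde{Q}^-)\,\tilde{Q}^-$ for positive operators supported on $\caH_+$ and $\caH_-$ respectively; the factor $\frac12$ in the antisymmetric case comes from the fact that the reduced state of any antisymmetric vector has doubly degenerate spectrum, so no eigenvalue exceeds half the trace. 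These operator bounds, conjugated back by $\rho^{\pm1/2}$, give exactly $\tr(\rho R(\Phi^\pm)^2)/p_\pm\le 2h_+$ and $\le h_-$, with equality characterizing tensor powers and Slater determinants. Until you supply these (or equivalent) bounds the proof is incomplete; once you do, your equality analysis goes through essentially as written.
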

A variant of \thref{thm:GMT2} was proved in the thesis of the first author \cite{Zhu12the};
see the supplement for a simplified proof. 	
Here a Slater-determinant state has the form $U^{\otimes 2}|\Psi_-\>\<\Psi_-|(U^{\otimes 2})^\dag$, where $|\Psi_-\>=(|01\>-|10\>)/\sqrt{2}$, and $U$ is a unitary. Both Slater-determinant states and tensor powers of pure states are generalized coherent states~\cite{ZhanFG90}, which are least  entangled and most classical for the given symmetry.  Measurements (POVMs) composed of these states  are referred to as \emph{coherent measurements} (POVMs) henceforth. The inequality in \eref{eq:GMTmax} is saturated iff the POVM is coherent. Some POVMs known in the literature \cite{VidaLPT99,GillM00}
are  coherent by our definition, although they were introduced for different purposes.

\Thref{thm:GMT2} implies  $\tr [J^{-1}(\theta) I^{(N)}(\theta)] \leq 3N(d-1)/2$ if each time we can measure at most two copies of $\rho(\theta)$ together, that is, if the POVM on $\caH^{\otimes N}$ is a tensor product of POVMs on $\caH$ or $\caH^{\otimes 2}$.
Compared with the GM inequality in  \eref{eq:GMineq}, here the upper bound is $50\%$ larger, which reflects the advantage of collective measurements over separable measurements. The following corollary is a tomographic implication of \thref{thm:GMT2} and the analog of the GM bound in Eq.~(30) of \rcite{GillM00}  (note that a factor of $1/(d-1)$ is missing there); see the supplement on the GM bound.  
\begin{corollary}\label{cor:WMSEent2}
In quantum state tomography with any collective measurement on $\caH^{\otimes 2}$, the scaled 
weighted mean square error (WMSE) $N\tr(WC^{(N)})$  of any unbiased estimator is bounded from below by
\begin{equation}\label{eq:WMSE2copy}
\mse_{W}=\frac{2\bigl(\tr\sqrt{J^{-1/2}WJ^{-1/2}}\,\bigr)^2}{3(d-1)}.
\end{equation}
The  bound can be saturated iff there exists a measurement on $\caH^{\otimes 2}$ that yields the Fisher information matrix
\begin{equation}\label{eq:FisherOpt2copy}
I^{(2)}_W=3(d-1)J^{1/2}\frac{\sqrt{J^{-1/2}WJ^{-1/2}}}{\tr\sqrt{J^{-1/2}WJ^{-1/2}}}J^{1/2}. 
\end{equation}
\end{corollary}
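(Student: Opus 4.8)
The plan is to reduce the tomographic bound to a matrix optimization governed by \thref{thm:GMT2} and then solve it with the Cauchy-Schwarz inequality. First I would invoke the classical Cramér-Rao bound $C^{(N)}\geq (I^{(N)})^{-1}$, so that $N\tr(WC^{(N)})\geq N\tr[W(I^{(N)})^{-1}]$ and the task becomes minimizing the right-hand side over all Fisher information matrices $I^{(N)}$ attainable by collective measurements on $\caH^{\otimes 2}$ and their tensor products. The relevant constraint is the scaled form of \thref{thm:GMT2}, namely $\tr[J^{-1}I^{(N)}]\leq 3N(d-1)/2$.

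Next I would make the problem isotropic by setting $M=J^{-1/2}(I^{(N)}/N)J^{-1/2}$ and $\tilde W=J^{-1/2}WJ^{-1/2}$, both positive definite. The constraint then reads $\tr(M)\leq 3(d-1)/2$, while the objective becomes $N\tr[W(I^{(N)})^{-1}]=\tr(\tilde W M^{-1})$. Applying Cauchy-Schwarz in the Hilbert-Schmidt inner product to the operators $M^{-1/2}\tilde W^{1/2}$ and $M^{1/2}$ gives $(\tr\tilde W^{1/2})^2\leq \tr(\tilde W M^{-1})\tr(M)$, whence
\[
\tr(\tilde W M^{-1})\geq\frac{(\tr\tilde W^{1/2})^2}{\tr(M)}\geq\frac{2(\tr\tilde W^{1/2})^2}{3(d-1)}=\mse_W,
\]
with $\tilde W^{1/2}=\sqrt{J^{-1/2}WJ^{-1/2}}$. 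This establishes the lower bound.

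Then I would extract the optimal Fisher information from the equality conditions. Equality in Cauchy-Schwarz forces $M\propto\tilde W^{1/2}$; fixing the normalization by saturating $\tr(M)=3(d-1)/2$ and undoing the substitution via $I^{(2)}=2J^{1/2}MJ^{1/2}$ for the single two-copy measurement reproduces \eref{eq:FisherOpt2copy}. A one-line check that $\tr[J^{-1}I^{(2)}_W]=3(d-1)$ confirms that this optimizer indeed saturates the constraint of \thref{thm:GMT2}.

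Finally, for the "iff" in the saturation statement I would note that $\mse_W$ is attained precisely when both displayed inequalities are tight: the Cramér-Rao bound is saturated asymptotically by an efficient estimator, so the only remaining requirement is that the measurement actually realize the Fisher information $I^{(2)}_W$, which yields both directions of the equivalence. The hard part is exactly this existence question: \thref{thm:GMT2} shows that any POVM saturating the trace constraint must be coherent, but producing a coherent measurement whose Fisher information equals the $W$-dependent target $I^{(2)}_W$ is nontrivial, which is why the corollary states saturability only conditionally on such a measurement existing.
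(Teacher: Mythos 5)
Your proposal is correct and follows essentially the same route as the paper: the supplement derives \crref{cor:WMSEent2} by repeating the Gill--Massar-bound argument (Cram\'er--Rao plus a Cauchy--Schwarz optimization of the WMSE under a trace constraint on the Fisher information), with the constraint $\tr(J^{-1}I)\leq d-1$ per copy replaced by $\tr(J^{-1}I^{(2)})\leq 3(d-1)$ per pair from \thref{thm:GMT2}, which rescales the bound by $2/3$. Your explicit substitution $M=J^{-1/2}(I^{(N)}/N)J^{-1/2}$ and the equality analysis leading to \eref{eq:FisherOpt2copy} simply spell out the details the paper leaves as ``similar reasoning.''
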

Here $W$ is a positive semidefinite matrix that may depend on the parameter point. 
For  MSB,  $W=J/4$~\cite{BrauC94}, so the bound in \eref{eq:WMSE2copy} is saturated iff the measurement yields $I^{(2)}=3J/(d+1)$ and   is thus Fisher symmetric.

Any coherent POVM  $\{\Pi_\xi\}$ is  the union of two POVMs $\{\Pi_\zeta^+\}$ and $\{\Pi_\eta^-\}$ on the symmetric and antisymmetric subspaces, respectively. The POVM $\{\Pi_\zeta^+\}$ is tied to a 2-design and is thus Fisher symmetric for all pure states. Its contribution to the Fisher information matrix  is independent of the specific POVM $\{\Pi_\zeta^+\}$ by \eref{eq:FisherSym}. For a qubit,   all $\Pi_\eta^-$ are proportional to the singlet  $|\Psi_-\>\<\Psi_-|$, so all coherent   POVMs yield the same Fisher information matrix.  Since any 2-design has at least $d^2$ elements, and minimal 2-designs are in one-to-one correspondence with SICs \cite{ReneBSC04, Scot06,ApplFZ15G,Zhu14T}, we deduce the following corollary.
\begin{corollary}\label{cor:CohPOVMmin}
Any coherent POVM  on $\caH^{\otimes 2}$  has at least $\frac{1}{2}(3d^2-d)$  elements. The  bound is saturated iff $\frac{1}{2}d(d-1)$ elements are Slater-determinant states and form a projective measurement on $\caH_-$, and the other $d^2$ elements have the form $\frac{d+1}{2d}(|\psi_\xi\>\<\psi_\xi|)^{\otimes 2}$, where $\{|\psi_\xi\>\}$ is a SIC. 
\end{corollary}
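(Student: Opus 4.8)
My plan is to split an arbitrary coherent POVM into its symmetric and antisymmetric parts, bound the number of elements of each part separately, and then add the two bounds. First I would use the fact, recalled just before the corollary, that every coherent element is proportional either to a tensor power $(|\psi\>\<\psi|)^{\otimes 2}$, which is supported on $\caH_+$, or to a Slater-determinant state, which is supported on $\caH_-$. Since $\caH_+$ and $\caH_-$ are orthogonal, the elements partition into two disjoint families $\{\Pi_\zeta^+\}$ and $\{\Pi_\eta^-\}$; projecting the completeness relation $\sum_\xi \Pi_\xi = P_+ + P_-$ onto each subspace would then give $\sum_\zeta \Pi_\zeta^+ = P_+$ and $\sum_\eta \Pi_\eta^- = P_-$, so each family is itself a POVM on its subspace.

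For the symmetric family I would write $\Pi_\zeta^+ = w_\zeta (|\psi_\zeta\>\<\psi_\zeta|)^{\otimes 2}$ and recognize $\sum_\zeta \Pi_\zeta^+ = P_+$ as precisely the statement that $\{|\psi_\zeta\>, w_\zeta\}$ is a weighted 2-design. The 2-design bound quoted in the Preliminaries then forces $\{\Pi_\zeta^+\}$ to have at least $d^2$ elements, with equality iff the weights are equal and $\{|\psi_\zeta\>\}$ is a SIC. Imposing the normalization $\sum_\zeta w_\zeta = d(d+1)/2$ (needed for the sum to equal $P_+$) in the equal-weight case would fix $w_\zeta = (d+1)/(2d)$, reproducing the stated minimal symmetric elements $\tfrac{d+1}{2d}(|\psi_\zeta\>\<\psi_\zeta|)^{\otimes 2}$.

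For the antisymmetric family, each $\Pi_\eta^-$ is a positive multiple of a rank-one projector onto a state of $\caH_-$, so $\{\Pi_\eta^-\}$ is a rank-one POVM on a space of dimension $\dim\caH_- = d(d-1)/2$. Since rank-one operators summing to $P_-$ must span $\caH_-$, their number is at least $d(d-1)/2$; in the minimal case the supporting vectors form a basis, and a tight frame with unit frame bound whose cardinality equals the dimension is necessarily orthonormal, so $\{\Pi_\eta^-\}$ must be a projective measurement on $\caH_-$. One checks that such a basis of Slater-determinant states indeed exists, e.g. $\{\tfrac{1}{\sqrt 2}(|jk\>-|kj\>)\}_{j<k}$. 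Adding the two bounds gives $d^2 + d(d-1)/2 = \tfrac{1}{2}(3d^2-d)$, and equality forces both parts to be simultaneously minimal, which is exactly the claimed saturation condition.

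I expect the one step requiring genuine (though standard) argument, beyond citing the Preliminaries, to be the equality analysis of the antisymmetric part: namely that a rank-one POVM on $\caH_-$ achieving the minimal element count must be projective. The symmetric bound and its SIC characterization are already supplied, and the decomposition into $P_+$ and $P_-$ sectors is essentially automatic from orthogonality; so the tight-frame fact (off-diagonal Gram entries vanish term by term once the count is minimal) is the only place where a small self-contained computation is needed.
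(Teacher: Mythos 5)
Your proposal is correct and follows essentially the same route as the paper: decompose the coherent POVM into its $\caH_+$ and $\caH_-$ parts, identify the symmetric part with a weighted 2-design whose minimal cardinality $d^2$ is attained exactly by SICs, and count at least $\dim\caH_-=\frac{1}{2}d(d-1)$ rank-one elements on the antisymmetric side, with minimality forcing a projective measurement. The paper states this only in a couple of sentences preceding the corollary; your tight-frame argument for why a rank-one POVM with $\dim\caH_-$ elements must be projective, and your explicit Slater-determinant basis $\bigl\{\frac{1}{\sqrt{2}}(|jk\>-|kj\>)\bigr\}_{j<k}$ witnessing saturation, are the natural fillings-in of the details the paper leaves implicit.
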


\begin{figure}
	\includegraphics[width=8.3cm]{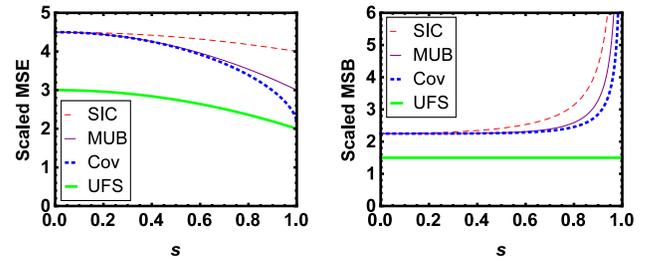}
	\caption{\label{fig:UFS}(color online)
		Scaled MSE (left plot) and scaled MSB (right plot) achieved by universally Fisher-symmetric measurements (UFS), including the collective SIC,  in qubit state tomography. Here $s$ is the length of the Bloch vector. The performances of  SIC, MUB, and covariant measurements (averaged over states with the same purity) are shown for comparison (reproduced from \rcite{Zhu14IOC}). 	}
\end{figure}

In the case of a qubit, a minimal coherent POVM  has five elements, which take on the  form \cite{VidaLPT99,Zhu12the}
\begin{equation}\label{eq:CollectiveSIC}
\Pi_\xi =\frac{3}{4}(|\psi_\xi\>\<\psi_\xi|)^{\otimes 2}, \quad \Pi_5=|\Psi_-\>\<\Psi_-|, 
\end{equation}
where $|\psi_\xi\>$ for $\xi=1,2,3,4$ form a SIC, and $|\Psi_-\>$ is the singlet. This POVM is referred to as the \emph{collective SIC} henceforth. Remarkably, it is universally Fisher symmetric, that is, Fisher symmetric   for all  states. To see this, parametrize the qubit state $\rho$  by the Bloch vector $\vec{s}=(s_1,s_2,s_3)$ as $\rho=\frac{1}{2}(1+\vec{s}\cdot \vec{\sigma})$. Then 
\begin{equation}
I^{(2)}_{ab}=J_{ab}=\delta_{ab}+\frac{s_as_b}{1-s^2},\quad a,b=1,2,3.
\end{equation} In conjunction with \thref{thm:GMT2}, we deduce the following.
\begin{theorem}
When $d=2$, a POVM $\{\Pi_\xi\}$ on $\caH^{\otimes2}$    is universally Fisher symmetric iff it is coherent. 
\end{theorem}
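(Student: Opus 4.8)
The plan is to prove the two implications separately, in each case leaning on \thref{thm:GMT2} together with the Fisher-information computation already carried out for the collective SIC in \eref{eq:CollectiveSIC}. The ``only if'' direction is the quick one. Suppose $\{\Pi_\xi\}$ is universally Fisher symmetric. Then at every full-rank state it is in particular Fisher symmetric, so by definition $\tr[J^{-1}(\theta)I^{(2)}(\theta)]$ attains its maximum over all POVMs on $\caH^{\otimes2}$. By \thref{thm:GMT2} this maximum equals $3d-3=3$, and \eref{eq:GMTmax} is saturated iff each $\Pi_\xi$ is proportional to the tensor power of a pure state or to a Slater-determinant state, that is, iff $\{\Pi_\xi\}$ is coherent. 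Since the saturation condition constrains only the $\Pi_\xi$ and not $\theta$, saturation at a single full-rank point already forces it; universal Fisher symmetry supplies this a fortiori, so $\{\Pi_\xi\}$ must be coherent.

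For the ``if'' direction I would first show that every coherent POVM on $\caH^{\otimes2}$ yields one and the same Fisher information matrix. Splitting a coherent POVM into its symmetric part $\{\Pi_\zeta^+\}$ and antisymmetric part $\{\Pi_\eta^-\}$, the symmetric effects are proportional to $(|\psi_\zeta\rangle\langle\psi_\zeta|)^{\otimes2}$ and sum to $P_+$, so $\{|\psi_\zeta\rangle,w_\zeta\}$ is a $2$-design. The derivation leading to \eref{eq:FisherSym} uses only the tracelessness of $\rho_{,a}$ and the factorization $p_\zeta=w_\zeta(\langle\psi_\zeta|\rho|\psi_\zeta\rangle)^2$, both of which survive for a full-rank mixed state, so it still gives the design-independent contribution $I^+_{ab}=2\tr(\rho_{,a}\rho_{,b})$. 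For $d=2$ the antisymmetric subspace $\caH_-$ is one-dimensional, so every $\Pi_\eta^-$ is proportional to $|\Psi_-\rangle\langle\Psi_-|$ and the $\Pi_\eta^-$ sum to $P_-$; coarse-graining outcomes whose effects are mutually proportional preserves the Fisher information, so the antisymmetric contribution equals that of the single effect $|\Psi_-\rangle\langle\Psi_-|$. Both pieces are therefore independent of the particular coherent POVM, and hence so is their sum $I^{(2)}$.

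It then remains to identify this common matrix with $J$, which I would do by evaluating $I^{(2)}$ on the representative coherent POVM given by the collective SIC \eref{eq:CollectiveSIC}. Parametrizing $\rho$ by its Bloch vector, a direct computation gives $I^{(2)}_{ab}=\delta_{ab}+s_as_b/(1-s^2)=J_{ab}$ at every state. By the invariance just established, every coherent POVM then satisfies $I^{(2)}(\theta)=J(\theta)$ for all $\theta$, whence $\tr[J^{-1}I^{(2)}]=3=3d-3$ is maximal; together with the proportionality this is exactly Fisher symmetry, holding at every state, i.e.\ universal Fisher symmetry.

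I expect the one genuinely computational step to be this final identification $I^{(2)}=J$. For pure states the conclusion is automatic, since the symmetric part alone is Fisher symmetric there; the substance is the mixed-state case, where one must check that the fixed antisymmetric contribution of $|\Psi_-\rangle\langle\Psi_-|$ combines with the $2$-design contribution $2\tr(\rho_{,a}\rho_{,b})$ to reproduce the full quantum Fisher information, including the $s_as_b/(1-s^2)$ term that encodes the length of the Bloch vector. This is precisely what makes the qubit case special: the single singlet outcome supplies exactly the information the symmetric part lacks. Once this is verified for the collective SIC, the invariance argument above propagates it to all coherent POVMs with no further calculation, and \thref{thm:GMT2} closes the converse.
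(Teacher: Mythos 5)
Your proposal is correct and follows essentially the same route as the paper: the converse via saturation of the bound in \thref{thm:GMT2}, and the forward direction by noting that the $2$-design contribution from the symmetric part and the singlet contribution from the one-dimensional antisymmetric subspace make all coherent qubit POVMs share one Fisher information matrix, which is then identified with $J$ by the explicit collective-SIC computation. The only extra care you add (correctly) is spelling out why coarse-graining proportional effects and the derivation of \eref{eq:FisherSym} carry over to full-rank states.
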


The scaled MSE (with respect to the Hilbert-Schmidt distance) and scaled MSB achieved by the collective SIC are respectively given by
\begin{equation}
\mse(\rho)=3-s^2,\quad \msb(\rho)=\frac{3}{2},
\end{equation}
as illustrated in \fref{fig:UFS}. 
The tomographic efficiency is much higher than all POVMs on individual copies. 
In particular, the scaled MSB achieved by any fixed measurement on individual copies diverges in the pure-state limit \cite{Zhu12the, Zhu14IOC,HouZXL16}; accordingly,  the mean infidelity scales as $O(1/\sqrt{N})$  \cite{MahlRDF13}. 
By contrast, the scaled MSB achieved by the collective SIC saturates the bound in \eref{eq:WMSE2copy} with $W=J/4$, so that the mean infidelity scales as  $O(1/N)$. Recently,  the collective SIC was successfully realized in  experiments, which 
achieved the highest tomographic efficiency in qubit state tomography to date~\cite{HouTSZ17}.

In general, to be Fisher symmetric, the Fisher information matrix should equal
 $I^{(2)}=3J/(d+1)$ according to \thref{thm:GMT2}. In the limit to pure states, this requirement is not compatible with \eref{eq:Fisher2designPure} when $d\geq 3$. Therefore, we believe that Fisher-symmetric measurements in general cannot exist for mixed states when $d\geq3$. 
Nevertheless, it is still desirable to construct POVMs that are Fisher symmetric 
for the completely mixed state and for all pure states simultaneously. Such POVMs are
called \emph{tight}; they are optimal in the tomography of pure states and highly mixed states. 
According  to \thref{thm:GMT2}, all tight POVMs on $\caH^{\otimes 2}$ are  coherent POVMs, which are automatically Fisher symmetric 
for all pure states. \Thsref{thm:UFS}
and \ref{thm:UFSmin} below clarify the structure of such POVMs; the proofs are relegated to the supplement.

\begin{theorem}\label{thm:UFS}
A   POVM   $\{\Pi_\xi\}$ on $\caH^{\otimes 2}$ is tight coherent (Fisher symmetric for the completely mixed state and all pure states)
iff $\{Q_\xi\}$ is a generalized 2-design of purity $\frac{3d+1}{4d}$, where   $Q_\xi=\tr_1(\Pi_\xi)+\tr_2(\Pi_\xi)$. 
\end{theorem}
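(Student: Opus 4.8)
The plan is to reduce tightness to Fisher symmetry at the completely mixed state alone, and then to read off the condition on $\{Q_\xi\}$ directly from the Fisher information at that point. First I would dispose of the pure-state requirement, which turns out to be automatic. If $\{\Pi_\xi\}$ is Fisher symmetric at the completely mixed state, then $I^{(2)}=\frac{3}{d+1}J$, so $\tr[J^{-1}I^{(2)}]=3(d-1)$ saturates the bound of \thref{thm:GMT2}; hence $\{\Pi_\xi\}$ is coherent, and every coherent POVM is already Fisher symmetric for all pure states (its symmetric part sums to $P_+$, hence forms a $2$-design, while its antisymmetric part annihilates $\rho^{\otimes2}$ for pure $\rho$). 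The converse inclusion is immediate from the definition of tightness, so tightness is equivalent to Fisher symmetry at the completely mixed state alone.

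Next I would compute that Fisher information explicitly. Differentiating $p_\xi=\tr[\rho^{\otimes2}\Pi_\xi]$ and evaluating at $\rho=I/d$ gives $\partial_a p_\xi=\frac{1}{d}\tr[\rho_{,a}Q_\xi]$ with $Q_\xi=\tr_1\Pi_\xi+\tr_2\Pi_\xi$, which is precisely why $Q_\xi$ is the natural object. With $v_\xi=\tr Q_\xi=2\tr\Pi_\xi$ and $p_\xi=v_\xi/(2d^2)$, this yields $I^{(2)}_{ab}=\sum_\xi\frac{2}{v_\xi}\tr[\rho_{,a}Q_\xi]\tr[\rho_{,b}Q_\xi]=\frac{2}{d}I^{(1,Q)}_{ab}$, where $I^{(1,Q)}$ is the single-copy Fisher information of $\{Q_\xi\}$ at the completely mixed state. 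Since both the generalized-$2$-design identity \eqref{eq:g2design} and the relation $I^{(1,Q)}\propto J$ are invariant under a common rescaling of the $Q_\xi$, I would apply \pref{pro:WFSmix} to the normalized POVM $\{Q_\xi/(2d)\}$ (note $\sum_\xi Q_\xi=2dI$) to conclude that weak Fisher symmetry of $\{\Pi_\xi\}$ at the completely mixed state is equivalent to $\{Q_\xi\}$ being a generalized $2$-design.

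Finally I would pin down the purity. Substituting the generalized-$2$-design identity into the swap-operator relations $\tr[(\rho_{,a}\otimes\rho_{,b})P_\pm]=\pm\frac{1}{2}\tr[\rho_{,a}\rho_{,b}]$ (valid since $\tr\rho_{,a}=0$) and using $V=\sum_\xi\tr Q_\xi=2d^2$ together with $J_{ab}=d\tr[\rho_{,a}\rho_{,b}]$ at the completely mixed state, I obtain $I^{(1,Q)}=\frac{2d(\wp d-1)}{d^2-1}J$ and hence $I^{(2)}=\frac{4(\wp d-1)}{d^2-1}J$. Fisher symmetry demands $I^{(2)}=\frac{3}{d+1}J$, and solving $\frac{4(\wp d-1)}{d^2-1}=\frac{3}{d+1}$ gives $\wp=\frac{3d+1}{4d}$, which closes both directions. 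The main obstacle is the bookkeeping of the second step: establishing cleanly that the derivatives at the completely mixed state collapse onto $Q_\xi$, that the resulting matrix is exactly $\frac{2}{d}$ times a genuine single-copy Fisher information, and that the proportionality constant of a generalized $2$-design is governed precisely by its purity. A secondary subtlety is the rescaling needed to invoke \pref{pro:WFSmix}, since $\{Q_\xi\}$ sums to $2dI$ rather than to the identity.
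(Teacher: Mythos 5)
Your proposal is correct and follows essentially the same route as the paper's proof of the refined statement (Theorem S5): reduce tightness to Fisher symmetry at the completely mixed state via Theorem 3 and the automatic pure-state symmetry of coherent POVMs, identify $I^{(2)}$ at the completely mixed state with a rescaled single-copy Fisher information of $\{Q_\xi\}$ so that Proposition 1 applies, and then extract the purity $\frac{3d+1}{4d}$. The only (immaterial) difference is that you obtain the purity by substituting the design identity to get $I^{(2)}=\frac{4(\wp d-1)}{d^2-1}J$, whereas the paper computes $\tr[J^{-1}I^{(2)}]=4d\wp-4$ directly and invokes the bound $3d-3$.
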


\Thref{thm:UFS} offers a recipe for creating tight coherent POVMs. Let $\{A_\zeta\}$ be a 2-design with  $\sum_\zeta A_\zeta=(d+1)/2$ and $\{B_\eta\}$  a generalized 2-design with $\sum_\eta B_\eta=2(d-1)$ and with $B_\eta$  proportional to rank-2 projectors. Then the union of $\{\Pi_\zeta^+ \}$ and $\{\Pi^-_\eta\}$ is tight coherent, where 
\begin{equation}\label{eq:UFSconstruct}
\Pi_\zeta^+ =\frac{A_\zeta\otimes A_\zeta}{\tr(A_\zeta)}, \quad \Pi^-_\eta=\frac{P_-(B_\eta\otimes B_\eta)P_-}{\tr(B_\eta)}.
\end{equation}

\begin{theorem}\label{thm:UFSmin}
Any tight coherent  POVM $\{\Pi_\xi\}$ on $\caH^{\otimes2}$ has at least $2d^2$ elements when $d\geq 3$. The lower bound is saturated iff $\{\Pi_\xi\}$ is a union of two POVMs $\{\Pi_\zeta^+ \}$ and $\{\Pi_\eta^- \}$ on $\caH_+$ and $\caH_-$, respectively, $\{Q_\zeta^+  \}$ forms a SIC, and $\{Q_\eta^- \}$ forms a generalized SIC of purity $\frac{1}{2}$.
\end{theorem}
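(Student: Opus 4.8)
The plan is to reduce the cardinality bound on $\{\Pi_\xi\}$ to two independent design bounds, one on the symmetric part and one on the antisymmetric part. Since $\{\Pi_\xi\}$ is tight coherent, \thref{thm:GMT2} lets me write it as a union of a POVM $\{\Pi_\zeta^+\}$ on $\caH_+$, with each $\Pi_\zeta^+\propto(|\psi_\zeta\>\<\psi_\zeta|)^{\otimes2}$, and a POVM $\{\Pi_\eta^-\}$ on $\caH_-$, with each $\Pi_\eta^-$ proportional to a Slater-determinant state. Completeness on each subspace forces $\sum_\zeta\Pi_\zeta^+=P_+$ and $\sum_\eta\Pi_\eta^-=P_-$. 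A direct computation gives $Q_\zeta^+=2\tr(\Pi_\zeta^+)\,|\psi_\zeta\>\<\psi_\zeta|$ (rank one) and $Q_\eta^-$ proportional to a rank-two projector, and \thref{thm:UFS} tells me that $\{Q_\xi\}=\{Q_\zeta^+\}\cup\{Q_\eta^-\}$ is a generalized $2$-design of purity $\tfrac{3d+1}{4d}$.

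The key structural step is to show that $\{Q_\zeta^+\}$ and $\{Q_\eta^-\}$ are \emph{separately} generalized $2$-designs. The relation $\sum_\zeta\Pi_\zeta^+=P_+$ is exactly the statement that $\{|\psi_\zeta\>,\tr(\Pi_\zeta^+)\}$ is a $2$-design, whence $\sum_\zeta Q_\zeta^+\otimes Q_\zeta^+/\tr(Q_\zeta^+)=2P_+$; in particular the rank-one elements contribute only to $\caH_+$. This is no accident: since every $2\times2$ minor of a rank-one operator vanishes, $P_-(Q_\zeta^+\otimes Q_\zeta^+)P_-=0$, so \emph{all} of the $\caH_-$ part of the generalized $2$-design \eref{eq:g2design} comes from $\{Q_\eta^-\}$. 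Projecting \eref{eq:g2design} onto $\caH_+$ and subtracting the $2P_+$ contributed by $\{Q_\zeta^+\}$ then isolates the $\caH_+$ part of $\{Q_\eta^-\}$ as well, and because $Q_\eta^-\otimes Q_\eta^-$ is swap symmetric there are no cross terms between $P_+$ and $P_-$. Collecting the two projections shows $\sum_\eta Q_\eta^-\otimes Q_\eta^-/\tr(Q_\eta^-)$ is again of the form in \eref{eq:g2design}, with purity $\tfrac12$, the value fixed by $Q_\eta^-$ being proportional to rank-two projectors.

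With both parts recognized as generalized $2$-designs, the counting follows from informational completeness. For any generalized $2$-design one has $\sum_\xi\tr(Q_\xi X)^2/\tr(Q_\xi)=\tfrac{\mu_++\mu_-}{2}(\tr X)^2+\tfrac{\mu_+-\mu_-}{2}\tr(X^2)$ for Hermitian $X$, where $\mu_\pm$ are the coefficients of $P_\pm$ in \eref{eq:g2design}. Whenever the purity exceeds $1/d$ one has $\mu_+>\mu_-\geq0$, so the right-hand side is strictly positive for $X\neq0$; hence the design is IC and must have at least $d^2$ elements. The symmetric part (purity $1$) and the antisymmetric part (purity $\tfrac12$) both satisfy the purity condition for $d\geq3$, giving $|\{\Pi_\zeta^+\}|\geq d^2$ and $|\{\Pi_\eta^-\}|\geq d^2$, hence $|\{\Pi_\xi\}|\geq2d^2$.

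I expect the main obstacle to lie not in the bound itself but in the equality case: one must show that a minimal generalized $2$-design of fixed rank is a generalized SIC. The plan is to invoke the frame-potential (Cauchy--Schwarz) characterization, i.e.\ the partial converse quoted after \eref{eq:g2design}: when the $d^2$ elements are IC they form an operator basis, and saturation of the design inequality forces the Gram matrix $\tr(Q_\eta^- Q_{\eta'}^-)$ into the form $\alpha\delta_{\eta\eta'}+\beta$. Combined with the analogous statement for the symmetric part (a minimal $2$-design is a SIC), this shows that in the minimal case $\{Q_\zeta^+\}$ is a SIC and $\{Q_\eta^-\}$ is a generalized SIC of purity $\tfrac12$. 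The converse direction is immediate from the recipe \eref{eq:UFSconstruct}: feeding a SIC and a rank-two generalized SIC of purity $\tfrac12$ into that construction yields a tight coherent POVM with exactly $2d^2$ elements.
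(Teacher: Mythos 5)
Your proposal is correct and follows essentially the same route as the paper: split the tight coherent POVM into its $\caH_+$ and $\caH_-$ parts, show that $\{Q_\zeta^+\}$ and $\{Q_\eta^-\}$ are separately generalized $2$-designs of purities $1$ and $\frac{1}{2}$ (this is statement~4 of \thref{thm:UFSsupp} in the supplement), apply the $d^2$ lower bound for generalized $2$-designs of purity exceeding $1/d$ to each part (which is where $d\geq 3$ enters for the antisymmetric half), and identify the minimal instances with a SIC and a generalized SIC. The only loose point is in the equality case: saturation alone forces the Gram matrix into the form $\tr(Q_\eta^- Q_{\eta'}^-)=\alpha'\sqrt{\tr(Q_\eta^-)\tr(Q_{\eta'}^-)}\,\delta_{\eta\eta'}+\beta'\tr(Q_\eta^-)\tr(Q_{\eta'}^-)$ rather than $\alpha\delta_{\eta\eta'}+\beta$, so you must first use the fact that all $Q_\eta^-$ share the same purity $\frac{1}{2}$ (being proportional to rank-$2$ projectors) to deduce equal traces — exactly the content of the paper's \pref{pro:gSICg2design}.
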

\Thref{thm:UFSmin} offers a general recipe for constructing minimal tight coherent  POVMs. When $d=3$, interestingly,    such POVMs are in one-to-one correspondence with pairs of SICs (see the supplement).

\emph{Summary.}---We  introduced a general method for constructing two-copy collective measurements   that are universally  Fisher symmetric for all pure states. These measurements are  optimal in achieving the minimal statistical error without  adaptive measurements. 
We also determined all collective measurements on a pair that are Fisher symmetric for the completely mixed state
and  for all pure states.  For a qubit,  they are  Fisher symmetric for all  states and are substantially more efficient than all local measurements. In the study, we derived a fundamental constraint on the Fisher information matrix of any collective measurement on a pair, which provides a useful tool for characterizing the power of collective measurements. 
Our work is of interest not only to studying  quantum measurements and estimation theory, but also to improving  efficiency and precision in practical quantum state tomography and multiparameter quantum metrology.

\bigskip

\acknowledgments
HZ is grateful to  Carlton M. Caves for discussions and to Blake Stacey  for comments.
HZ acknowledges financial support from the Excellence Initiative of the German Federal and State Governments Zukunftskonzept
(ZUK~81) and the Deutsche Forschungsgemeinschaft (DFG).
MH was supported in part by  Japan Society for the Promotion of 
Science (JSPS) Grant-in-Aid for Scientific Research (A) No. 17H01280, (B) No. 16KT0017, and Kayamori Foundation of Informational Science Advancement.


%

\nocite{apsrev41Control}
\bibliographystyle{apsrev4-1}
\bibliography{all_references}

\clearpage
\newpage
\addtolength{\textwidth}{-0.9in}
\addtolength{\oddsidemargin}{0.45in}
\addtolength{\evensidemargin}{0.45in}	
	

	\setcounter{equation}{0}
	\setcounter{figure}{0}
	\setcounter{table}{0}
	\setcounter{theorem}{0}
	\setcounter{lemma}{0}
	\setcounter{remark}{0}
	\setcounter{proposition}{0}
	\setcounter{corollary}{0}
	\setcounter{section}{0}

	\makeatletter
	\renewcommand{\theequation}{S\arabic{equation}}
	\renewcommand{\thefigure}{S\arabic{figure}}
	\renewcommand{\thetable}{S\arabic{table}}
	\renewcommand{\thetheorem}{S\arabic{theorem}}
	\renewcommand{\thelemma}{S\arabic{lemma}}
	\renewcommand{\theremark}{S\arabic{remark}}
	\renewcommand{\theproposition}{S\arabic{proposition}}
	\renewcommand{\thecorollary}{S\arabic{corollary}}
	
	
 \onecolumngrid
 \begin{center}
 	\textbf{\large Universally  Fisher-Symmetric Informationally Complete Measurements: Supplement}
 \end{center}
	
	In this supplement, we prove \thsref{thm:UFSnogo}, \ref{thm:GMT2}, \ref{thm:UFS}, \ref{thm:UFSmin},  \pref{pro:WFSmix}, and \crref{cor:FSmixMinOne} presented in the main text. We also provide more details on quantum state tomography with collective measurements, the Gill-Massar inequality, Gill-Massar bound \cite{GillM00,Zhu12the,HouZXL16}, generalized 2-designs, generalized   symmetric informationally complete measurements (SICs for short) \cite{Zaun11,ReneBSC04, Appl07,GourK14,Zhu14T}, and tight coherent measurements in dimension 3. 
	
	\section{\label{sec:QSTcoll}Quantum state tomography with collective measurements} 
	Quantum state tomography is a procedure for inferring the state of a quantum system from statistics of quantum measurements \cite{Hels76book, Hole82book, Haya05book, PariR04Book,LvovR09}. To achieve sufficient precision, usually many identically prepared quantum systems need to be measured. The simplest measurement strategy is to repeat a given measurement $N$ times when $N$ identically prepared systems are available for tomography, as illustrated in the left plot of \fref{fig:QST}. In this case, the efficiency of the quantum measurement is mainly determined by the Fisher information matrix since its inverse sets a lower bound for the mean-square-error (MSE) matrix of any unbiased estimator. In addition, the lower bound can be saturated asymptotically by the  maximum-likelihood estimator \cite{Rao02book,PariR04Book}.  The Fisher-information matrices for independent measurements  are additive, which means the MSE achievable by repeated measurements is inversely proportional to the sample size $N$ when the measurement is informationally complete (IC) with regard to the parameters of interest.

	Repetition of a fixed measurement on individual quantum systems is not so efficient when the mean infidelity or mean square Bures distance (MSB) is the figure of merit \cite{MahlRDF13,Zhu12the,Zhu14IOC,HouZXL16}.
	Adaptive measurements on individual quantum systems can improve the tomographic efficiency to some degree. 
	To achieve the optimal performance, however, usually one needs  to perform collective measurements on all $N$ quantum systems together \cite{MassP95,VidaLPT99,HayaHH05} (see the middle plot of \fref{fig:QST}), which is usually not realistic in practice. 
	In the large-$N$ limit, 
	the optimal performance is determined by 
	the quantum Fisher information matrix \cite{HayaM08,KahnG09,Zhu12the}. To determine the optimal performance for a given sample size $N$, most researchers have adopted Bayesian approaches and employed the mean fidelity as the figure of merit \cite{MassP95,VidaLPT99,HayaHH05,Haya98}.

	\begin{figure}[htbp]
		\begin{center}
			\includegraphics[ width=0.99\linewidth]{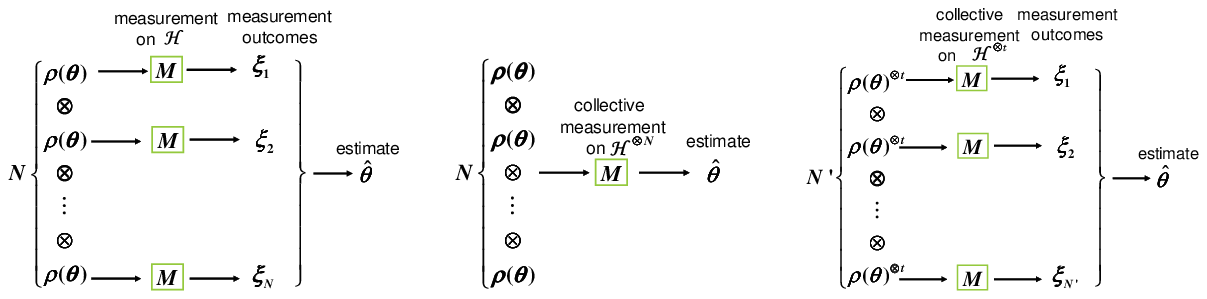}
			\caption{Comparison of three  measurement schemes for quantum state tomography. Left plot: repeated individual measurements; middle plot: single collective measurement on all quantum systems; right plot: repeated collective measurements on a limited number of quantum systems.   }
			\label{fig:QST}
		\end{center}
	\end{figure}

	In this paper we are interested in the scenario in which we can perform limited collective measurements~\cite{Zhu12the}. 
	For example, 
	suppose $N=t N'$; 
	we measure $t$ identically prepared quantum systems together,
	and repeat this procedure $N'$ times, as illustrated in the right plot of \fref{fig:QST}.
	Such scenarios are more accessible to experiments and are sufficient to demonstrate the key distinction between collective measurements and individual measurements. 
	
	Suppose  the  density matrix $\rho(\theta)$ is characterized by a set of parameters denoted collectively by $\theta$, 
	and we are interested in estimating these parameters.
	If we perform a collective measurement  described by the POVM $M=\{\Pi_\xi\}$
	on  ${\cal H}^{\otimes t}$, then the probability of obtaining outcome $\xi$ is given by
	$p_\xi(\theta):=\tr[\rho(\theta)^{\otimes t}\Pi_\xi]$. After the measurement is repeated  $N'=N/t$ times (see the right plot of \fref{fig:QST}),  the total Fisher information matrix is given by $I^{(N)}(\theta)=N'I^{(t)}(\theta)$, with
	\begin{equation}\label{eq:FisherSupp}
	I^{(t)}_{ab}(\theta)=\sum_{\xi,p_\xi>0}\frac{1}{p_\xi}\frac{\partial p_\xi}{\partial \theta_a}\frac{\partial p_\xi}{\partial \theta_b}.
	\end{equation}
	The MSE matrix $C^{(N)}(\theta)$  of any unbiased estimator is bounded from below by the inverse Fisher information matrix $[I^{(N)}(\theta)]^{-1}=[N'I^{(t)}(\theta)]^{-1}$ \cite{Rao02book}. When $N'$ is sufficiently large, the maximum-likelihood estimator can approximately saturate this lower bound, and the corresponding  scaled MSE matrix reads $NC^{(N)}(\theta)\approx t[I^{(t)}(\theta)]^{-1}$. To achieve high tomographic efficiency, the main task is to construct collective measurements that yield the most Fisher information.

	Here, we should remark why we do not count the event $\xi'$ with $p_{\xi'}(\tilde{\theta})=0$
	in the definition of the Fisher information matrix $I^{(t)}(\theta)$  at $\tilde{\theta} $, as manifested in \eref{eq:FisherSupp}, assuming that $\rho(\theta)$ is pure. Note that the parametrization $ \rho(\theta)$ is differentiable, so
	$p_{\xi'} (\theta) $ is also differentiable. In addition, $\rho(\theta)$ is a pure state so it can be written as $\rho(\theta)=|\psi(\theta)\>\<\psi(\theta)|$, which implies that $\frac{\partial p_{\xi'} (\theta)}{\partial \theta_a}\big|_{\theta=\tilde{\theta}} =0$  whenever $p_{\xi'}(\tilde{\theta})=0$. 
	
	We shall illustrate our argument by considering the estimation problem  of one parameter denoted by $\theta$. The significance of the Fisher information $I(\theta)$ is tied to the mean square error (MSE) of a locally unbiased estimator. 
	More precisely, its inverse sets a lower bound for the MSE of any  locally unbiased estimator $\hat{\theta}$ of $\theta$, which is known as the Cram\'er-Rao bound, and
	the bound can be saturated asymptotically by the maximum-likelihood estimator \cite{Rao02book, PariR04Book}. This fact underpins the definition of the Fisher information.

	Recall that an estimator $\hat{\theta}$ of $\theta$ is a mapping from the data to the parameter $\xi\to \hat{\theta}(\xi)$. Its MSE reads
	\begin{equation}
	C(\hat{\theta},\theta):=\sum_{\xi}p_{\xi}(\theta)[\hat{\theta}(\xi)-\theta]^2.
	\end{equation}
	The estimator $\hat{\theta}$ is locally unbiased at $\tilde{\theta}$ \cite{Hole82book} if
	\begin{equation}
	\sum_{\xi}\hat{\theta}(\xi) p_{\xi}(\tilde{\theta})=
	\tilde{\theta}, \quad 
	\sum_{\xi}\hat{\theta}(\xi) \frac{\partial p_{\xi} (\theta)}{\partial \theta}\Big|_{\theta=\tilde{\theta}}=1.
	\end{equation}
	When $ p_{\xi'} (\tilde{\theta})=0$ and $\frac{\partial p_{\xi'} (\theta)}{\partial \theta}\big|_{\theta=\tilde{\theta}}=0$, 
	the terms corresponding to $\xi'$ 
	can be eliminated from the above summation, so we get 
	\begin{equation}
	\sum_{\xi,p_{\xi}(\tilde{\theta})>0}\bigl[\hat{\theta}(\xi)-\tilde{\theta}\bigr]p_{\xi}(\tilde{\theta})^{1/2} p_{\xi}(\tilde{\theta})^{-1/2}\frac{\partial p_{\xi} (\theta)}{\partial \theta}\Big|_{\theta=\tilde{\theta}}=\sum_{\xi}\bigl[\hat{\theta}(\xi)-\tilde{\theta}\bigr] \frac{\partial p_{\xi} (\theta)}{\partial \theta}\Big|_{\theta=\tilde{\theta}}=\sum_{\xi}\hat{\theta}(\xi) \frac{\partial p_{\xi} (\theta)}{\partial \theta}\Big|_{\theta=\tilde{\theta}}=1.
	\end{equation}
	Applying the Cauchy-Schwarz inequality yields
	\begin{equation}
	\Biggl\{\sum_{\xi,p_{\xi}(\tilde{\theta})>0}p_{\xi}(\tilde{\theta}) \bigl[\hat{\theta}(\xi)-\tilde{\theta}\bigr]^2\Biggr\}\Bigg\{\sum_{\xi,p_{\xi}(\tilde{\theta})>0} \frac{1}{p_{\xi}(\tilde{\theta})}\biggl(\frac{\partial p_{\xi} (\theta)}{\partial \theta}\Big|_{\theta=\tilde{\theta}}\biggr)^2\Biggr\}\geq 1,
	\end{equation}
	which implies that
	\begin{equation}
	C(\hat{\theta},\tilde{\theta})=\sum_{\xi}p_{\xi}(\tilde{\theta}) \bigl[\hat{\theta}(\xi)-\tilde{\theta}\bigr]^2=\sum_{\xi,p_{\xi}(\tilde{\theta})>0}p_{\xi}(\tilde{\theta}) \bigl[\hat{\theta}(\xi)-\tilde{\theta}\bigr]^2\geq \Bigg\{\sum_{\xi,p_{\xi}(\tilde{\theta})>0} \frac{1}{p_{\xi}(\tilde{\theta})}\biggl(\frac{\partial p_{\xi} (\theta)}{\partial \theta}\Big|_{\theta=\tilde{\theta}}\biggr)^2\Biggr\}^{-1}.
	\end{equation}
	Further, the equality holds when 
	the locally unbiased estimator at $\tilde{\theta}$ is given as \cite{Naga89A,Haya05book}
	\begin{equation}
	\hat{\theta}_{\tilde{\theta}}(\xi):=\frac{1}{I(\tilde{\theta}) p_{\xi}(\tilde{\theta})}
	\frac{\partial p_{\xi} (\theta)}{\partial \theta}\Big|_{\theta=\tilde{\theta}}
	+\tilde{\theta}.\label{eq:La}
	\end{equation}
	One might think that the estimator \eqref{eq:La} is meaningless because 
	it depends on the true parameter $\tilde{\theta}$.
	However, this estimator is asymptotically close to the maximum-likelihood estimator in the following sense.
	The maximum-likelihood estimator $\hat{\theta}_\ml$ is asymptotically
	locally unbiased  at all points
	and attains the Cram\'{e}r-Rao bound.
	When the true parameter is $\tilde{\theta}$ and we observe $N$ outcomes, the variable
	$\sqrt{N} (\hat{\theta}_\ml-\tilde{\theta}) $
	asymptotically approaches  $\sqrt{N}$ times of
	the sample mean of $\hat{\theta}_{\tilde{\theta}}-\tilde{\theta}$,
	which converges to the Gaussian distribution with variance $1/I(\tilde{\theta})$ according to the central limit theorem~\cite{Rao02book}.
	In this way, the asymptotic optimality of 
	the maximum-likelihood estimator can be shown.
	The above discussion on the  MSE of a locally unbiased estimator  explains why the event $\xi'$ with $p_{\xi'}(\tilde{\theta})=0$ and $\frac{\partial p_{\xi'} (\theta)}{\partial \theta}\big|_{\theta=\tilde{\theta}}=0$ does not contribute to the Fisher information  $I(\theta)$ at $\tilde{\theta}$.

	\section{Gill-Massar inequality and Gill-Massar bound}
	In this section we provide more details on the Gill-Massar (GM) inequality and GM bound for the scaled weighted mean square error (WMSE) \cite{GillM00,Zhu12the,HouZXL16}. In particular we provide a self-contained proof of the GM  inequality in the case of one-copy measurement and clarify the equality condition, which is useful to proving \thref{thm:UFSnogo}. In addition, our discussion on the GM bound is  instructive to deriving \crref{cor:WMSEent2} from \thref{thm:GMT2}

	\subsection{Gill-Massar inequality}
	Recall that $\rho(\theta)$ is a quantum state on the Hilbert space $\caH$ of  dimension $d$, $J(\theta)$ is the quantum Fisher information matrix, and $I^{(N)}(\theta)$ is the  Fisher information matrix of a (collective) measurement on $\caH^{\otimes N}$. The GM inequality \cite{GillM00} states that 
	\begin{equation}\label{eq:GMineqSupp}
	\tr [J^{-1}(\theta) I^{(N)}(\theta)] \leq N(d-1)
	\end{equation}
	whenever the measurement is separable. 
	Note  that $\tr [ J^{-1}(\theta) I^{(N)}(\theta) ]$ is invariant under reparametrization. For pure states, which are characterized by $2d-2$ parameters, the GM inequality  applies to arbitrary measurements, not necessarily separable. Here we provide a self-contained proof of the GM inequality in the case $N=1$, assuming that $\rho(\theta)$ is either pure or of full rank. The main purpose of presenting   this proof is to 
	clarify the equality condition of the GM inequality, which was mentioned in \rcite{LiFGK16} without proof.

	\begin{proposition}\label{pro:GMTPure}
		Suppose $\rho(\theta)$ is a pure state parametrized by $\theta_1, \theta_2, \ldots \theta_{2d-2}$.  Then 
		the Fisher information matrix $I(\theta)$ at $\theta$ of any POVM $\{\Pi_\xi\} $  on $\caH$ satisfies
		\begin{equation}
		\tr [ J^{-1}(\theta) I(\theta) ] \leq d-1. 
		\end{equation}
		The inequality is saturated iff all $\Pi_\xi$ are rank one and $\tr[\rho(\theta)\Pi_\xi]> 0$ for all $\xi$. 
	\end{proposition}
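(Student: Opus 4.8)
The plan is to exploit the fact that $\tr[J^{-1}(\theta)I(\theta)]$ is invariant under reparametrization and under a simultaneous unitary rotation of the state and the POVM. I may therefore fix the point $\theta$, rotate so that $\rho(\theta)=|0\rangle\langle0|$, and adopt the standard pure-state parametrization used in the main text, for which the $\rho_{,a}$ take the explicit form and $J_{ab}=4\delta_{ab}$. Then $J^{-1}=\frac14 I$, so $\tr[J^{-1}I]=\frac14\sum_a I_{aa}$, and the entire problem reduces to bounding $\sum_a(\partial_a p_\xi)^2$ for each outcome, where $p_\xi=\langle0|\Pi_\xi|0\rangle$ and $\partial_a p_\xi=\tr(\rho_{,a}\Pi_\xi)$.

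First I would evaluate these derivatives in the chosen basis. For each $j=1,\dots,d-1$ the two conjugate parameters yield $2\,\mathrm{Re}\langle0|\Pi_\xi|j\rangle$ and $-2\,\mathrm{Im}\langle0|\Pi_\xi|j\rangle$, so that $\sum_a(\partial_a p_\xi)^2=4\sum_{j=1}^{d-1}|\langle0|\Pi_\xi|j\rangle|^2$. Using the resolution $\sum_{j=1}^{d-1}|j\rangle\langle j|=I-|0\rangle\langle0|$, this collapses to $4\bigl(\langle0|\Pi_\xi^2|0\rangle-p_\xi^2\bigr)$. Summing over the outcomes with $p_\xi>0$ and invoking completeness $\sum_\xi p_\xi=1$ then gives the clean identity $\tr[J^{-1}I]=\sum_{\xi:\,p_\xi>0}\langle0|\Pi_\xi^2|0\rangle/p_\xi-1$.

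The heart of the bound is the operator inequality $\Pi_\xi^2\le\tr(\Pi_\xi)\,\Pi_\xi$, valid because every eigenvalue $\lambda$ of the positive operator $\Pi_\xi$ obeys $\lambda^2\le\lambda\tr(\Pi_\xi)$. Sandwiching with $|0\rangle$ gives $\langle0|\Pi_\xi^2|0\rangle\le\tr(\Pi_\xi)\,p_\xi$, whence $\sum_{\xi:\,p_\xi>0}\langle0|\Pi_\xi^2|0\rangle/p_\xi\le\sum_\xi\tr(\Pi_\xi)=\tr(I)=d$, and the claimed bound $\tr[J^{-1}I]\le d-1$ follows at once.

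For the equality condition, which I expect to be the delicate part, I would track two independent saturations. The step $\sum_{\xi:\,p_\xi>0}\tr\Pi_\xi\le\sum_\xi\tr\Pi_\xi$ is tight iff every outcome with $p_\xi=0$ already has $\Pi_\xi=0$, i.e. all retained elements satisfy $\tr(\rho\Pi_\xi)>0$. The operator step is tight iff $\langle0|\Pi_\xi^2|0\rangle=\tr(\Pi_\xi)\,p_\xi$ for each such $\xi$; writing $\Pi_\xi=\sum_k\lambda_k|e_k\rangle\langle e_k|$ with $\lambda_k>0$ and $c_k=\langle e_k|0\rangle$, this amounts to $\sum_{k\ne m}\lambda_m\lambda_k|c_k|^2=0$, and since $p_\xi>0$ forbids all $c_k$ from vanishing, it forces a single nonzero eigenvalue, i.e. rank one. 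The only genuine obstacle is keeping the bookkeeping of zero-probability outcomes consistent between the Fisher-information sum (restricted to $p_\xi>0$) and the completeness and trace sums (over all $\xi$); once that is handled carefully, the two saturation conditions combine to yield exactly the stated criterion that all $\Pi_\xi$ be rank one with $\tr[\rho(\theta)\Pi_\xi]>0$.
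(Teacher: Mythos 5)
Your proposal is correct and follows essentially the same route as the paper: both reduce the problem to the identity $\tr(J^{-1}I)=\sum_{\xi,\,p_\xi>0}\bigl[\<0|\Pi_\xi^2|0\>/p_\xi-p_\xi\bigr]$ (the paper packages this step as a lemma involving the swap operator, but proves that lemma by the very same canonical-basis computation at $\rho=|0\>\<0|$ that you carry out directly) and then apply $\Pi_\xi^2\le\tr(\Pi_\xi)\,\Pi_\xi$ together with completeness. Your equality analysis, including the bookkeeping of zero-probability outcomes and the eigenvalue argument forcing rank one, matches the paper's.
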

	This proposition is crucial to proving \thref{thm:UFSnogo} in the main text, which states that no measurement with a finite number of outcomes on $\caH$ is Fisher symmetric for all pure states.
	\begin{proof}
		The Fisher information matrix $I(\theta)$ has matrix elements
		\begin{align}
		I_{ab}=\sum_{\xi,p_\xi>0}\frac{1}{p_\xi}\frac{\partial p_\xi}{\partial \theta_a}\frac{\partial p_\xi}{\partial \theta_b}=\sum_{\xi,p_\xi> 0}\frac{\tr(\rho_{,a}\Pi_\xi)\tr(\rho_{,b}\Pi_\xi)}{\tr(\rho\Pi_\xi)}=\sum_{\xi,p_\xi> 0}\frac{\tr\bigl[(\rho_{,a}\otimes \rho_{,b})\Pi_\xi^{\otimes 2}\bigr]}{\tr(\rho\Pi_\xi)},
		\end{align}
		where $p_\xi=p_\xi(\theta)=\tr[\rho(\theta)\Pi_\xi]$ and $\rho_{,a}=\rho_{,a}(\theta)=\partial\rho(\theta)/\partial \theta_a$.
		According to \lref{lem:QFIsumPure} below,
		\begin{align}
		\tr(J^{-1}I)&=\sum_{a,b=1}^{2d-2}(J^{-1})_{ab}I_{ab}
		=\sum_{\xi,p_\xi> 0}\frac{\tr\bigl[V(1\otimes \rho+\rho\otimes 1)\Pi_\xi^{\otimes 2}\bigr] -2[\tr(\rho\Pi_\xi)]^2}{2\tr(\rho\Pi_\xi)}\nonumber\\
		&=\sum_{\xi,p_\xi> 0}\biggl[\frac{\tr\bigl(\rho\Pi_\xi^2\bigr)}{\tr(\rho\Pi_\xi)}-\tr(\rho\Pi_\xi)\biggr]
		\leq \sum_{\xi,p_\xi> 0}\tr(\Pi_\xi)-\sum_{\xi}\tr(\rho\Pi_\xi)
		\leq \sum_{\xi}\tr(\Pi_\xi)-1= d-1.
		\end{align}
		Here the first inequality is saturated iff all $\Pi_\xi$ with $p_\xi>0$ have rank one, the second inequality is saturated iff $p_\xi>0$ for all $\xi$. This observation completes the proof of \pref{pro:GMTPure}. 
	\end{proof}

	\begin{proposition}\label{pro:GMTmix}
		Suppose $\rho(\theta)$ is a state of full rank   parametrized by $\theta_1, \theta_2, \ldots \theta_{d^2-1}$.  Then 
		the Fisher information matrix $I(\theta)$ at $\theta$ of any POVM $\{\Pi_\xi\} $  on $\caH$ satisfies
		\begin{equation}
		\tr [ J^{-1}(\theta) I(\theta) ] \leq d-1. 
		\end{equation}
		The inequality is saturated iff all $\Pi_\xi $  are rank one. 
	\end{proposition}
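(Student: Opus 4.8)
The plan is to run the argument of \pref{pro:GMTPure} almost verbatim, since the final bound $d-1$ and the rank-one equality condition have the same origin. As there, I would first write the matrix elements of $I(\theta)$ as $I_{ab}=\sum_{\xi,p_\xi>0} p_\xi^{-1}\tr(\rho_{,a}\Pi_\xi)\tr(\rho_{,b}\Pi_\xi)$ with $p_\xi=\tr(\rho\Pi_\xi)$, so that $\tr(J^{-1}I)=\sum_{\xi,p_\xi>0} p_\xi^{-1}\sum_{a,b}(J^{-1})_{ab}\tr(\rho_{,a}\Pi_\xi)\tr(\rho_{,b}\Pi_\xi)$. The whole proof then reduces to evaluating the inner sum for each outcome, and this is where the mixed-state case genuinely differs from the pure case, because now $L_a\neq 2\rho_{,a}$ and the link between $J$ and the derivatives must be made through the symmetric logarithmic derivatives.

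The key step is a Parseval identity in the symmetrized (SLD) inner product $\langle A,B\rangle_\rho:=\tfrac12\tr[\rho(AB+BA)]$ on the real space of Hermitian operators. In this inner product $J_{ab}=\langle L_a,L_b\rangle_\rho$, $\tr(\rho_{,a}\Pi_\xi)=\langle L_a,\Pi_\xi\rangle_\rho$, and $p_\xi=\langle \mathbf 1,\Pi_\xi\rangle_\rho$; moreover $\langle\mathbf 1,L_a\rangle_\rho=\tr(\rho_{,a})=0$. For a full-rank state with a faithful parametrization the $\rho_{,a}$ span the traceless Hermitian operators, so $J$ is positive definite and $\{\mathbf 1,L_1,\dots,L_{d^2-1}\}$ is a basis of the Hermitian operators whose Gram matrix is the block diagonal $1\oplus J$. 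Applying Parseval to $\Pi_\xi$ expanded in this basis gives $\tr(\rho\Pi_\xi^2)=\langle\Pi_\xi,\Pi_\xi\rangle_\rho=p_\xi^2+\sum_{a,b}(J^{-1})_{ab}\langle L_a,\Pi_\xi\rangle_\rho\langle L_b,\Pi_\xi\rangle_\rho$, i.e. the inner sum equals $\tr(\rho\Pi_\xi^2)-p_\xi^2$. This is the mixed-state analog of \lref{lem:QFIsumPure}, and it reproduces exactly the per-outcome expression obtained in the pure-state proof.

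Summing over $\xi$ then yields $\tr(J^{-1}I)=\sum_\xi\bigl[\tr(\rho\Pi_\xi^2)/p_\xi-p_\xi\bigr]$. To finish I would invoke the operator inequality $\Pi_\xi^2\leq\tr(\Pi_\xi)\,\Pi_\xi$, valid for any positive operator since its nonzero eigenvalues are bounded by $\tr(\Pi_\xi)$; combined with $\rho\geq0$ this gives $\tr(\rho\Pi_\xi^2)\leq\tr(\Pi_\xi)\,p_\xi$. Hence $\tr(J^{-1}I)\leq\sum_\xi\tr(\Pi_\xi)-\sum_\xi p_\xi=\tr(\sum_\xi\Pi_\xi)-\tr(\rho\sum_\xi\Pi_\xi)=d-1$, using $\sum_\xi\Pi_\xi=\mathbf 1$. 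For the equality condition, the one simplification relative to \pref{pro:GMTPure} is that full rank of $\rho$ forces $p_\xi>0$ for every nonzero $\Pi_\xi$, so the second inequality there is automatic and no outcomes can be silently dropped; thus equality holds iff $\Pi_\xi^2=\tr(\Pi_\xi)\Pi_\xi$ for every $\xi$, that is, iff every $\Pi_\xi$ is rank one.

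The main obstacle is the single identity of the middle paragraph: one must verify that the SLDs together with the identity form a complete family that is orthogonal to the identity and that $J$ is invertible, both of which rest on the faithfulness of the parametrization of the full-rank state. Once that completeness is in hand, everything else is a direct transcription of the pure-state argument, with the only substantive change being the automatic positivity of all $p_\xi$.
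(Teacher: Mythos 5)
Your proof is correct, and its skeleton is the same as the paper's: reduce $\tr(J^{-1}I)$ to $\sum_\xi\bigl[\tr(\rho\Pi_\xi^2)/p_\xi-p_\xi\bigr]$, bound each term via $\Pi_\xi^2\leq\tr(\Pi_\xi)\Pi_\xi$, and read off the rank-one equality condition, with full rank of $\rho$ guaranteeing $p_\xi>0$. Where you genuinely diverge is in how the central identity $\sum_{a,b}(J^{-1})_{ab}\tr(\rho_{,a}\Pi_\xi)\tr(\rho_{,b}\Pi_\xi)=\tr(\rho\Pi_\xi^2)-p_\xi^2$ is obtained. The paper proves the operator identity of \lref{lem:QFIsumMix}, $\sum_{a,b}(J^{-1})_{ab}(\rho_{,a}\otimes\rho_{,b})=\frac{1}{2}V(\rho\otimes 1+1\otimes\rho)-\rho^{\otimes2}$, by an explicit computation in the eigenbasis of $\rho$ using the Gill--Massar operator basis and the coefficient identity \eqref{eq:cmjIdentity}, and then contracts it with $\Pi_\xi^{\otimes 2}$. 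You instead observe that $\{\mathbf 1,L_1,\dots,L_{d^2-1}\}$ is a basis with block-diagonal Gram matrix $1\oplus J$ in the SLD inner product $\langle A,B\rangle_\rho=\frac12\tr[\rho(AB+BA)]$ and apply Parseval to $\Pi_\xi$; since the scalar identity for all Hermitian $\Pi_\xi$ is (by polarization) equivalent to the swap-symmetric operator identity, the two routes carry the same content, but yours is basis-free, makes the roles of $J$ as a Gram matrix and of $\tr(J^{-1}I)$ as a sum of squared normalized components transparent, and unifies the pure and mixed cases under one derivation; the paper's explicit form of the lemma has the separate advantage of being reusable as an operator identity (it is invoked again, sandwiched by $(\rho^{-1/2})^{\otimes2}$, in the proof of \thref{thm:GMT2}). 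One small point worth making explicit: in the equality analysis you need full rank of $\rho$ not only to ensure $p_\xi>0$ but also to pass from $\tr[\rho(\tr(\Pi_\xi)\Pi_\xi-\Pi_\xi^2)]=0$ to $\Pi_\xi^2=\tr(\Pi_\xi)\Pi_\xi$, since the operator in brackets is merely positive semidefinite.
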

	\begin{proof}
		The proof is similar to that of \pref{pro:GMTPure} except that  \lref{lem:QFIsumPure} employed there should be replaced by  \lref{lem:QFIsumMix}. In addition, the requirement  $p_\xi>0$ is satisfied automatically because $\rho(\theta)$ has full rank. 
	\end{proof}

	\begin{lemma}\label{lem:QFIsumPure}
		Suppose $\rho(\theta)$ is a pure state  parametrized by $\theta_1, \theta_2, \ldots \theta_{2d-2}$.  Then 
		\begin{equation}\label{eq:QFIsumPure}
		\sum_{a,b=1}^{2d-2}(J^{-1})_{ab}(\rho_{,a}\otimes
		\rho_{,b})=\frac{1}{2}V(\rho\otimes 1+1\otimes
		\rho)-\rho^{\otimes2},
		\end{equation}
		where $V=\sum_{j,k} |jk\>\<kj|$ is the swap operator. 
	\end{lemma}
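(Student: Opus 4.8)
The plan is to exploit the reparametrization invariance of the left-hand side. Under a smooth change of coordinates the derivatives $\rho_{,a}$ transform as the components of a covariant vector while $J_{ab}$ transforms as a covariant $2$-tensor; hence $J^{-1}$ is contravariant and the contraction $\sum_{a,b}(J^{-1})_{ab}\,\rho_{,a}\otimes\rho_{,b}$ defines a parametrization-independent operator on $\caH\otimes\caH$. It is moreover covariant under $\rho\mapsto U\rho U^\dagger$, $\rho_{,a}\mapsto U\rho_{,a}U^\dagger$ applied to both tensor factors, and the right-hand side $\tfrac{1}{2}V(\rho\otimes 1+1\otimes\rho)-\rho^{\otimes2}$ enjoys exactly the same two invariances. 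Since every pure state is unitarily equivalent to $|0\rangle\langle0|$, it therefore suffices to verify the identity at $\rho=|0\rangle\langle0|$ in one convenient chart. I would adopt the standard parametrization already used in the main text, for which $\rho_{,a}=|a\rangle\langle0|+|0\rangle\langle a|$ for $1\le a\le d-1$ and $\rho_{,a}=\rmi(|a'\rangle\langle0|-|0\rangle\langle a'|)$ for $d\le a\le 2(d-1)$. For this choice $J_{ab}=4\delta_{ab}$ by \eref{eq:Fisher2designPure}, so the left-hand side collapses to $\tfrac{1}{4}\sum_a\rho_{,a}\otimes\rho_{,a}$ and no genuine matrix inversion is required.

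The computational core is then to expand $\sum_a\rho_{,a}\otimes\rho_{,a}$. Writing $P_k=|k\rangle\langle0|$ and pairing the two families of index $a$ with common label $k=a'$, the factor $\rmi^2=-1$ from the second family produces $(P_k+P_k^\dagger)\otimes(P_k+P_k^\dagger)-(P_k-P_k^\dagger)\otimes(P_k-P_k^\dagger)=2\bigl(P_k\otimes P_k^\dagger+P_k^\dagger\otimes P_k\bigr)$, so the "diagonal" pieces $P_k\otimes P_k$ and $P_k^\dagger\otimes P_k^\dagger$ cancel and one is left with $\sum_a\rho_{,a}\otimes\rho_{,a}=2\sum_{k=1}^{d-1}\bigl(|k0\rangle\langle0k|+|0k\rangle\langle k0|\bigr)$. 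For the other side I would use $V|0j\rangle=|j0\rangle$ to get $\tfrac{1}{2}V(\rho\otimes1+1\otimes\rho)=\tfrac{1}{2}\sum_j\bigl(|j0\rangle\langle0j|+|0j\rangle\langle j0|\bigr)$, observe that the $j=0$ term equals $\rho^{\otimes2}=|00\rangle\langle00|$, and subtract it off; this restricts the sum to $j\ge1$ and matches $\tfrac{1}{4}\sum_a\rho_{,a}\otimes\rho_{,a}$ term by term.

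The only genuinely delicate point is the reduction in the first paragraph: one must be sure the left-hand side is parametrization independent, which rests on $J$ being invertible in the chosen $2(d-1)$-parameter Fubini--Study chart and, structurally, on the identity $\rho\rho_{,a}+\rho_{,a}\rho=\rho_{,a}$ forced by $\rho^2=\rho$. This identity makes each $\rho_{,a}$ purely off-diagonal with respect to the splitting $\caH=\mathrm{span}(|0\rangle)\oplus|0\rangle^\perp$, which is precisely what removes the diagonal contributions in the cancellation above and keeps both sides supported on the same two-block subspace. Once this reduction is justified, the remaining verification is the short, routine computation sketched above, so I expect the conceptual work to lie entirely in the invariance argument rather than in the algebra.
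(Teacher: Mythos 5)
Your proposal is correct and follows essentially the same route as the paper's proof: exploit reparametrization invariance of the contraction $\sum_{a,b}(J^{-1})_{ab}\,\rho_{,a}\otimes\rho_{,b}$, pass to the standard chart at $\rho=|0\rangle\langle 0|$ where $J_{ab}=4\delta_{ab}$, and verify $\tfrac{1}{4}\sum_a\rho_{,a}\otimes\rho_{,a}=\tfrac{1}{2}V(\rho\otimes 1+1\otimes\rho)-\rho^{\otimes 2}$ by direct expansion. The cancellation of the $P_k\otimes P_k$ terms and the identification of the $j=0$ term with $\rho^{\otimes 2}$ match the paper's algebra exactly.
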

	\begin{proof}
		Note that the left hand side of \eref{eq:QFIsumPure}  is invariant under changes of parametrization. Choose an orthonormal  basis $\{|j\rangle\}_{j=0}^{d-1}$ such that $\rho=|0\rangle\langle 0|$ at the parameter point of interest. Choose a suitable parametrization such that $\rho_{,a}$ take on the  form \cite{GillM00}
		\begin{equation}
		\rho_{,a}=\begin{cases}
		\rho_{,j+} & 1\leq a\leq d-1,\\
		\rho_{,k-}&d\leq a\leq 2(d-1),
		\end{cases}
		\end{equation}
		where $j=a$, $k=a-d+1$, and 
		\begin{equation}
		\rho_{,j+}=
		|j\rangle\langle 0|+|0\rangle\langle j|,\quad 
		\rho_{,j-}=\rmi (|j\rangle\langle 0|-|0\rangle\langle j|), \quad 1\leq j\leq d-1. 
		\end{equation}
		Then the symmetric logarithmic derivatives can be chosen to be $L_a=2\rho_{,a}$, so that 
		\begin{equation}
		J_{ab}=4\delta_{ab}. 
		\end{equation}
		In addition,
		\begin{align}
		\sum_{a=1}^{2d-2}\rho_{,a}\otimes \rho_{,a}&=\sum_{j=1}^{d-1}\left(\rho_{,j+}\otimes \rho_{,j+}+\rho_{,j-}\otimes \rho_{,j-}\right)=2\sum_{j=1}^{d-1}(|0j\>\<j0|+|j0\>\<0j|)\nonumber\\
		&=2\sum_{j=0}^{d-1}(|0j\>\<j0|+|j0\>\<0j|)-4 (|0\>\<0|)^{\otimes 2}=2V(1\otimes\rho+\rho\otimes 1)-4\rho^{\otimes 2}.
		\end{align}
		Therefore, 
		\begin{equation}
		\sum_{a,b=1}^{2d-2}(J^{-1})_{ab}(\rho_{,a}\otimes
		\rho_{,b})=\frac{1}{4}\sum_{a=1}^{2d-2}\rho_{,a}\otimes \rho_{,a}=\frac{1}{2}V(\rho\otimes 1+1\otimes
		\rho)-\rho^{\otimes 2}. 
		\end{equation}
	\end{proof}
	
	\begin{lemma}\label{lem:QFIsumMix}
		Suppose $\rho(\theta)$ is a state of full rank parametrized by $\theta_1, \theta_2, \ldots \theta_{d^2-1}$.  Then 
		\begin{equation}\label{eq:QFIsumMix}
		\sum_{a,b=1}^{d^2-1}(J^{-1})_{ab}(\rho_{,a}\otimes
		\rho_{,b})=\frac{1}{2}V(\rho\otimes 1+1\otimes
		\rho)-\rho^{\otimes2}.
		\end{equation}
	\end{lemma}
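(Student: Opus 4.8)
The plan is to exploit the same parametrization-independence that drove the proof of \lref{lem:QFIsumPure}: under a change of coordinates the $\rho_{,a}$ transform with the Jacobian while $J$ transforms as the corresponding Gram matrix, so $\sum_{ab}(J^{-1})_{ab}\rho_{,a}\otimes\rho_{,b}$ is invariant. I am therefore free to work at the point of interest in whatever basis is most convenient, and I only need the $\rho_{,a}$ to run over a basis of the $(d^2-1)$-dimensional space of traceless Hermitian operators, which is realizable for any faithful parametrization of a full-rank state (this is also what makes $J$ invertible).

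The key step is to reformulate the left-hand side through the superoperator $\caR_\rho(X):=\tfrac12(\rho X+X\rho)$. By the defining equation of the symmetric logarithmic derivative, $\rho_{,a}=\caR_\rho(L_a)$, and $J_{ab}=\tfrac12\tr[\rho(L_aL_b+L_bL_a)]=\langle L_a,L_b\rangle_\rho$ is the Gram matrix of the SLDs in the inner product $\langle X,Y\rangle_\rho:=\tfrac12\tr[\rho(XY+YX)]$, which is positive definite on all Hermitian operators since $\rho$ has full rank. Pulling $\caR_\rho\otimes\caR_\rho$ out front, the left-hand side becomes $(\caR_\rho\otimes\caR_\rho)\bigl[\sum_{ab}(J^{-1})_{ab}L_a\otimes L_b\bigr]$. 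The bracketed object is precisely the canonical inverse-metric (identity) tensor of the subspace $W:=\mathrm{span}\{L_a\}$ with respect to $\langle\cdot,\cdot\rangle_\rho$, so it equals $\sum_i G_i\otimes G_i$ for any $\langle\cdot,\cdot\rangle_\rho$-orthonormal basis $\{G_i\}$ of $W$, independently of the original choice.

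Next I would identify $W$. Since $\langle L_a,1\rangle_\rho=\tr(\rho L_a)=\tr\rho_{,a}=0$, every SLD is $\rho$-orthogonal to the identity, so $W$ is exactly the orthogonal complement of $\mathrm{span}\{1\}$ inside the full $d^2$-dimensional Hermitian space. Because $\langle 1,1\rangle_\rho=\tr\rho=1$, adjoining $G_0=1$ extends $\{G_i\}$ to an orthonormal basis of all Hermitian operators, whence $\sum_{i\geq1}G_i\otimes G_i=\mathcal G-1\otimes1$ with $\mathcal G:=\sum_{i\geq0}G_i\otimes G_i$ the full inverse-metric tensor. Applying $\caR_\rho\otimes\caR_\rho$ and using $\caR_\rho(1)=\rho$ turns the subtracted term into $\rho^{\otimes2}$, so the left-hand side equals $(\caR_\rho\otimes\caR_\rho)\mathcal G-\rho^{\otimes2}$; this already produces the $-\rho^{\otimes2}$ on the right-hand side.

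Finally I would evaluate $(\caR_\rho\otimes\caR_\rho)\mathcal G$ in the eigenbasis $\rho=\sum_k\lambda_k\ket{k}\bra{k}$. An explicit orthonormal basis is $\ket{j}\bra{j}/\sqrt{\lambda_j}$ together with $\sqrt{2/(\lambda_j+\lambda_k)}$ times the symmetric and antisymmetric pairs $\tfrac1{\sqrt2}(\ket{j}\bra{k}+\ket{k}\bra{j})$ and $\tfrac{\rmi}{\sqrt2}(\ket{j}\bra{k}-\ket{k}\bra{j})$ (a routine check shows these are mutually orthogonal). Since $\caR_\rho$ multiplies the $(j,k)$ matrix entry by $\tfrac12(\lambda_j+\lambda_k)$, the normalizations conspire so that each diagonal term yields $\lambda_j\ket{jj}\bra{jj}$ and, for each off-diagonal pair, the symmetric and antisymmetric contributions combine to $\tfrac12(\lambda_j+\lambda_k)(\ket{jk}\bra{kj}+\ket{kj}\bra{jk})$; summing over all $j,k$ gives exactly $\tfrac12 V(\rho\otimes1+1\otimes\rho)$. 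I expect this last eigenbasis bookkeeping, in particular seeing the swap operator $V$ emerge from the cancellation inside the symmetric/antisymmetric sum, to be the only real computation. The genuinely delicate conceptual point is the observation that the SLDs are \emph{not} traceless but are $\rho$-orthogonal to the identity, which is what correctly pins down $W$ and accounts for the single rank-one term $\rho^{\otimes2}$ that must be subtracted.
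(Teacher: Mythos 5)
Your proof is correct, and it takes a genuinely different route from the paper's. The paper proves the lemma by direct computation in a specially adapted parametrization: it diagonalizes $\rho$, takes the Gill--Massar operator basis $\rho_{,jk\pm}$, $\rho_{,m}=\sum_j c_{mj}\ket{j}\bra{j}$ for which $J$ is diagonal, and obtains the $-\rho^{\otimes2}$ term from the completeness identity $\sum_m c_{mj}c_{mk}=\lambda_j\delta_{jk}-\lambda_j\lambda_k$ satisfied by the diagonal coefficients. You instead recognize $\sum_{ab}(J^{-1})_{ab}L_a\otimes L_b$ as the basis-independent inverse-Gram tensor of $W=\mathrm{span}\{L_a\}$ with respect to the positive-definite form $\<X,Y\>_\rho=\tfrac12\tr[\rho(XY+YX)]$, identify $W$ as the $\rho$-orthogonal complement of the identity (the observation that $\<L_a,1\>_\rho=\tr\rho_{,a}=0$ even though $L_a$ need not be traceless is exactly the right point, and $\dim W=d^2-1$ follows from invertibility of $J$), and then push everything through $\caR_\rho\otimes\caR_\rho$. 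This cleanly isolates the $-\rho^{\otimes2}$ term as the contribution of $\mathrm{span}\{1\}$ via $\caR_\rho(1)=\rho$, replacing the $c_{mj}$ bookkeeping entirely. The remaining eigenbasis evaluation of $(\caR_\rho\otimes\caR_\rho)\mathcal{G}$ --- where $V$ emerges from the cancellation between the symmetric and antisymmetric off-diagonal terms, each rescaled by $\tfrac12(\lambda_j+\lambda_k)$ --- is essentially the same computation as the paper's off-diagonal sum, and your normalizations check out ($\<S_{jk},S_{jk}\>_\rho=\tfrac12(\lambda_j+\lambda_k)$, $\<\ket{j}\bra{j},\ket{j}\bra{j}\>_\rho=\lambda_j$, mutual orthogonality is routine as claimed). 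What your route buys is conceptual transparency and independence from any particular choice of traceless operator basis; what the paper's route buys is an explicit adapted basis with diagonal $J$ that it can reuse directly in the surrounding Gill--Massar arguments. No gaps.
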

	This lemma was reproduced from Lemma~6.2 in \rcite{Zhu12the}. 
	Note that the left hand side of \eref{eq:QFIsumMix}  is invariant under changes of parametrization. In addition, the right hand side of \eref{eq:QFIsumMix} has the same form as that of   \eref{eq:QFIsumPure}.
	\begin{proof}
		We first diagonalize $\rho$ at the given parameter point, so that it has the form $\rho=\sum_j \lambda_j(|j\>\<j|)$. Then we introduce a basis for the space of traceless Hermitian operators following \rcite{GillM00},
		\begin{equation}\label{eq:GMbasis}
		\begin{aligned}
		\rho_{, jk+}=&|j\>\<k|+|k\>\<j|,\quad \rho_{, jk-}=-\rmi(|j\>\<k|-|k\>\<j|), \quad j<k,\\
		\rho_{,m}=&\sum_j c_{mj} |j\>\<j|,\quad m=1,2,\ldots, d-1,
		\end{aligned}
		\end{equation}
		where the real coefficients $c_{mj}$ satisfy
		\begin{equation}
		\sum_j c_{mj}=0,\quad \sum_j \frac{1}{\lambda_j}c_{m' j}c_{mj}=\delta_{m'm},
		\end{equation}
		which implies that
		\begin{equation}\label{eq:cmjIdentity}
		\sum_m c_{mj}c_{mk}=\lambda_j\delta_{jk}-\lambda_j\lambda_k. 
		\end{equation}
		The operator basis in \eref{eq:GMbasis} determines an affine parametrization in the state space,
		\begin{equation}
		\rho(\theta)=\sum_j \lambda_j(|j\>\<j|)+\sum_{m=1}^{d-1}\theta_m\rho_{,m} +\sum_{j<k}\bigl(\theta_{jk+}\rho_{,jk+}+\theta_{jk-}\rho_{,jk-}\bigr).
		\end{equation}	
		The associated symmetric logarithmic derivatives read 
		\begin{equation}
		L_{jk+}=\frac{2\rho_{, jk+}}{\lambda_j+\lambda_k},\quad L_{jk-}=\frac{2\rho_{, jk-}}{\lambda_j+\lambda_k},\quad
		L_{m}=\sum_j \frac{c_{mj}}{\lambda_j} |j\>\<j|.
		\end{equation}	 	 
		With this parametrization, the quantum Fisher information matrix is diagonal at the given parameter point, with diagonal entries given by 
		\begin{equation}
		J_{jk\pm, jk\pm}=\frac{4}{\lambda_j+\lambda_k}, \quad  J_{m,m}=1.
		\end{equation}

		Now we are ready to prove \lref{lem:QFIsumMix}.
		\begin{align}
		&\sum_{a,b=1}^{d^2-1} (J^{-1})_{ab}(\rho_{,a}\otimes
		\rho_{,b})=\sum_{j<k}\frac{\lambda_j+\lambda_k}{4}\bigl[(|j\>\<k|+|k\>\<j|)^{\otimes 2}-(|j\>\<k|-|k\>\<j|)^{\otimes 2}\bigr]+\sum_m \biggl(\sum_j c_{mj}|j\>\<j|\biggr)^{\!\!\otimes 2}\nonumber\\
		&=\sum_{j< k}\frac{\lambda_j+\lambda_k}{2}(|jk\>\<kj|+|kj\>\<jk|)+\sum_{j,k}\Biggl[\biggl(\sum_m c_{mj} c_{mk}\biggr)(|jk\>\<jk|)\Biggr]\nonumber\\
		&=\sum_{j\neq k}\frac{\lambda_j+\lambda_k}{2}(|jk\>\<kj|)+\sum_{j,k}(\lambda_j\delta_{jk}-\lambda_j\lambda_k)(|jk\>\<jk|)\nonumber\\
		&=\sum_{j, k}\frac{\lambda_j+\lambda_k}{2}(|jk\>\<kj|)-\sum_{j,k}\lambda_j\lambda_k(|jk\>\<jk|)=\frac{1}{2}V(\rho\otimes 1+1\otimes
		\rho)-\rho^{\otimes2}.
		\end{align}
		Here the third equality follows from \eref{eq:cmjIdentity}.
	\end{proof}

	\subsection{Gill-Massar bound}
	In this section we provide more details on the Gill-Massar bound for the scaled WMSE \cite{GillM00,Zhu12the,HouZXL16}. Let $C^{(N)}(\theta)$ be the MSE matrix of any unbiased estimator, then $[C^{(N)}(\theta)]^{-1}\leq I^{(N)}(\theta)$ according to the Cram\'er-Rao bound \cite{Rao02book}. In conjunction with the GM inequality in \eref{eq:GMineqSupp}, we deduce
	\begin{equation}
	\tr\bigl ( J^{-1}(\theta)[C^{(N)}(\theta)]^{-1}\bigr) \leq N(d-1). 
	\end{equation} 
	This inequality imposes a lower bound for the scaled WMSE $N\tr(WC^{(N)})$ for any positive weighting matrix $W$~\cite{GillM00, Zhu12the} (to simplify the notation  the dependence on the parameter point $\theta$ is suppressed),
	\begin{equation}\label{eq:GMboundWMSE}
	N\tr(WC^{(N)})\geq \mse_{W}^{\mathrm{GM}}:=\frac{\bigl(\tr\sqrt{J^{-1/2}WJ^{-1/2}}\,\bigr)^2}{d-1}. 
	\end{equation}
	This bound applies to arbitrary measurements on $\caH^{\otimes N}$ for pure states and to separable measurements for mixed states. In the case $N=1$ and $d=2$, this bound was first derived by Hayashi~\cite{Haya97inbook}.

	When independent and  identical measurements
	are performed on individual copies of $\rho$, we have $I^{(N)}(\theta)=NI(\theta)$, where $I(\theta)$ is the Fisher information matrix associated with the measurement on $\caH$. In addition, the bound $[C^{(N)}(\theta)]^{-1}\leq I^{(N)}(\theta)$ can be saturated by the maximum-likelihood estimator in the large-$N$ limit \cite{Rao02book,PariR04Book} (saturated approximately when $N$ is reasonably large). Then the lower bound in \eref{eq:GMboundWMSE} is saturated iff the measurement on $\caH$ yields 
	the Fisher information matrix \cite{Zhu12the}
	\begin{equation}\label{eq:FisherGM}
	I_W=(d-1)J^{1/2}\frac{\sqrt{J^{-1/2}WJ^{-1/2}}}{\tr\sqrt{J^{-1/2}WJ^{-1/2}}}J^{1/2}.
	\end{equation}

	The Bures distance between two quantum states $\rho$ and $\sigma$ is defined as $D_\rmB(\rho,\sigma)=\sqrt{2-2\sqrt{F(\rho,\sigma)}}$, where $F(\rho,\sigma)=(\tr\sqrt{\rho^{1/2}\sigma\rho^{1/2}}\bigr)^2$ is the fidelity \cite{BengZ06book}. The
	quantum Fisher information matrix $J$ allows  defining a statistical distance in the state space that  is equal to four times of  the infinitesimal Bures distance~\cite{BrauC94}.
	So the weighting matrix for the mean square Bures distance (MSB) is equal to $J/4$, and  the corresponding GM bound is \cite{Zhu12the}.
	\begin{equation}
	\msb^{\mathrm{GM}}=\begin{cases}
	\frac{(d+1)^2(d-1)}{4} &\mbox{mixed states,}\\
	d-1 &\mbox{pure states.}
	\end{cases}
	\end{equation}
	A measurement saturates this bound iff it yields the Fisher information matrix $I=J/(d+1)$ for mixed states or $I=J/2$ for pure states, in which case  the measurement is Fisher symmetric. Note that the infinitesimal infidelity $1-F$ is equal to the infinitesimal square Bures distance, so similar conclusion holds if the MSB is replaced by the mean infidelity when $N$ is sufficiently large.

	For pure states, the infinitesimal square Hilbert-Schmidt distance is equal to two times of the infinitesimal square Bures distance. So the GM bound for the scaled MSE (with respect to  the Hilbert-Schmidt distance) is $2(d-1)$, which is also saturated by Fisher-symmetric measurements.

	Next, we discuss briefly  the derivation of \crref{cor:WMSEent2} from \thref{thm:GMT2} in the main text. To this end, it is instructive
	to further clarify the problem setting; cf.~\sref{sec:QSTcoll} in this supplement. Here we assume that  $N$ is even and $N$ identically prepared quantum systems available for tomography are divided into $N/2$ pairs. Each time  we can measure a pair of quantum systems together;  in other words,
	we can perform collective measurements on $\caH^{\otimes2}$. The simplest strategy is to repeat a given collective measurement $N/2$ times. In general, adaptive measurements are also allowed; that is, the choices of later measurements may depend on the outcomes of previous measurements. Nevertheless,  each measurement setting is usually repeated  many times to acquire reliable statistics.

	Given the above explanation, it is clear that  the derivation of \crref{cor:WMSEent2} from \thref{thm:GMT2} in the main text follows from a similar reasoning  employed for deriving the GM bound 
	as outlined above. The conclusion is also similar except that 
	the bound for the scaled WMSE is reduced by a factor of $2/3$ because the bound for $\tr(J^{-1}I^{(2)})$ is increased by 50\%.

	\section{Proof of \thref{thm:UFSnogo}}
	In this section we prove \thref{thm:UFSnogo}, which states that no single-copy  measurement with a finite number of outcomes is Fisher symmetric for all pure states. 
	\begin{proof}
		Suppose the POVM $\{\Pi_\xi\}$ on $\caH$ is Fisher symmetric for all pure states. Then all $\Pi_\xi$ have rank one according to \pref{pro:GMTPure}. Let $\rho$ be a pure state that has orthogonal support with $\Pi_\xi$, that is, $\tr(\rho\Pi_\xi)=0$. Then the Fisher information matrix at $\rho$ provided by $\{\Pi_\xi\}$ cannot saturate the GM inequality again according to \pref{pro:GMTPure}. Therefore, $\{\Pi_\xi\}$ is not Fisher symmetric at $\rho$. This contradiction confirms \thref{thm:UFSnogo}. 
	\end{proof}
	Note that this proof also applies to POVMs with countable outcomes, but not to continuous POVMs. For a continuous POVM $\{\Pi_\xi\}$, the requirement $\tr(\rho\Pi_\xi)> 0$ stated in \pref{pro:GMTPure} may be violated by a subset of outcomes that has measure zero. That is why \thref{thm:UFSnogo} does not contradict the fact that the covariant measurement is Fisher symmetric for all pure states. 
	
	In view of \thref{thm:UFSnogo}, it is natural to ask whether there exists a finite POVM on $\caH$ that is Fisher symmetric for all pure states, except for a set of measure zero. We believe that such POVMs cannot exist, but have not found a rigorous proof so far. As an evidence in support of our belief,
	in the case of a qubit, calculation shows  that POVMs constructed from platonic solids inscribed in the Bloch sphere are generally not Fisher symmetric except for certain special  points on the Bloch sphere (depending on the platonic solids chosen). Nevertheless, it is possible to construct POVMs that are Fisher symmetric for almost all pure states on any given great circle on the Bloch sphere. For example, suppose  pure qubit states are parameterized as 
	\begin{equation}
	\rho(\theta,\phi)=|\psi(\theta,\phi)\>\<\psi(\theta,\phi)|,\quad |\psi(\theta,\phi)\>=\cos\Bigl(\frac{\theta}{2}\Bigr)|0\>+\sin\Bigl(\frac{\theta}{2}\Bigr)\rme^{\rmi\phi}|1\>, \quad 0\leq \theta\leq \pi,\quad 0\leq \phi\leq 2\pi.
	\end{equation}
	Then the quantum Fisher information matrix is given by 
	\begin{equation}
	J(\theta,\phi)=\diag(1,\sin^2(\theta)).
	\end{equation}
	Consider the POVM $\{\Pi_j\}$ composed of the four elements
	\begin{equation}
	\Pi_1=\frac{1}{2}(|0\>\<0|), \quad \Pi_2=\frac{1}{2}(|1\>\<1|), \quad\Pi_3=\frac{1}{2}(|+\>\<+|), \quad
	\Pi_4=\frac{1}{2}(|-\>\<-|),
	\end{equation}
	where $|\pm\>=\frac{1}{\sqrt{2}}(|0\>\pm|1\>)$. 
	The probability of obtaining the four outcomes are respectively given by
	\begin{equation}
	p_1=\frac{1+\cos(\theta)}{2},\quad p_2=\frac{1-\cos(\theta)}{2}, \quad p_3=\frac{1+\sin(\theta)\cos(\phi)}{2},\quad p_4=\frac{1-\sin(\theta)\cos(\phi)}{2}.
	\end{equation}
	The Fisher information matrix provided by the POVM reads
	\begin{equation}
	I(\theta,\phi)=\frac{1}{2}
	\begin{pmatrix}
	1+\frac{\cos^2(\theta)\cos^2(\phi)}{1-\sin^2(\theta)\cos^2(\phi)} &-\frac{\sin(\theta)\cos(\theta)\sin(\phi)\cos(\phi)}{1-\sin^2(\theta)\cos^2(\phi)}\\
	-\frac{\sin(\theta)\cos(\theta)\sin(\phi)\cos(\phi)}{1-\sin^2(\theta)\cos^2(\phi)} &\frac{\sin^2(\theta)\sin^2(\phi)}{1-\sin^2(\theta)\cos^2(\phi)}
	\end{pmatrix}.
	\end{equation}
	The POVM $\{\Pi_j\}$ is Fisher symmetric for the the parameter point $(\theta, \phi)$ if $p_j\neq0$ for $j=1,2,3,4$ and $I(\theta,\phi)=\frac{1}{2}J(\theta,\phi)=\frac{1}{2}\diag(1,\sin^2(\theta))$. These conditions hold iff $\theta=\pi/2$ and $\phi\neq 0,\pi$, or $\phi=\pi/2, 3\pi/2$ and $\theta\neq 0, \pi$. These parameter points form the union of two great circles with four points corresponding to $\Pi_1, \Pi_2, \Pi_3,\Pi_4$  deleted. Incidentally, the POVM $\{|0\>\<0|, |1\>\<1|\}$ saturates the quantum Cram\'er-Rao bound for $\theta$ whenever $\theta\neq 0,\pi$; the POVM $\{|+\>\<+|, |-\>\<-|\}$ saturates the quantum Cram\'er-Rao bound for $\phi$ on the great circle with $\theta=\pi/2$ as long as $\phi\neq 0,\pi$; cf. \rcite{BarnG00}. 
	
	\section{Generalized 2-designs  and generalized SICs}
	Here we provide additional details on generalized 2-designs and generalized SICs  \cite{Zaun11,ReneBSC04,Appl07,GourK14,Zhu14T,GrayA16}, which are relevant to the current study. Most results presented here are more or less known before, but some of them have not been stated explicitly or clearly. Note that generalized 2-designs considered in this paper are slightly  different from conical designs studied in \rcite{GrayA16}, although they share a common spirit.

	\subsection{Generalized 2-designs}
	A weighted set of quantum states $\{|\psi_\xi\>, w_\xi\}$   in dimension $d$ is  a (weighted complex projective) 2-design \cite{Zaun11,ReneBSC04,Scot06}  if 
	\begin{equation}\label{eq:2designSupp}
	\sum_{\xi} w_\xi (|\psi_\xi\>\<\psi_\xi|)^{\otimes 2 }= \frac{2\sum_{\xi}w_\xi}{d(d+1)} P_+, 
	\end{equation} 
	where $P_+$ ($P_-$) is the projector onto the symmetric  subspace $\caH_+$ (antisymmetric  subspace $\caH_-$)  of $\caH^{\otimes 2}$. 
	It is straightforward to verify that 
	\begin{equation}
	\sum_{\xi, \eta } w_\xi w_\eta  |\<\psi_\xi|\psi_\eta \>|^4\geq  \frac{2}{d(d+1)}\biggl(\sum_{\xi}w_\xi\biggr)^2, 
	\end{equation}
	and the lower bound is saturated iff $\{|\psi_\xi\>, w_\xi\}$ is a 2-design. 
	Any 2-design $\{|\psi_\xi\>, w_\xi\}$ has at least $d^2$ elements, and the lower bound is saturated iff all weights $w_\xi$ are equal, and $|\psi_\xi\>$ form a symmetric informationally complete measurement \cite{Zaun11,ReneBSC04, Scot06,ApplFZ15G,Zhu14T} (SIC for short), that is, 
	\begin{equation}
	|\<\psi_\xi|\psi_\eta\>|^2=\frac{d\delta_{\xi\eta}+1}{d+1},\quad \forall \xi, \eta. 
	\end{equation}
	Another prominent example of 2-designs are complete sets of mutually unbiased bases (MUB) \cite{WootF89,DurtEBZ10,Zhu15M}. In the case of a qubit, every SIC defines a regular tetrahedron on the Bloch sphere, and vice versa. By contrast, every complete set of MUB defines a regular octahedron on the Bloch sphere, and vice versa. For example, one SIC is composed of the four states with Bloch vectors 
	\begin{equation}\label{eq:SICqubitBloch}
	\frac{1}{\sqrt{3}}(1,1,1),\quad \frac{1}{\sqrt{3}}(1,-1,-1),\quad \frac{1}{\sqrt{3}}(-1,1,-1),\quad \frac{1}{\sqrt{3}}(-1,-1,1),
	\end{equation}
	respectively. One complete set of MUB is composed of the six states with Bloch vectors $(\pm1,0,0), (0,\pm1,0), (0,0,\pm1)$.

	A set of positive operators $\{\Pi_\xi\}$ is a generalized 2-design (cf. conical designs in \rcite{GrayA16}) if 
	\begin{align}\label{eq:g2designSupp}
	\sum_{\xi}\frac{ \Pi_\xi \otimes \Pi_\xi}{\tr(\Pi_\xi)}=&\frac{\sum_{\xi} w_\xi}{d}\Bigl(\frac{1+\wp}{d+1}P_++\frac{1-\wp}{d-1}P_-\Bigr), 
	\end{align}
	where 
	\begin{equation}
	w_\xi:=\tr(\Pi_\xi), \quad 
	\wp:= \frac{\sum_\xi w_\xi \wp_\xi}{\sum_{\xi} w_\xi}, \quad \wp_\xi:= \frac{\tr(\Pi_\xi^2)}{[\tr(\Pi_\xi)]^2}.
	\end{equation}
	Here $\wp_\xi$ may be interpreted as the purity of $\Pi_\xi$, recall that the purity of a positive operator $A$ is defined as $\tr(A^2)/[\tr(A)]^2$. 
	Therefore, $\wp$ is the (weighted) average purity of $\{\Pi_\xi\}$. Note that $1/d\leq \wp\leq 1$. The lower bound is saturated iff  all $\Pi_\xi$ are proportional to the identity, in which case $\{\Pi_\xi\}$ forms a trivial generalized 2-design. 
	We will assume $\wp>1/d$ except when stated otherwise.  
	The upper bound $\wp\leq 1$ is saturated   iff all  $\Pi_\xi$ are rank one and thus can be expressed in the form $\Pi_\xi=w_\xi (|\psi_\xi\>\<\psi_\xi|)$, in which case  \eref{eq:g2designSupp} reduces to \eref{eq:2designSupp}. So a generalized 2-design of purity~1 is a 2-design, and vice versa.  Many generalized 2-designs of lower purities can also be constructed from 2-designs. For example, let $\{\rho_\xi\}$ be the SIC defined by the Bloch vectors in \eref{eq:SICqubitBloch}, then $\{\rho_\xi+a\}$ is a generalized 2-design for any positive constant $a$.

	Taking the partial trace in \eref{eq:g2designSupp} yields
	\begin{equation}
	\sum_\xi\Pi_\xi= \frac{\sum_{\xi}w_\xi}{d},
	\end{equation} 
	which implies the following proposition. 
	\begin{proposition}
		Any generalized 2-design forms a POVM after proper rescaling. 
	\end{proposition}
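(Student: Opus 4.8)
The plan is to recover the stated POVM condition directly from the defining relation \eqref{eq:g2designSupp} by taking a partial trace over one of the two tensor factors. The idea is that this operation turns the quadratic left-hand side into the linear sum $\sum_\xi\Pi_\xi$, while the right-hand side collapses to a multiple of the identity; positivity of the $\Pi_\xi$ then lets me rescale to obtain a genuine POVM. Because both $\Pi_\xi\otimes\Pi_\xi$ and the projectors $P_\pm$ are invariant under exchanging the two factors, it does not matter which factor I trace out, so I may fix the second one.

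First I would evaluate the partial trace of the left-hand side: tracing out the second factor sends $\Pi_\xi\otimes\Pi_\xi$ to $\Pi_\xi\,\tr(\Pi_\xi)=w_\xi\Pi_\xi$, and after dividing by $\tr(\Pi_\xi)=w_\xi$ each term reduces to $\Pi_\xi$, so the sum becomes $\sum_\xi\Pi_\xi$. Next I would evaluate the right-hand side using $P_\pm=\tfrac12(I\pm V)$, where $V=\sum_{j,k}|jk\>\<kj|$ is the swap operator. The two identities I need are $\tr_2(I)=dI$ and $\tr_2(V)=\sum_{j,k}|j\>\<k|\,\delta_{jk}=\sum_j|j\>\<j|=I$, which give $\tr_2(P_+)=\tfrac{d+1}{2}I$ and $\tr_2(P_-)=\tfrac{d-1}{2}I$. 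The factors $d\pm1$ then cancel against the denominators in \eqref{eq:g2designSupp}, and the two contributions combine as $\tfrac{\sum_\xi w_\xi}{d}\cdot\tfrac12[(1+\wp)+(1-\wp)]I=\tfrac{\sum_\xi w_\xi}{d}I$. Equating the two sides yields $\sum_\xi\Pi_\xi=\tfrac{\sum_\xi w_\xi}{d}I$, exactly the displayed identity.

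The argument then concludes in one line: since each $\Pi_\xi$ is positive semidefinite and their sum equals the positive scalar $\sum_\xi w_\xi/d$ times the identity, rescaling every operator by $d/\sum_\xi w_\xi$ produces positive operators that sum to $I$, i.e.\ a POVM. There is no serious obstacle here --- the statement is essentially a bookkeeping consequence of the definition --- and the only step that warrants any care is the partial-trace evaluation of the swap operator, which supplies the identity $\tr_2(V)=I$ that drives the cancellation.
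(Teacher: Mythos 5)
Your proof is correct and follows exactly the paper's route: the paper likewise obtains $\sum_\xi\Pi_\xi=\frac{\sum_\xi w_\xi}{d}$ by taking the partial trace of \eqref{eq:g2design}, and you have simply filled in the computation of $\tr_2(P_\pm)=\frac{d\pm1}{2}$ that the paper leaves implicit. No issues.
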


	In the rest of this section, we assume that
	$\{\Pi_\xi\}$ is a set of nonzero positive operators in dimension~$d$ with  normalization  $\sum_{\xi}\tr(\Pi_\xi)=d$. Under this convention,  $\{\Pi_\xi\}$ forms a POVM whenever it is a generalized 2-design. 
	The general case can be analyzed  by proper rescaling.

	The following proposition provides a simple characterization of  generalized 2-designs. 
	\begin{proposition}\label{pro:g2designBound}
		Suppose $\{\Pi_\xi\}$ has purity $\wp$. Then 
		\begin{align}
		\sum_{\xi,\eta}\frac{[\tr (\Pi_\xi\Pi_\eta)]^2}{\tr(\Pi_\xi)\tr(\Pi_\eta)}
		\geq &\frac{d^2(1+\wp^2)-2d\wp}{d^2-1}, \label{eq:g2designBound}
		\end{align}
		and the lower bound is saturated iff $\{\Pi_\xi\}$ is a generalized 2-design. 
	\end{proposition}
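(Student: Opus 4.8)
The plan is to recognize the left-hand side as the squared Hilbert--Schmidt norm of a single positive operator on $\caH^{\otimes2}$, and then to minimize that norm subject to two scalar invariants that the hypotheses fix. Define
\begin{equation}
T=\sum_\xi \frac{\Pi_\xi\otimes\Pi_\xi}{\tr(\Pi_\xi)},
\end{equation}
which is positive semidefinite. Since $\tr[(\Pi_\xi\otimes\Pi_\xi)(\Pi_\eta\otimes\Pi_\eta)]=[\tr(\Pi_\xi\Pi_\eta)]^2$, the left-hand side of \eref{eq:g2designBound} is exactly $\tr(T^2)$. Thus the entire claim reduces to a sharp lower bound on $\tr(T^2)$, with the generalized 2-design condition \eref{eq:g2designSupp} being precisely the statement that $T$ attains a specific form.

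Next I would extract the two invariants that the data pin down. Using $\tr(\Pi_\xi\otimes\Pi_\xi)=[\tr(\Pi_\xi)]^2$ together with the normalization $\sum_\xi\tr(\Pi_\xi)=d$ gives $\tr(T)=d$; using the swap operator $V$ and $\tr[(\Pi_\xi\otimes\Pi_\xi)V]=\tr(\Pi_\xi^2)=\wp_\xi[\tr(\Pi_\xi)]^2$ gives $\tr(TV)=\sum_\xi\wp_\xi\tr(\Pi_\xi)=d\wp$. Writing $P_\pm=(1\pm V)/2$, these combine into
\begin{equation}
\tr(TP_+)=\frac{d(1+\wp)}{2},\qquad \tr(TP_-)=\frac{d(1-\wp)}{2}.
\end{equation}

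The core estimate is a block decomposition along $\caH^{\otimes2}=\caH_+\oplus\caH_-$. Writing $T=P_+TP_++P_+TP_-+P_-TP_++P_-TP_-$ and expanding $\tr(T^2)$, all mixed terms vanish except a cross contribution $2\tr[(P_+TP_-)(P_-TP_+)]$, which is nonnegative because $T$ is Hermitian; hence $\tr(T^2)\geq\tr[(P_+TP_+)^2]+\tr[(P_-TP_-)^2]$. On each sector I apply the elementary bound $\tr(A^2)\geq[\tr(A)]^2/\dim$ (Cauchy--Schwarz on the eigenvalues), with $\dim\caH_+=d(d+1)/2$ and $\dim\caH_-=d(d-1)/2$. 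Substituting the two invariants yields $\tr(T^2)\geq\frac{d(1+\wp)^2}{2(d+1)}+\frac{d(1-\wp)^2}{2(d-1)}$, and combining over the common denominator $d^2-1$ and expanding $(1\pm\wp)^2$ collapses this to $[d^2(1+\wp^2)-2d\wp]/(d^2-1)$, the stated bound.

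Finally I would track the equality case. Equality forces simultaneously $P_+TP_-=0$ (vanishing cross term) and, by the Cauchy--Schwarz equality condition, $P_+TP_+\propto P_+$ and $P_-TP_-\propto P_-$; thus $T=\lambda_+P_++\lambda_-P_-$. The fixed traces then force $\lambda_+=(1+\wp)/(d+1)$ and $\lambda_-=(1-\wp)/(d-1)$, which is exactly \eref{eq:g2designSupp} under the normalization $\sum_\xi\tr(\Pi_\xi)=d$, so equality holds iff $\{\Pi_\xi\}$ is a generalized 2-design. I expect the only conceptual obstacle to be the opening move itself---seeing that the sum equals $\tr(T^2)$ and that precisely two invariants ($\tr T$ and $\tr TV$) govern it; once $T$ is in hand, the block argument and the equality bookkeeping are routine, the sole point needing care being that both Cauchy--Schwarz equalities \emph{and} the vanishing off-diagonal block together reproduce the exact coefficients of the generalized 2-design form rather than merely some diagonal operator.
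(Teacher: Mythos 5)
Your proposal is correct and follows essentially the same route as the paper's own proof: the operator $T$ is the paper's $\caM=\sum_\xi\Pi_\xi\otimes\Pi_\xi/\tr(\Pi_\xi)$, the two invariants $\tr(TP_\pm)=d(1\pm\wp)/2$ are the traces $\tr(\caM_\pm)$ computed there, and the chain $\tr(T^2)\geq\tr(\caM_+^2)+\tr(\caM_-^2)\geq[\tr\caM_+]^2/\dim\caH_++[\tr\caM_-]^2/\dim\caH_-$ together with the equality analysis ($P_+TP_-=0$ and $\caM_\pm\propto P_\pm$) is identical. The only cosmetic difference is that you route the trace computations through the swap operator $V$ rather than evaluating $\tr(\caM_\pm)$ directly.
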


	\begin{proof} 	Let 
		\begin{align}
		\caM:=\sum_{\xi}\frac{ \Pi_\xi \otimes \Pi_\xi}{\tr (\Pi_\xi)},\quad \caM_\pm:=P_\pm\caM P_\pm.
		\end{align}
		Then
		\begin{equation}
		\tr(\caM_+ )=\frac{1}{2}\sum_{\xi}\frac{ [\tr(\Pi_\xi)]^2+\tr(\Pi_\xi^2)}{\tr(\Pi_\xi)}=\frac{1}{2}d(1+\wp),\quad  \tr(\caM_- )=\frac{1}{2}\sum_{\xi}\frac{ [\tr(\Pi_\xi)]^2-\tr(\Pi_\xi^2)}{\tr(\Pi_\xi)}=\frac{1}{2}d(1-\wp). 
		\end{equation}
		Therefore, 
		\begin{align}
		\tr(\caM^2)\geq& \tr(\caM_+^2)+\tr(\caM_-^2)\geq \Bigl(\frac{1+\wp}{d+1}\Bigr)^2\tr(P_+^2)+\Bigl(\frac{1-\wp}{d-1}\Bigr)^2\tr(P_-^2)=
		\frac{d(1+\wp)^2}{2(d+1)}+\frac{d(1-\wp)^2}{2(d-1)}\nonumber\\
		=&\frac{d^2(1+\wp^2)-2d\wp}{d^2-1}. 
		\end{align}
		Here the first inequality is saturated iff $\caM=\caM_++\caM_-$, and the second one is saturated iff $\caM_+$ and $\caM_-$ are proportional to $P_+$ and $P_-$, respectively. The two inequalities are saturated simultaneously iff 
		\eref{eq:g2designSupp} with $\sum_\xi w_\xi=d$ is satisfied, in which case  $\{\Pi_\xi\}$ is a generalized 2-design. 	
	\end{proof}

	Next, we clarify the connection between  generalized  2-designs and  tight IC measurements introduced by Scott \cite{Scot06}, which is helpful to studying (weakly) Fisher-symmetric measurements. To this end, we need to introduce the formalism of superoperators following \rscite{ZhuE11,Zhu14IOC,ApplFZ15G}. With respect to the Hilbert-Schmidt inner product, the set of operators on $\caH$ forms a Hilbert space. The vectors in this space are denoted using the double-ket notation; for example, the operator $A$ is denoted by $\dket{A}$.  The inner product between $A$ and $B$ is written as $\dinner{A}{B}:=\tr(A^\dag B)$. Operators acting on this space, so called superoperators, can be constructed from  outer products, such as $\dket{A}\dbra{B}$,  which induces the linear transformation $\dket{C}\to\dket{A}\dinner{B}{C}$. The identity superoperator is denoted by $\mathbf{I}$.

	A POVM  $\{\Pi_\xi\}$ on $\caH$ is \emph{tight IC}  \cite{Scot06,ZhuE11,Zhu14T} if it satisfies the following equation
	\begin{align}
	\sum_{\xi}\frac{ \douter{\Pi_\xi}{\Pi_\xi}}{\tr(\Pi_\xi)}=&\frac{d\wp-1}{d^2-1}\mathbf{I}+\frac{d-\wp}{d^2-1}\douter{1}{1}, \label{eq:TightIC}
	\end{align}
	where $\wp$ is the purity of $\{\Pi_\xi\}$. 
	Note that the requirement  $\sum_{\xi }\Pi_\xi=1$ for a POVM is satisfied automatically if  \eref{eq:TightIC} holds. According to Lemma~1 in \rcite{ApplFZ15G}, \eref{eq:TightIC} is equivalent to \eref{eq:g2designSupp} with $\sum_{\xi}w_\xi=d$.  This observation leads to the following proposition. 
	\begin{proposition}\label{pro:g2designTIC}
		A POVM $\{\Pi_\xi\}$ is tight IC iff it is a generalized 2-design.
	\end{proposition}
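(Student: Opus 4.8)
The plan is to show that, under the running normalization $\sum_\xi\tr(\Pi_\xi)=d$, the two defining identities \eref{eq:TightIC} and \eref{eq:g2designSupp} are equivalent by pairing each against arbitrary test operators and reducing both to the \emph{same} scalar identity. The conceptual point I would stress is that the superoperator $\douter{\Pi_\xi}{\Pi_\xi}$ and the two-copy operator $\Pi_\xi\otimes\Pi_\xi$ are merely two encodings of one bilinear form: for Hermitian $X,Y$ one has $\dbra{X}\douter{\Pi_\xi}{\Pi_\xi}\dket{Y}=\tr(\Pi_\xi X)\tr(\Pi_\xi Y)=\tr[(X\otimes Y)(\Pi_\xi\otimes\Pi_\xi)]$. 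Hence the left-hand sides of the two equations, once paired, produce the identical quantity $\sum_\xi\tr(\Pi_\xi X)\tr(\Pi_\xi Y)/\tr(\Pi_\xi)$, and the whole problem reduces to comparing the two right-hand sides.

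First I would pair \eref{eq:g2designSupp} against $X\otimes Y$, using $\tr[P_\pm(X\otimes Y)]=\tfrac12[\tr(X)\tr(Y)\pm\tr(XY)]$ together with the fact that the prefactor $\sum_\xi w_\xi/d$ equals $1$ in this normalization, to obtain
\[
\sum_\xi\frac{\tr(\Pi_\xi X)\tr(\Pi_\xi Y)}{\tr(\Pi_\xi)}=\frac{1+\wp}{2(d+1)}\bigl[\tr(X)\tr(Y)+\tr(XY)\bigr]+\frac{1-\wp}{2(d-1)}\bigl[\tr(X)\tr(Y)-\tr(XY)\bigr].
\]
Next I would sandwich \eref{eq:TightIC} between $\dbra{X}$ and $\dket{Y}$, using $\dinner{X}{Y}=\tr(XY)$ and $\dinner{X}{1}=\tr(X)$, to obtain
\[
\sum_\xi\frac{\tr(\Pi_\xi X)\tr(\Pi_\xi Y)}{\tr(\Pi_\xi)}=\frac{d\wp-1}{d^2-1}\tr(XY)+\frac{d-\wp}{d^2-1}\tr(X)\tr(Y).
\]

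It then remains only to check that these two right-hand sides agree for all $X,Y$, which is a short computation: the coefficient of $\tr(XY)$ in the first expression is $\frac{(1+\wp)(d-1)-(1-\wp)(d+1)}{2(d^2-1)}=\frac{d\wp-1}{d^2-1}$, and the coefficient of $\tr(X)\tr(Y)$ is $\frac{(1+\wp)(d-1)+(1-\wp)(d+1)}{2(d^2-1)}=\frac{d-\wp}{d^2-1}$, matching the second expression exactly. Since product operators $X\otimes Y$ with $X,Y$ Hermitian span the space on which both identities live, equality of the associated bilinear forms for all $X,Y$ is equivalent to the operator identities themselves; this establishes tight IC $\Leftrightarrow$ generalized 2-design. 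The clean matching of the superoperator pairing with the two-copy pairing is exactly the content of Lemma~1 in \rcite{ApplFZ15G}, which I would cite to shorten the writeup.

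Finally I would dispose of the normalization bookkeeping. The definition of tight IC presupposes that $\{\Pi_\xi\}$ is a POVM, whereas a generalized 2-design is a priori only a set of positive operators; but taking the partial trace of \eref{eq:g2designSupp} yields $\sum_\xi\Pi_\xi=1$ under the convention $\sum_\xi\tr(\Pi_\xi)=d$, so every generalized 2-design in this normalization is automatically a POVM and the two notions are compared on equal footing. The one genuine subtlety is the identification underlying the first paragraph — getting the transposes and conjugations right in the isomorphism between superoperators on $\mathcal{L}(\caH)$ and operators on $\caH^{\otimes2}$ — but the bilinear-form route above sidesteps this entirely, since it never constructs the isomorphism explicitly and instead compares scalar pairings directly.
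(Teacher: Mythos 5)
Your argument is correct and is essentially the paper's own: the paper establishes the proposition by noting that \eref{eq:TightIC} is equivalent to \eref{eq:g2designSupp} with $\sum_\xi w_\xi=d$ (deferring that equivalence to Lemma~1 of \rcite{ApplFZ15G}), and your bilinear-form pairing against Hermitian $X\otimes Y$ is precisely an explicit verification of that equivalence, with the coefficient matching and the normalization bookkeeping checking out. No gaps.
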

	When $\wp>1/d$, the superoperator in \eref{eq:TightIC} has full rank. It follows that any nontrivial generalized 2-design $\{\Pi_\xi\}$ spans the whole operator space and thus
	has at least $d^2$ elements.  Generalized 2-designs saturating  the lower bound are called \emph{minimal}. They are 
	characterized by  the following proposition, which  follows from Theorem~1 in \rcite{ApplFZ15G} and Lemma~1 in \rcite{Zhu14T}. 
	\begin{proposition}\label{pro:g2designMin}
		Any nontrivial generalized 2-design 
		has at least  $d^2$ elements.  The set $\{\Pi_\xi\}$ is a minimal generalized 2-design of purity $\wp>1/d$ iff
		\begin{align}
		\tr(\Pi_\xi\Pi_\eta)=
		\frac{d\wp-1}{d^2-1}\sqrt{\tr(\Pi_\xi)\tr(\Pi_\eta)}\delta_{\xi\eta}+\frac{d-\wp}{d^2-1}\tr(\Pi_\xi)\tr(\Pi_\eta)\quad \forall \xi, \eta. \label{eq:g2designMin}
		\end{align}
	\end{proposition}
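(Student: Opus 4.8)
The plan is to recast everything through the superoperator formalism of \pref{pro:g2designTIC}. Write $w_\xi=\tr(\Pi_\xi)$ and introduce the analysis operator $V$ sending $\dket{X}$ to the $n$-component vector with entries $\dinner{\Pi_\xi}{X}/\sqrt{w_\xi}$, where $n$ is the number of elements. Its frame superoperator $S:=V^\dag V=\sum_\xi\douter{\Pi_\xi}{\Pi_\xi}/w_\xi$ acts on the $d^2$-dimensional space of operators on $\caH$, while the Gram matrix $G:=VV^\dag$ has entries $G_{\xi\eta}=\tr(\Pi_\xi\Pi_\eta)/\sqrt{w_\xi w_\eta}$. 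Putting $\lambda:=(d\wp-1)/(d^2-1)$ and using the identity $(d-\wp)/(d^2-1)=(1-\lambda)/d$, \pref{pro:g2designTIC} together with \eref{eq:TightIC} states that $\{\Pi_\xi\}$ is a generalized 2-design iff $S=\lambda\mathbf{I}+\frac{1-\lambda}{d}\douter{1}{1}$, whereas the target identity \eref{eq:g2designMin} says precisely that $G_{\xi\eta}=\lambda\delta_{\xi\eta}+\frac{1-\lambda}{d}s_\xi s_\eta$ with $s_\xi:=\sqrt{w_\xi}$, equivalently that $G$ carries the eigenvalue $1$ on $\ket{s}$ and $\lambda$ on $\ket{s}^{\perp}$ (note $\|s\|^2=\sum_\xi w_\xi=d$). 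The lower bound $n\ge d^2$ is then immediate, since $\wp>1/d$ gives $\lambda>0$, so $S$ has full rank and the $\dket{\Pi_\xi}$ must span the entire operator space.

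For the forward implication I would use that minimality, $n=d^2$, makes $V$ square and invertible, so that $S$ and $G$ are isospectral. From the design form of $S$ one reads off that $\dket{1}$ is its eigenvector with eigenvalue $\lambda+\frac{1-\lambda}{d}\dinner{1}{1}=1$ and that every traceless operator carries the eigenvalue $\lambda$; hence $G$ has the simple eigenvalue $1$ (simple because $\lambda<1$) together with $\lambda$ of multiplicity $d^2-1$. The only remaining point is to identify the eigenvector of $G$ for the eigenvalue $1$, and here the POVM property $\sum_\eta\Pi_\eta=1$ does the work: a one-line computation gives $(G\ket{s})_\xi=\tr(\Pi_\xi)/\sqrt{w_\xi}=s_\xi$, i.e.\ $G\ket{s}=\ket{s}$. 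Simplicity of the eigenvalue $1$ then forces $G$ to act as $1$ on $\ket{s}$ and as $\lambda$ on $\ket{s}^{\perp}$, which is exactly \eref{eq:g2designMin}.

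For the converse I would start from \eref{eq:g2designMin} and first recover $n=d^2$ by a short self-consistency check: setting $\eta=\xi$ gives $\wp_\xi=\lambda/w_\xi+(1-\lambda)/d$, and averaging with weights $w_\xi$ under the normalization $\sum_\xi w_\xi=d$ yields $\wp=[\lambda(n-1)+1]/d$; equating this with $\lambda=(d\wp-1)/(d^2-1)$ forces $n=d^2$. The stated $G$ then has eigenvalues $\lambda>0$ and $1$, hence full rank, so the $\dket{\Pi_\xi}$ form a basis, $V$ is invertible, and $S$ inherits the spectrum $\{1,\lambda\}$. Since $S\dket{1}=\dket{1}$ holds automatically from $\sum_\xi\Pi_\xi=1$ and the eigenvalue $1$ is simple, $S=\lambda\mathbf{I}+\frac{1-\lambda}{d}\douter{1}{1}$, which is the generalized-2-design condition; together with $n=d^2$ this makes $\{\Pi_\xi\}$ minimal. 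This reproduces, in the present setting, Theorem~1 of \rcite{ApplFZ15G} and Lemma~1 of \rcite{Zhu14T}.

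The main obstacle, such as it is, is conceptual rather than computational: the design condition is naturally a statement about the frame superoperator $S$ on operator space, while the desired inner-product formula is a statement about the $n\times n$ Gram matrix $G$, and these two objects become equivalent only through the isospectrality that holds precisely in the minimal case. The bridge is that the distinguished eigenvalue $1$ is carried by $\dket{1}$ on the $S$ side and by $\ket{s}$ on the $G$ side, both singled out by the single normalization $\sum_\xi\Pi_\xi=1$; once this correspondence is in place, the purity bookkeeping fixing $n=d^2$ and the simplicity of the eigenvalue $1$ are routine.
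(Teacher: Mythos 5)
Your frame-operator/Gram-matrix argument is sound and essentially self-contained, which is more than the paper itself provides: the paper simply cites Theorem~1 of \rcite{ApplFZ15G} and Lemma~1 of \rcite{Zhu14T} for this proposition, and only supplies (in the Remark that follows it) the spanning and resolution-of-identity consequences of \eref{eq:g2designMin}. Your identification of the design condition with $S=\lambda\mathbf{I}+\frac{1-\lambda}{d}\douter{1}{1}$, the rank bound from $\lambda>0$, the isospectrality of $S=V^\dag V$ and $G=VV^\dag$ in the minimal case, and the purity bookkeeping that forces $n=d^2$ in the converse are all correct (note $\lambda\le 1/(d+1)<1$ since $\wp\le 1$, so the eigenvalue $1$ is indeed simple).

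There is one genuine, though small, gap in the converse. The standing convention in this section is only $\sum_\xi\tr(\Pi_\xi)=d$; the resolution of the identity $\sum_\xi\Pi_\xi=1$ is \emph{not} assumed but must be derived from \eref{eq:g2designMin} — this is exactly what the paper's Remark after the proposition does. You invoke ``$S\dket{1}=\dket{1}$ holds automatically from $\sum_\xi\Pi_\xi=1$'', which at that point is circular: without it you only know that $S$ fixes $\dket{T}$ with $T:=\sum_\xi\Pi_\xi$ (since $V^\dag\ket{s}=\dket{T}$ and $G\ket{s}=\ket{s}$), and the generalized-2-design condition specifically requires the distinguished eigenvector to be $\dket{1}$. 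The fix is one line: summing \eref{eq:g2designMin} over $\eta$ gives $\tr(\Pi_\xi T)=\lambda w_\xi+(1-\lambda)w_\xi=w_\xi=\tr(\Pi_\xi)$, so $\tr[\Pi_\xi(T-1)]=0$ for all $\xi$; since you have already shown the $\dket{\Pi_\xi}$ span the operator space, $T=1$ follows, and the rest of your argument goes through unchanged.
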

	\begin{remark}
		If  \eref{eq:g2designMin} holds, then  $\{\Pi_\xi\}$ spans the operator space and  $\sum_{\xi}\Pi_\xi=1$. To see this, let
		$L_\xi:=\Pi_\xi/\sqrt{\tr(\Pi_\xi)}$, then  $\tr(L_\xi)=\sqrt{\tr(\Pi_\xi)}$ and 
		\begin{equation}
		\tr(L_\xi L_\eta) =\frac{d\wp-1}{d^2-1} \delta_{\xi\eta} +\frac{d-\wp}{d^2-1} \tr(L_\xi)\tr(L_\eta). 
		\end{equation}
		Observing that the Gram matrix of $\{L_\xi\}$ is positive definite whenever $\wp>1/d$, we conclude that $L_\xi$ are linearly independent and span the whole operator space and that the same holds for $\{\Pi_\xi\}$. Summing over $\xi$ in \eref{eq:g2designMin}, we can deduce that $\sum_{\xi}\Pi_\xi$ is proportional to the identity. Then it is straightforward to show that $\sum_{\xi}\tr(\Pi_\xi)=d$, which implies that   $\sum_{\xi}\Pi_\xi=1$. 
	\end{remark}

	\subsection{Generalized SIC}
	A POVM $\{\Pi_\xi\}$ is a generalized SIC \cite{Appl07,GourK14,Zhu14T} if it has $d^2$ elements and satisfies
	\begin{equation}\label{eq:gSICsupp}
	\tr (\Pi_\xi\Pi_\eta)=\alpha \delta_{\xi\eta}+\beta \quad \forall \xi,\eta
	\end{equation}
	for some positive constants $\alpha,\beta$. Together with the requirement $\sum_{\xi}\Pi_\xi=1$, \eref{eq:gSICsupp} implies that 
	\begin{equation}
	\tr(\Pi_\xi)=\alpha+d^2\beta=\frac{1}{d},\quad  \tr(\Pi_\xi^2)=\alpha+\beta\quad \forall \xi.
	\end{equation}
	Therefore, all $\Pi_\xi$ have the same trace of $1/d$ and the same purity of $\wp=d^2(\alpha+\beta)$.  The constants $\alpha$ and $\beta$ are determined by the purity as
	\begin{equation}
	\alpha=\frac{d\wp-1}{d(d^2-1)},\quad \beta=\frac{d-\wp}{d^2(d^2-1)}.
	\end{equation}
	In this paper,   $\{c\Pi_\xi\}$ for any positive constant $c$ is also called a generalized SIC whenever $\{\Pi_\xi\}$ is. 
	By virtue of \pref{pro:g2designBound} or \ref{pro:g2designMin}, it is straightforward to verify that any generalized SIC is a generalized 2-design.
	Conversely, if   $\{\Pi_\xi\}$ is a minimal  generalized 2-design with
	all $\Pi_\xi$ having the same purity, then \eref{eq:g2designMin} implies that all $\Pi_\xi$ have the same trace of $1/d$, so that $\{\Pi_\xi\}$ is a generalized SIC (cf.~\rcite{Zhu14T}), which satisfies
	\begin{align}
	\tr (\Pi_\xi\Pi_\eta)=\frac{d\wp-1}{d(d^2-1)}\delta_{\xi\eta}+\frac{d-\wp}{d^2(d^2-1)}\quad \forall \xi, \eta. 
	\end{align}  
	\begin{proposition}\label{pro:gSICg2design}
		Any minimal generalized 2-design whose  elements have the same purity
		is 	a generalized SIC, and vice versa. 	 	
	\end{proposition}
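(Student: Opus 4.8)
The plan is to prove both implications by leveraging the characterization of minimal generalized 2-designs in \pref{pro:g2designMin}, namely that a set $\{\Pi_\xi\}$ of $d^2$ nonzero positive operators with purity $\wp>1/d$ is a minimal generalized 2-design iff \eref{eq:g2designMin} holds. Throughout I use the nontriviality convention $\wp>1/d$, which is what makes the relevant superoperator full rank.

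For the forward direction, suppose $\{\Pi_\xi\}$ is a minimal generalized 2-design all of whose elements share the same purity; then $\wp_\xi=\wp$ for every $\xi$, so $\tr(\Pi_\xi^2)=\wp\,[\tr(\Pi_\xi)]^2$. First I would specialize \eref{eq:g2designMin} to the diagonal $\xi=\eta$, which expresses $\tr(\Pi_\xi^2)$ as an explicit function of $w_\xi:=\tr(\Pi_\xi)$. Equating the two expressions for $\tr(\Pi_\xi^2)$ and dividing through by $w_\xi>0$ yields the single scalar equation
\begin{equation}
\Bigl(\wp-\tfrac{d-\wp}{d^2-1}\Bigr) w_\xi=\tfrac{d\wp-1}{d^2-1},
\end{equation}
whose coefficient on the left simplifies to $d(d\wp-1)/(d^2-1)$. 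Since $\wp>1/d$ the factor $d\wp-1$ is strictly positive and cancels, forcing $w_\xi=1/d$ for every $\xi$. Substituting this constant trace back into \eref{eq:g2designMin} collapses its right-hand side to $\alpha\delta_{\xi\eta}+\beta$ with $\alpha=(d\wp-1)/[d(d^2-1)]>0$ and $\beta=(d-\wp)/[d^2(d^2-1)]>0$, which is precisely \eref{eq:gSICsupp}; hence $\{\Pi_\xi\}$ is a generalized SIC.

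For the converse, I would start from a generalized SIC $\{\Pi_\xi\}$, recalling from the discussion preceding the statement that such a POVM has $d^2$ elements, common trace $\tr(\Pi_\xi)=1/d$, and common purity $\wp=d^2(\alpha+\beta)$. Solving the linear relations $\alpha+d^2\beta=1/d$ and $\alpha+\beta=\wp/d^2$ recovers the same $\alpha$ and $\beta$ as functions of $\wp$. Inserting $\tr(\Pi_\xi)=\tr(\Pi_\eta)=1/d$ into the right-hand side of \eref{eq:g2designMin} then reproduces exactly $\alpha\delta_{\xi\eta}+\beta=\tr(\Pi_\xi\Pi_\eta)$, so \eref{eq:g2designMin} is verified; since there are $d^2$ elements, \pref{pro:g2designMin} certifies that $\{\Pi_\xi\}$ is a minimal generalized 2-design.

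The main obstacle is the forward direction, and specifically the step that upgrades \emph{equal purities} to \emph{equal traces}. The equal-purity hypothesis is what lets me replace the average purity $\wp$ appearing in \eref{eq:g2designMin} by the individual purities $\wp_\xi$, thereby closing the diagonal relation into a genuine constraint on each $w_\xi$; without it the traces need not be constant and the set need not be a generalized SIC. The positivity $\wp>1/d$ is essential both to divide out $d\wp-1$ in the argument above and to guarantee $\alpha,\beta>0$ in the resulting generalized SIC.
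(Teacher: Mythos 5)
Your proof is correct and takes essentially the same route as the paper: both directions rest on the characterization of minimal generalized 2-designs in \pref{pro:g2designMin}, with the equal-purity hypothesis forcing $\tr(\Pi_\xi)=1/d$ via the diagonal case of \eref{eq:g2designMin}, and the converse verified by substituting the common trace and purity of a generalized SIC back into that equation. You simply spell out the algebra that the paper leaves implicit.
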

	\Pref{pro:gSICg2design} implies the following corollary proved in \rscite{ReneBSC04,Scot06}; see also \rscite{Zhu14T,ApplFZ15G}.
	\begin{corollary}\label{cor:SIC2design}
		Any minimal 2-design is 	a  SIC, and vice versa. 	 	
	\end{corollary}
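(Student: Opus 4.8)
The plan is to obtain \crref{cor:SIC2design} as the purity-one specialization of \pref{pro:gSICg2design}, which has just been established. The key observation is that a 2-design is precisely a generalized 2-design of purity $\wp=1$: as noted in the discussion following \eref{eq:g2designSupp}, the bound $\wp\leq1$ is saturated iff every $\Pi_\xi$ is rank one, in which case $\Pi_\xi=w_\xi(|\psi_\xi\>\<\psi_\xi|)$ and \eref{eq:g2designSupp} reduces to \eref{eq:2designSupp}. Consequently a minimal 2-design is exactly a minimal generalized 2-design of purity $1$.

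First I would verify that the hypothesis of \pref{pro:gSICg2design}---that all elements share the same purity---holds automatically in this setting. For a rank-one positive operator $\Pi_\xi=w_\xi(|\psi_\xi\>\<\psi_\xi|)$ one has $\tr(\Pi_\xi)=w_\xi$ and $\tr(\Pi_\xi^2)=w_\xi^2$, so $\wp_\xi=1$ for every $\xi$. Thus every element of a 2-design has purity exactly $1$, and the equal-purity condition is satisfied trivially. Applying \pref{pro:gSICg2design} then shows that a minimal 2-design is a generalized SIC.

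Next I would identify this generalized SIC as an ordinary SIC by substituting $\wp=1$ into the defining parameters. A generalized SIC of purity $1$ has rank-one elements $\Pi_\xi=w_\xi(|\psi_\xi\>\<\psi_\xi|)$ with common trace $\tr(\Pi_\xi)=1/d$, forcing $w_\xi=1/d$. Setting $\wp=1$ in the expressions for $\alpha$ and $\beta$ gives $\alpha=1/[d(d+1)]$ and $\beta=1/[d^2(d+1)]$, so that
\begin{equation}
|\<\psi_\xi|\psi_\eta\>|^2=d^2\tr(\Pi_\xi\Pi_\eta)=d^2(\alpha\delta_{\xi\eta}+\beta)=\frac{d\delta_{\xi\eta}+1}{d+1},
\end{equation}
which is exactly the SIC condition. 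The converse is immediate: a SIC satisfies \eref{eq:gSICsupp} with rank-one (hence purity-one) elements, so it is a generalized SIC of purity $1$, and the ``vice versa'' direction of \pref{pro:gSICg2design} identifies it as a minimal generalized 2-design of purity $1$, i.e., a minimal 2-design.

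I do not expect any serious obstacle here; the argument is entirely bookkeeping once \pref{pro:gSICg2design} is in hand. The only point deserving a word of care is the implication ``purity one $\Rightarrow$ rank one'' for positive operators, but this is already recorded in the discussion following \eref{eq:g2designSupp} and may simply be cited rather than reproved.
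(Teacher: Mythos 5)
Your proposal is correct and follows exactly the route the paper intends: the paper states that \crref{cor:SIC2design} is an immediate consequence of \pref{pro:gSICg2design}, obtained by specializing to purity $\wp=1$, which is precisely your argument (with the bookkeeping for $\alpha$, $\beta$, and the equal-purity hypothesis filled in correctly).
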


	\subsection{Construction of generalized 2-designs from unitary 2-designs}

	In this section we introduce a method for constructing generalized 2-designs from unitary 2-designs. This method allows us to construct generalized 2-designs all of whose elements are proportional to  projectors of a given rank, which are useful in constructing tight coherent measurements introduced in the main text. 
	
	A set of weighted  unitary operators $\{U_\xi,w_\xi\}$ on $\caH$ is a (weighted)  \emph{unitary $t$-design} \cite{DankCEL09, GrosAE07} if it satisfies
	\begin{equation}\label{eq:U2design}
	\sum _\xi w_\xi U_\xi^{\otimes t} M(U_\xi^{\otimes t})^\dag =\sum_{\xi }w_\xi\int \rmd U U^{\otimes t}M(U^{\otimes t})^\dag
	\end{equation}
	for any operator $M$ acting on $\caH^{\otimes t}$.  Here the symbol $\dag$ denotes the Hermitian conjugate, and
	the integral is taken  with respect to the normalized Haar measure over the whole unitary group. It is known that  a unitary $t$-design exists  for any  positive integer $t$, assuming $\caH$ has a finite dimension \cite{RoyS09}. 
	
	Suppose $\{U_\xi,w_\xi\}$ is a unitary 2-design, and $\Pi$ is a positive operator. Let $\Pi_\xi:=w_\xi U_\xi\Pi U_\xi^\dag$, then $\{\Pi_\xi\}$ is a generalized 2-design. To see this, note that 
	\begin{align}
	\sum_{\xi}\frac{ \Pi_\xi \otimes \Pi_\xi}{\tr(\Pi_\xi)}=&\sum_{\xi}\frac{w_\xi(U_\xi\Pi U_\xi^\dag) \otimes (U_\xi\Pi U_\xi^\dag)}{\tr(\Pi)}=\frac{1}{\tr(\Pi)}\sum_{\xi}w_\xi U_\xi^{\otimes 2}\Pi^{\otimes 2} \bigl(U_\xi^{\otimes 2}\bigr)^\dag\nonumber\\
	=&\frac{\sum_\xi w_\xi}{\tr(\Pi)}\int \rmd U U^{\otimes 2}\Pi^{\otimes 2}(U^{\otimes 2})^\dag
	=\alpha P_+ +\beta P_-,
	\end{align}
	where $\alpha$ and $\beta$ are nonnegative constants.
	Here the last equality follows from the fact that the second tensor power of the unitary group on $\caH$ has two inequivalent irreducible components on $\caH^{\otimes2}$, which correspond to the symmetric and antisymmetric subspaces, respectively. This observation shows that  $\{\Pi_\xi\}$ is indeed a generalized 2-design. When $\Pi$ is a pure state, we get a 2-design; when $\Pi$ is a rank-$k$ projector, we get a generalized 2-design all of whose elements are proportional to rank-$k$ projectors.

	\section{Fisher-symmetric measurements at the completely mixed state} 
	In this section we prove \pref{pro:WFSmix} in the main text, which states that a POVM $\{\Pi_\xi\}$ is (weakly) Fisher symmetric at the completely mixed state iff $\{\Pi_\xi\}$ is a (generalized) 2-design. In view of  \pref{pro:g2designTIC},  it suffices to show that a POVM  is weakly Fisher symmetric at the completely mixed state iff it is tight IC. \Crref{cor:FSmixMinOne} is an immediate consequence of  \pref{pro:WFSmix} and \crref{cor:SIC2design}.

	\begin{proof}[Proof of \pref{pro:WFSmix}]
		Let $\{E_a\}_{a=1}^{d^2-1}$ be an orthonormal basis (with respect to the Hilbert-Schmidt inner product) for the space of traceless Hermitian operators. Choose the following affine parametrization on the state space
		\begin{equation}\label{eq:AffinePara}
		\rho(\theta)=\frac{1}{d}+\sum_a \theta_a E_a,
		\end{equation}	
		then $\rho_{,a}=\partial\rho/\partial\theta_a=E_a$. At the completely mixed state, the symmetric logarithmic derivatives read $L_a=d\rho_{,a}=dE_a$, so the quantum Fisher information matrix  has entries
		\begin{equation}
		J_{ab}=d\tr(E_aE_b) =d \delta_{ab}. 
		\end{equation}
		In addition, the Fisher information matrix has entries
		\begin{equation}\label{eq:FisherMixOne}
		I_{ab}=d\sum_{\xi}\frac{ \tr(E_a  \Pi_\xi) \tr(E_b\Pi_\xi)}{\tr (\Pi_\xi)} =d\dbra{E_a}\caF\dket{E_b},
		\end{equation}
		where the superoperator $\caF$ is defined as 
		\begin{equation}
		\caF:=\sum_{\xi}\frac{ \douter{\Pi_\xi}{\Pi_\xi}}{\tr (\Pi_\xi)}. 
		\end{equation}	
		The  POVM is weakly Fisher symmetric at the completely mixed state iff $I$ is proportional to the identity, which holds iff the superoperator $\caF$ has the form of \eref{eq:TightIC}, in which case $\{\Pi_\xi\}$ is a tight IC POVM and a generalized 2-design. The  POVM is  Fisher symmetric iff $I=J/(d+1)$, which holds iff the superoperator $\caF$ has the form of \eref{eq:TightIC} with $\wp=1$.  In that  case, $\{\Pi_\xi\}$ is a 2-design, and vice versa. 
	\end{proof}

	\Pref{pro:WFSmix} can also be proved without resorting to superoperators. Observing that 
	\begin{equation}
	\sum_a \tr(E_a\Pi_\xi)\tr(E_a\Pi_\eta)=\tr(\Pi_\xi\Pi_\eta)-\frac{\tr(\Pi_\xi)\tr(\Pi_\eta)}{d},
	\end{equation}
	we deduce that
	\begin{align}\label{eq:FsiherTrace}
	\tr (I)=\sum_{\xi}\frac{d\tr (\Pi_\xi^2)- [\tr(\Pi_\xi)]^2}{\tr(\Pi_\xi)}=d^2\wp-d, 
	\end{align}
	where $\wp$ is the purity of $\{\Pi_\xi\}$. In addition, 
	\begin{align}\label{eq:FsiherTraceSq}
	\tr (I^2)=&\sum_{a,b}I_{ab}^2= \sum_{a,b}\sum_{\xi,\eta}d^2\frac{ \tr(E_a  \Pi_\xi) \tr(E_b\Pi_\xi)}{\tr (\Pi_\xi)}\frac{ \tr(E_a  \Pi_\eta) \tr(E_b\Pi_\eta)}{\tr (\Pi_\eta)}\nonumber\\
	=&
	\sum_{\xi,\eta}\frac{[d\tr (\Pi_\xi\Pi_\eta)- \tr(\Pi_\xi)\tr(\Pi_\eta)]^2}{\tr(\Pi_\xi)\tr(\Pi_\eta)}=d^2\sum_{\xi,\eta}\frac{[\tr (\Pi_\xi\Pi_\eta)]^2}{\tr(\Pi_\xi)\tr(\Pi_\eta)}-d^2. 
	\end{align}
	The POVM $\{\Pi_\xi\}$ is weakly Fisher symmetric  iff $\tr(I^2)=[\tr(I)]^2/(d^2-1)$, which holds iff the inequality in \eref{eq:g2designBound} is saturated, in which case $\{\Pi_\xi\}$ is a generalized 2-design. The POVM is  Fisher symmetric if in addition $\tr(I)=\tr(J)/(d+1)=d^2-d$, which demands that $\wp=1$, so that  $\{\Pi_\xi\}$ is a 2-design; the converse is also immediate.
	
	\section{Proof of \thref{thm:GMT2}}
	In this section we prove the inequality  $\tr [J^{-1}(\theta)I^{(2)} (\theta)]\leq 3d-3$  in \thref{thm:GMT2} and determine the condition for saturating the inequality. 
	The simple idea can be explained as follows: the optimal POVM is always the union of two POVMs on the symmetric and antisymmetric subspaces, respectively; the value of $\tr(J^{-1} I^{(2)})$ is maximized iff the marginal of each POVM element has the highest purity under the given symmetry. To simplify the notation, the parameter point $\theta$ is omitted.

	\begin{proof}[Proof of \thref{thm:GMT2}]
		The Fisher information matrix $I^{(2)}$ provided by $\{\Pi_\xi\}$ has matrix elements
		\begin{align}
		I^{(2)}_{ab}=&\frac{\tr[(\rho\otimes \rho_{,a}+\rho_{,a}\otimes \rho)\Pi_\xi]\tr[(\rho\otimes \rho_{,b}+\rho_{,b}\otimes \rho)\Pi_\xi]}{\tr(\rho^{\otimes 2}\Pi_\xi)}
		=2\sum_{\xi} \frac{ \tr(\tilde{E}_a  \tilde{Q}_\xi)\tr(\tilde{E}_b \tilde{Q}_\xi)}{\tr(\tilde{Q}_\xi )}\nonumber\\
		=&2\sum_{\xi} \frac{ \tr\bigl[(\tilde{E}_a \otimes\tilde{E}_b) \tilde{Q}_\xi^{\otimes 2}\bigr] }{\tr(\tilde{Q}_\xi )},	
		\end{align}
		where 
		\begin{equation}\label{eq:Qtilde}
		\tilde{E}_a:=\rho^{-1/2}\rho_{,a} \rho^{-1/2}, \quad 
		\tilde{Q}_\xi: = \tr_1(\tilde{\Pi}_\xi)+\tr_2(\tilde{\Pi}_\xi), \quad \tilde{\Pi}_\xi:=(\rho^{1/2})^{\otimes 2}\Pi_\xi(\rho^{1/2})^{\otimes 2}. 
		\end{equation}
		According to \lref{lem:QFIsumMix}, 
		\begin{align}
		&\sum_{a,b=1}^{d^2-1}(J^{-1})_{ab}(\tilde{E}_a\otimes \tilde{E}_b)=\sum_{a,b=1}^{d^2-1}(J^{-1})_{ab}(\rho^{-1/2})^{\otimes 2}(\rho_{,a}\otimes
		\rho_{,b})(\rho^{-1/2})^{\otimes 2}\nonumber\\
		&=(\rho^{-1/2})^{\otimes 2}\Bigl[\frac{1}{2}V(\rho\otimes 1+1\otimes
		\rho)-\rho^{\otimes2}\Bigr](\rho^{-1/2})^{\otimes 2}=\frac{1}{2}V(1\otimes \rho^{-1}+\rho^{-1}\otimes
		1)-1\otimes 1.
		\end{align}
		Therefore, 
		\begin{align}\label{eq:GMTtwo}
		\tr(J^{-1} I^{(2)})=\sum_{a,b=1}^{d^2-1}(J^{-1})_{ab}I^{(2)}_{ab}
		=2\sum_{\xi}\frac{\tr(\rho^{-1}\tilde{Q}_\xi^2)-[\tr(\tilde{Q}_\xi )]^2}{\tr(\tilde{Q}_\xi )}=2\sum_{\xi}\frac{\tr(\rho^{-1}\tilde{Q}_\xi^2)}{\tr(\tilde{Q}_\xi )}-4\leq 3d-3,
		\end{align}
		note that $\sum_{\xi} \tilde{Q}_\xi=2\rho$.  Here the inequality follows from \lref{lem:GMT2} below, which also shows that the inequality is saturated iff each $\Pi_\xi $ is proportional to either the tensor power of a pure state or a Slater-determinant state.
	\end{proof}
	
	\begin{remark}
		The proof of \thref{thm:GMT2} builds on the observation that the value of $\tr(J^{-1} I^{(2)})$ associated with  a POVM is connected to the purities of symmetrized marginals of POVM elements. The proof of \lref{lem:GMT2} below that underpins \thref{thm:GMT2} relies on the fact that the unitary group on the Hilbert space $\mathcal{H}$ has only two irreducible components on  $\mathcal{H}^{\otimes2}$, which correspond to the completely symmetric subspace and completely  antisymmetric subspace, respectively. 
		The  approach adopted  here may also serve as a stepping stone for studying multi-copy collective measurements. However, new ideas are necessary to generalize our proof since the situation becomes more complicated in the multi-copy setting; see \rcite{Zhu12the} for some partial progress along this direction. 
	\end{remark}

	\begin{lemma}\label{lem:GMT2}
		Suppose $\rho$ is a  density matrix  that has full rank  and  $\{\Pi_\xi\} $ is a POVM on $\caH^{\otimes 2}$. Let $\tilde{\Pi}_\xi:=(\rho^{1/2})^{\otimes 2}\Pi_\xi(\rho^{1/2})^{\otimes 2}$ and $\tilde{Q}_\xi: = \tr_1(\tilde{\Pi}_\xi)+\tr_2(\tilde{\Pi}_\xi)$. Then 
		\begin{align}
		\sum_{\xi}\frac{\tr(\rho^{-1}\tilde{Q}_\xi^2)}{\tr(\tilde{Q}_\xi )}\leq \frac{3d+1}{2}, 
		\end{align}
		and the inequality is saturated iff each $\Pi_\xi $ is proportional to either the tensor power of a pure state or a Slater-determinant state.  
	\end{lemma}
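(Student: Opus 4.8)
The plan is to bound each summand separately, after splitting every POVM element into its symmetric and antisymmetric parts, and then to add up two exact identities. Throughout write $g(Q):=\tr(\rho^{-1}Q^{2})/\tr(Q)$ for a positive operator $Q$ on $\caH$, so the claim becomes $\sum_\xi g(\tilde Q_\xi)\le\tfrac{3d+1}{2}$. Note that $\sum_\xi\tilde\Pi_\xi=(\rho^{1/2})^{\otimes2}\bigl(\sum_\xi\Pi_\xi\bigr)(\rho^{1/2})^{\otimes2}=\rho\otimes\rho$ since $\{\Pi_\xi\}$ is a POVM, which is what will make the final sum collapse. First I would establish a superadditivity lemma: for positive $A,B$ one has $g(A+B)\le g(A)+g(B)$, with equality iff $A\propto B$. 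This follows by writing $\tr(\rho^{-1}X^2)=\|X\rho^{-1/2}\|_{\mathrm{HS}}^2$ and $\tr(X)=\langle\rho^{1/2},X\rho^{-1/2}\rangle_{\mathrm{HS}}$, which reduces the statement to the elementary inequality $\frac{\|x\|^2}{a}+\frac{\|y\|^2}{b}\ge\frac{\|x+y\|^2}{a+b}$ for $a,b>0$, a consequence of AM--GM combined with Cauchy--Schwarz.

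Second, since $(\rho^{1/2})^{\otimes2}$ commutes with $P_\pm$, the block-diagonal truncation of $\tilde\Pi_\xi$ is $P_+\tilde\Pi_\xi P_++P_-\tilde\Pi_\xi P_-$, and the discarded off-diagonal part $X$ obeys $VXV=-X$. Using $\tr_1(VXV)=\tr_2(X)$, this gives $\tr_1X+\tr_2X=0$, so the symmetrized marginal is unaffected: $\tilde Q_\xi=\tilde Q_\xi^{+}+\tilde Q_\xi^{-}$, where $\tilde Q_\xi^{\pm}$ are the symmetrized marginals of $P_\pm\Pi_\xi P_\pm$. Superadditivity then yields $g(\tilde Q_\xi)\le g(\tilde Q_\xi^{+})+g(\tilde Q_\xi^{-})$, reducing the problem to positive operators supported in a single symmetry subspace.

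The core step is two per-block linear bounds: for $\Pi$ supported on $\caH_+$, $g(\tilde Q)\le 2\tr[(\rho\otimes1)\Pi]$ with equality iff $\Pi$ is proportional to a tensor power; for $\Pi$ supported on $\caH_-$, $g(\tilde Q)\le\tr[(\rho\otimes1)\Pi]$ with equality iff $\Pi$ is proportional to a Slater determinant. Using superadditivity once more I may take $\Pi=|\Phi\rangle\langle\Phi|$ rank one; for a $V$-eigenstate $\tr_1\tilde\Pi=\tr_2\tilde\Pi=:\mu$, so $\tilde Q=2\mu$. Applying the symmetric (Takagi) decomposition $|\tilde\Phi\rangle=\sum_k s_k|e_ke_k\rangle$ in the symmetric case, or the Slater canonical form $|\tilde\Phi\rangle=\sum_m\lambda_m(|a_mb_m\rangle-|b_ma_m\rangle)/\sqrt2$ with mutually orthonormal $\{a_m,b_m\}$ in the antisymmetric case, makes $\mu$ a sum of orthogonal terms, and both bounds reduce to the single inequality $\sum_m\lambda_m^4 t_m\le\bigl(\sum_m\lambda_m^2\bigr)\bigl(\sum_m\lambda_m^2 t_m\bigr)$ with $t_m>0$, whose slack is $\sum_{m\neq m'}\lambda_m^2\lambda_{m'}^2 t_{m'}\ge0$. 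Summing the block bounds over $\xi$ and using $\sum_\xi P_\pm\Pi_\xi P_\pm=P_\pm$ gives $\sum_\xi g(\tilde Q_\xi)\le 2\tr[(\rho\otimes1)P_+]+\tr[(\rho\otimes1)P_-]=(d+1)+\tfrac{d-1}{2}=\tfrac{3d+1}{2}$, where $\tr[(\rho\otimes1)P_\pm]=\tfrac12(d\pm1)$.

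For the equality condition I would trace equality back through every step: the block bound forces each rank-one piece to be a single tensor power or a single Slater determinant, while the superadditive splitting forces one of $\tilde Q_\xi^{\pm}$ to vanish, since a rank-one symmetric marginal and a rank-two Slater marginal can never be proportional. Hence equality holds iff every $\Pi_\xi$ is proportional to a tensor power of a pure state or to a Slater-determinant state. I expect the main obstacle to be the antisymmetric block bound: one must invoke the Slater canonical form to diagonalize the two-copy marginal and to check that the cross terms in $\langle\tilde\Phi|(1\otimes\rho^{-1})|\tilde\Phi\rangle$ vanish on orthogonal planes, which is precisely what singles out the rank-two marginals of Slater determinants, rather than any rank-one states, as the antisymmetric extremizers.
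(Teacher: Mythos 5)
Your proof is correct and shares its skeleton with the paper's: the same splitting $\tilde Q_\xi=\tilde Q_\xi^{+}+\tilde Q_\xi^{-}$ into symmetric and antisymmetric contributions, the same superadditivity lemma for $g(Q)=\tr(\rho^{-1}Q^2)/\tr(Q)$ (your first lemma is exactly \lref{lem:SumIneq}), and the same final summation $(d+1)+\tfrac{d-1}{2}=\tfrac{3d+1}{2}$ via $\sum_\xi P_\pm\Pi_\xi P_\pm=P_\pm$. Where you genuinely diverge is the per-block bound. The paper obtains it from the operator inequalities $(\tilde Q_\xi^{+})^2\le\tr(\tilde Q_\xi^{+})\,\tilde Q_\xi^{+}$ and $(\tilde Q_\xi^{-})^2\le\tfrac12\tr(\tilde Q_\xi^{-})\,\tilde Q_\xi^{-}$ in \eref{eq:MarginalSAS} --- spectral statements about symmetrized marginals of operators supported on $\caH_\pm$ --- which give $g(\tilde Q_\xi^{+})\le\tr(\rho^{-1}\tilde Q_\xi^{+})$ and $g(\tilde Q_\xi^{-})\le\tfrac12\tr(\rho^{-1}\tilde Q_\xi^{-})$ in one line and make the equality conditions (rank-one versus balanced rank-two marginal) transparent. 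You instead reduce to rank-one elements by a second application of superadditivity and compute with Takagi/Slater canonical forms, landing on the scalar inequality $\sum_m\lambda_m^4t_m\le(\sum_m\lambda_m^2)(\sum_m\lambda_m^2t_m)$; this is more computational but equally valid (your bounds $2\tr[(\rho\otimes1)\Pi]$ and $\tr[(\rho\otimes1)\Pi]$ coincide with the paper's), and it has the merit of explicitly verifying that the off-diagonal blocks drop out of the symmetrized marginal ($\tr_1X+\tr_2X=0$ when $VXV=-X$), a point the paper asserts without comment. One step to tighten: in the within-block rank-one reduction, saturating the chain requires not only that each spectral piece of $\Pi_\xi^{\pm}$ be a tensor power (resp.\ Slater determinant) but also, by the equality condition of superadditivity, that their symmetrized marginals be mutually proportional; it is this proportionality that forces $\Pi_\xi^{\pm}$ itself to be proportional to a \emph{single} such state rather than a mixture of distinct ones, and it deserves an explicit sentence. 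Your final observation that a rank-one symmetric marginal cannot be proportional to a rank-two Slater marginal, so one of $\tilde Q_\xi^{\pm}$ must vanish, matches the paper's closing argument.
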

	\begin{proof}
		Let $\Pi^{\pm}_\xi:=P_{\pm} \Pi_\xi P_{\pm}$,  $\tilde{\Pi}^{\pm}_\xi:=(\rho^{1/2})^{\otimes 2}\Pi^{\pm}_\xi(\rho^{1/2})^{\otimes 2}
		=P_{\pm} \tilde{\Pi}_\xi P_{\pm}$, and $\tilde{Q}_\xi^{\pm}:= \tr_1(\tilde{\Pi}^{\pm}_\xi)+\tr_2(\tilde{\Pi}^{\pm}_\xi)$;  then $\tilde{Q}_\xi=\tilde{Q}_\xi^+ +\tilde{Q}_\xi^- $. Note that $\tilde{\Pi}_\xi^+ $ and $\tilde{\Pi}_\xi^- $ are supported on the symmetric and antisymmetric subspaces, respectively. Therefore,
		\begin{equation}\label{eq:MarginalSAS}
		(\tilde{Q}_\xi^+ )^2\leq \tr(\tilde{Q}_\xi^+ )\tilde{Q}_\xi^+ ,\quad (\tilde{Q}_\xi^- )^2\leq \frac{1}{2}\tr(\tilde{Q}_\xi^- ) \tilde{Q}_\xi^-,
		\end{equation}
		where the first inequality is saturated iff 
		$\Pi_\xi^+$ (or equivalently $\tilde{\Pi}_\xi^+$) is proportional to  the tensor power of a pure state, and the second one is saturated iff $\Pi_\xi^-$ (or equivalently $\tilde{\Pi}_\xi^-$) is proportional to
		a Slater-determinant state. In addition, we have 
		\begin{align}
		\sum_{\xi}\tr(\rho^{-1}\tilde{Q}^{\pm}_\xi)=\sum_{\xi}\tr\bigl[(\rho^{-1}\otimes 1+1\otimes \rho^{-1})(\rho^{1/2})^{\otimes 2}\Pi^{\pm}_\xi(\rho^{1/2})^{\otimes 2}\bigr]=\tr\bigl[(1\otimes \rho+\rho\otimes 1)P_\pm\bigr]=d\pm 1. 
		\end{align}
		Consequently, 
		\begin{align}
		\sum_{\xi}\frac{\tr(\rho^{-1}\tilde{Q}_\xi^2)}{\tr(\tilde{Q}_\xi )}&\leq 
		\sum_{\xi}\biggl(\frac{\tr\bigl[\rho^{-1}(\tilde{Q}_\xi^+ )^2\bigl]}{\tr(\tilde{Q}_\xi^+  )}+\frac{\tr\bigl[\rho^{-1}(\tilde{Q}_\xi^- )^2\bigl]}{\tr(\tilde{Q}_\xi^-  )}\biggr)
		\leq 
		\sum_{\xi}\tr\bigl(\rho^{-1}\tilde{Q}_\xi^+ \bigl)+\frac{1}{2}\sum_{\xi}\tr\bigl(\rho^{-1}\tilde{Q}_\xi^- \bigl)\nonumber\\
		&=(d+1)+\frac{d-1}{2}
		\leq \frac{3d+1}{2}, 
		\end{align}
		where $\tr\bigl[\rho^{-1}(\tilde{Q}_\xi^\pm )^2\bigl]/\tr(\tilde{Q}_\xi^\pm )$ is set to 0 when $\tilde{Q}_\xi^\pm =0$. Here the first inequality follows from \lref{lem:SumIneq} below, and the second one from \eref{eq:MarginalSAS}.	
		If  $\Pi_\xi $ is proportional to either the tensor power of a pure state or a Slater-determinant state, then the two inequalities are saturated as well as the inequalities in \eref{eq:MarginalSAS}. Conversely, if the two inequalities are saturated, then  $\Pi_\xi^+  $ is proportional to  the tensor power of a pure state, and  $\Pi_\xi^-  $  is proportional to
		a Slater-determinant state. In addition, either $\tilde{Q}_\xi^+ $ or  $\tilde{Q}_\xi^- $
		must vanish in order to saturate the first inequality; that is, either $\Pi_\xi^+ $ or $\Pi_\xi^- $ must vanish. Therefore, $\Pi_\xi $ is proportional to either the tensor power of a pure state or a Slater-determinant state. 
	\end{proof}
	
	\begin{lemma}\label{lem:SumIneq}
		Suppose $\rho, A, B$ are nonzero positive operators on $\caH$ with $\rho$ having full rank. Then 
		\begin{equation}
		\frac{\tr[\rho (A+B)^2]}{\tr(A+B)}\leq \frac{\tr(\rho A^2)}{\tr(A)}+\frac{\tr(\rho B^2)}{\tr(B)},
		\end{equation}
		and the inequality is saturated iff  $A$ is proportional to $B$.
		\begin{proof}According to the Cauchy-Schwarz inequality,
			\begin{equation}
			|\tr(\rho A B)|=|\tr(\rho^{1/2} A B\rho^{1/2})|\leq \sqrt{\tr(\rho A^2)\tr(\rho B^2)}. 
			\end{equation}
			The inequality is saturated iff $A\rho^{1/2}$ is proportional to $B\rho^{1/2}$, that is, $A$ is proportional to $B$. Therefore,
			\begin{align}
			\frac{\tr[\rho (A+B)^2]}{\tr(A+B)}&=
			\frac{\tr(\rho A^2)+\tr(\rho B^2)+\tr(\rho A B)+\tr(\rho BA)}{\tr(A+B)}\leq \frac{\bigl[\sqrt{\tr(\rho A^2)}+\sqrt{\tr(\rho B^2)}\,\bigr]^2} {\tr(A+B)}\nonumber\\
			&\leq  \frac{\tr(\rho A^2)}{\tr(A)}+\frac{\tr(\rho B^2)}{\tr(B)}.
			\end{align}
			If the first inequality is saturated, then  $A$ is proportional to $B$, in which case the second inequality is saturated automatically. 
		\end{proof}
	\end{lemma}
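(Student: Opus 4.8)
The plan is to expand the square in the numerator on the left, control the resulting cross term by a Cauchy--Schwarz inequality in the Hilbert--Schmidt inner product, and then finish with a second, purely scalar Cauchy--Schwarz estimate. Expanding gives
\begin{equation}
\tr[\rho(A+B)^2]=\tr(\rho A^2)+\tr(\rho B^2)+\tr(\rho AB)+\tr(\rho BA),
\end{equation}
and since $\tr(\rho BA)=\overline{\tr(\rho AB)}$, the whole problem reduces to bounding the mixed term $\tr(\rho AB)+\tr(\rho BA)=2\Re\tr(\rho AB)$ in terms of $\tr(\rho A^2)$ and $\tr(\rho B^2)$.

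For the cross term I would first record the identity $\tr(\rho AB)=\tr\bigl[(A\rho^{1/2})^\dag(B\rho^{1/2})\bigr]$, which displays it as the Hilbert--Schmidt inner product of $A\rho^{1/2}$ and $B\rho^{1/2}$. Cauchy--Schwarz then yields $|\tr(\rho AB)|\leq\sqrt{\tr(\rho A^2)}\,\sqrt{\tr(\rho B^2)}$, using $\|A\rho^{1/2}\|^2=\tr(\rho A^2)$ and likewise for $B$; here the full-rank hypothesis makes $\rho^{1/2}$ invertible, so equality forces $A\propto B$. Combining with $2\Re\tr(\rho AB)\leq 2|\tr(\rho AB)|$ gives
\begin{equation}
\tr[\rho(A+B)^2]\leq\Bigl(\sqrt{\tr(\rho A^2)}+\sqrt{\tr(\rho B^2)}\,\Bigr)^2.
\end{equation}

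It then remains to divide by $\tr(A+B)=\tr(A)+\tr(B)$ and compare with the claimed right-hand side. Setting $s=\tr(A)$, $t=\tr(B)$, $a=\tr(\rho A^2)$, $b=\tr(\rho B^2)$ (all strictly positive since the operators are nonzero), the residual scalar inequality $(\sqrt a+\sqrt b)^2/(s+t)\leq a/s+b/t$ is exactly Cauchy--Schwarz applied to the pairs $(\sqrt s,\sqrt t)$ and $(\sqrt a/\sqrt s,\sqrt b/\sqrt t)$. I expect the only genuine subtlety — the main obstacle — to be the equality analysis, because two Cauchy--Schwarz steps are being chained and one must check they are compatible. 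The point is that when the first step is tight, so that $A=cB$ for a scalar $c$ (necessarily $c>0$ by positivity), the second is automatically tight as well: $A=cB$ forces $a=c^2 b$ and $s=ct$, whence the proportionality condition $b/a=t^2/s^2$ of the scalar Cauchy--Schwarz holds identically. Conversely, overall equality forces equality in the first step and hence $A\propto B$, which establishes the stated equality condition.
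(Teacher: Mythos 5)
Your proposal is correct and follows essentially the same route as the paper's own proof: the operator Cauchy--Schwarz bound $|\tr(\rho AB)|\leq\sqrt{\tr(\rho A^2)\tr(\rho B^2)}$ via the Hilbert--Schmidt inner product of $A\rho^{1/2}$ and $B\rho^{1/2}$, followed by the scalar Cauchy--Schwarz step $\bigl(\sqrt{a}+\sqrt{b}\,\bigr)^2/(s+t)\leq a/s+b/t$, with the same equality analysis. Your write-up is in fact slightly more explicit than the paper's about why tightness of the first step automatically forces tightness of the second.
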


	\section{Tight coherent measurements} 
	
	In this section we prove \thsref{thm:UFS} and \ref{thm:UFSmin} in the main text, thereby clarifying the structure of tight coherent measurements. We also  provide more details on constructing minimal tight coherent measurements in dimension 3.

	\subsection{Proofs of \thsref{thm:UFS} and \ref{thm:UFSmin}}
	
	\Thref{thm:UFS} is an immediate consequence of the following theorem, which refines \thref{thm:UFS}. 
	\begin{theorem}\label{thm:UFSsupp}
		Let  $\{\Pi_\xi\}$ be a POVM  on $\caH^{\otimes 2}$ and $Q_\xi:=\tr_1(\Pi_\xi)+\tr_2(\Pi_\xi)$. Then the following statements are equivalent. 
		\begin{enumerate}
			\item $\{\Pi_\xi\}$ is tight coherent. 
			
			\item $\{Q_\xi\}$ is a generalized 2-design of purity $\frac{3d+1}{4d}$. 
			
			\item Each $\Pi_\xi $ is proportional to either the tensor power of a pure state or a Slater-determinant state, and $\{Q_\xi\}$ forms a generalized 2-design.
			
			\item $\{\Pi_\xi\}$ is a union of two POVMs $\{\Pi_\zeta^+ \}$ and $\{\Pi_\eta^- \}$ on $\caH_+$ and $\caH_-$, respectively; $\{Q_\zeta^+  \}$ forms a 2-design, and $\{Q_\eta^- \}$ forms a generalized 2-design of purity $\frac{1}{2}$.
		\end{enumerate}
	\end{theorem}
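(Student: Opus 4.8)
The plan is to funnel the whole statement through the behavior of $\{\Pi_\xi\}$ at the completely mixed state, exploiting the fact that the two-copy Fisher information there is controlled entirely by the symmetrized marginals $Q_\xi=\tr_1(\Pi_\xi)+\tr_2(\Pi_\xi)$. The first remark is that \emph{coherence alone} already secures Fisher symmetry for all pure states: if each $\Pi_\xi$ lies in $\caH_+$ or $\caH_-$, then the symmetric part $\{\Pi_\zeta^+\}$ sums to $P_+$, so writing $\Pi_\zeta^+=w_\zeta(|\psi_\zeta\rangle\langle\psi_\zeta|)^{\otimes2}$ the weighted set $\{|\psi_\zeta\rangle,w_\zeta\}$ is automatically a $2$-design---whence Fisher symmetry for all pure states by the main-text construction---while the antisymmetric elements contribute neither probability nor derivative at a pure state (\sref{sec:QSTcoll}). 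Consequently \emph{tight coherent} reduces to the single demand of Fisher symmetry at $\rho=1/d$, and I would establish the theorem through the chain $(1)\Leftrightarrow(2)\Leftrightarrow(3)\Leftrightarrow(4)$, with $(1)\Leftrightarrow(2)$ carrying the analytic weight.

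For the core equivalence I would compute $I^{(2)}$ at $\rho=1/d$ in the affine parametrization $\rho_{,a}=E_a$ of \pref{pro:WFSmix}. Using $\partial_a p_\xi=\tfrac1d\tr(E_aQ_\xi)$ and $p_\xi=\tr(Q_\xi)/(2d^2)$ at that point, one finds
\begin{equation}
I^{(2)}_{ab}=2\sum_\xi\frac{\tr(E_aQ_\xi)\,\tr(E_bQ_\xi)}{\tr(Q_\xi)},
\end{equation}
which is, up to the factor $2/d$, the single-copy expression \eref{eq:FisherMixOne} with $\Pi_\xi$ replaced by $Q_\xi$. Running the argument of \pref{pro:WFSmix} (equivalently, \pref{pro:g2designBound}) then shows that $I^{(2)}$ is proportional to the identity iff $\{Q_\xi\}$ is a generalized $2$-design. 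Taking the trace gives $\tr I^{(2)}=4d^2\wp_Q-4d$ with $\wp_Q$ the purity of $\{Q_\xi\}$; imposing the Fisher-symmetric value $\tr I^{(2)}=\tr\bigl(3J/(d+1)\bigr)=3d(d-1)$ forces $\wp_Q=\tfrac{3d+1}{4d}$. This gives both $(1)\Rightarrow(2)$ and, read in reverse, $(2)\Rightarrow(1)$: a generalized $2$-design of this purity is weakly Fisher symmetric with exactly the right normalization at $1/d$, and is therefore Fisher symmetric there.

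The passage $(2)\Leftrightarrow(3)$ is the assertion that the purity value $\tfrac{3d+1}{4d}$ is equivalent to coherence. This is precisely the equality clause of \lref{lem:GMT2}: evaluated at $\rho=1/d$, where $\tilde Q_\xi=Q_\xi/d^2$, its bound becomes $2d\,\wp_Q\le\tfrac{3d+1}{2}$, i.e.\ $\wp_Q\le\tfrac{3d+1}{4d}$, with equality iff every $\Pi_\xi$ is proportional to a tensor power of a pure state or to a Slater-determinant state. For $(3)\Leftrightarrow(4)$ I would split a coherent POVM into its $\caH_+$ and $\caH_-$ blocks. Then $Q_\zeta^+=2w_\zeta|\psi_\zeta\rangle\langle\psi_\zeta|$ is rank one, $\{Q_\zeta^+\}$ is the $2$-design noted above with $\caM^+:=\sum_\zeta Q_\zeta^+\!\otimes\!Q_\zeta^+/\tr(Q_\zeta^+)=2P_+$, while each $Q_\eta^-$ is a rank-$2$ projector of purity $\tfrac12$. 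Subtracting this fixed $\caM^+$ from the generalized-$2$-design equation \eref{eq:g2designSupp} for $\{Q_\xi\}$ leaves exactly the equation $\caM^-=\tfrac{3(d-1)}{2(d+1)}P_++\tfrac12P_-$, which is the generalized-$2$-design condition of purity $\tfrac12$ for $\{Q_\eta^-\}$; the reverse direction reassembles the two blocks and uses the equality clauses of \eref{eq:MarginalSAS} to read rank-one symmetric marginals as tensor-power states and rank-$2$-projector antisymmetric marginals as Slater determinants, restoring coherence.

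The step I expect to be delicate is the decoupling in $(3)\Leftrightarrow(4)$: one must check that the single generalized-$2$-design equation for the \emph{combined} family $\{Q_\xi\}$ separates cleanly into the automatic symmetric $2$-design and the antisymmetric purity-$\tfrac12$ design, and, going back, that a rank-$2$-projector antisymmetric marginal genuinely forces a Slater-determinant element rather than some other positive operator supported on $\caH_-$. Both points rest on the tight equality conditions already isolated in the proof of \lref{lem:GMT2}, so the difficulty is one of careful bookkeeping rather than of a missing idea.
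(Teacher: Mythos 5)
Your proposal is correct and follows essentially the same route as the paper's proof: reduce everything to the Fisher information at the completely mixed state expressed through the marginals $Q_\xi$, invoke \pref{pro:WFSmix} for the generalized-2-design characterization, and use the equality condition of \thref{thm:GMT2} (equivalently \lref{lem:GMT2}) to identify the purity value $\frac{3d+1}{4d}$ with coherence. Your reorganization of the chain (proving $(2)\Leftrightarrow(3)$ directly as a purity bound, and making the $(3)\Leftrightarrow(4)$ split explicit by subtracting $\caM^+=2P_+$ from the design equation) is only a cosmetic rearrangement of the paper's argument, and all the computations check out.
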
 	
	\begin{remark}
		Here $Q_\zeta^+$ and $Q_\eta^-$ are defined in analogy to $Q_\xi$, that is, 
		$Q_\zeta^+=\tr_1(\Pi_\zeta^+)+\tr_2(\Pi_\zeta^+)$ and $Q_\eta^-=\tr_1(\Pi_\eta^-)+\tr_2(\Pi_\eta^-)$. If any of the four statements in \thref{thm:UFSsupp} holds, then $Q_\zeta^+$ are proportional to rank-1 projectors, and $Q_\eta^-$ are proportional to rank-2 projectors. When $d=2$, $Q_\eta^- $ are necessarily proportional to the identity, so the generalized 2-design $\{Q_\eta^- \}$ is trivial. 
	\end{remark}
	\begin{proof}
		Choose the affine parametrization specified in \eref{eq:AffinePara}, then 	
		the Fisher information matrix $I^{(2)}$ at the completely mixed state associated with $\{\Pi_\xi\}$ has matrix elements
		\begin{align}\label{eq:FisherMixTwo}
		I^{(2)}_{ab}=2\sum_{\xi}\frac{ \tr(E_a  Q_\xi) \tr(E_bQ_\xi)}{\tr (Q_\xi)},
		\end{align}
		Note that $\sum_\xi Q_\xi=2d$ and $\sum_\xi \tr(Q_\xi)=2d^2$, so $\{Q_\xi\}$ forms a POVM on $\caH$ up to scaling. Also, note the similarity between \eref{eq:FisherMixTwo} and  \eref{eq:FisherMixOne}. According  to \pref{pro:WFSmix} in the main text, $\{\Pi_\xi\}$ is weakly Fisher symmetric at the completely mixed state iff $\{Q_\xi\}$ is a generalized 2-design. In addition, 
		\begin{equation}
		\tr(I^{(2)})=2\sum_{\xi}\frac{d\tr(Q_\xi^2)-[\tr(Q_\xi)]^2}{d\tr(Q_\xi)}=4d^2\wp -4d, \quad \tr(J^{-1}I^{(2)})=\frac{1}{d}\tr(I^{(2)})=4d\wp -4.
		\end{equation}
		where $\wp:=\sum_\xi \tr( Q_\xi^2)/[2d^2\tr( Q_\xi)]$
		is the purity of $\{Q_\xi\}$. 
		
		By \thref{thm:GMT2}, if $\{\Pi_\xi\}$ is coherent, then $\tr(J^{-1}I^{(2)})=3d-3$, which implies that $\wp=(3d+1)/(4d)$. If  $\{\Pi_\xi\}$ is tight coherent, then it is also Fisher symmetric at the completely mixed state, so $\{Q_\xi\}$ is a generalized 2-design of purity $(3d+1)/(4d)$. Conversely, if $\{Q_\xi\}$ is a generalized 2-design of purity $(3d+1)/(4d)$, then $\tr(J^{-1}I^{(2)})=3d-3$, and $I$ is proportional to $J$, so $\{\Pi_\xi\}$ is Fisher symmetric. It follows that statements 1 and 2 are equivalent. 
		
		If statements 1 and 2 hold, then each $\Pi_\xi$ is proportional to either the tensor power of a pure state or  a Slater-determinant state
		according to  \thref{thm:GMT2}. In addition, $\{Q_\xi\}$ is a generalized 2-design. So
		statements 1 and 2 imply statement 3. On the other hand, $\{Q_\xi\}$ necessarily  has purity $(3d+1)/(4d)$ if each $\Pi_\xi$ is proportional to either the tensor power of a pure state or  a Slater-determinant state. So statement 3 implies statement 2. Consequently, statements 1, 2, and 3 are equivalent.

		It is easy to verify that statement 4 implies statement 2, which in turn implies statements 1 and 3 according to the above discussion. Now suppose  statement 3 holds. Then each $\Pi_\xi$ is proportional to either the tensor power of a pure state or  a Slater-determinant state, so $\{\Pi_\xi\}$ is a union of two POVMs $\{\Pi_\zeta^+ \}$ and $\{\Pi_\eta^- \}$ on $\caH_+$ and $\caH_-$, respectively, $\{Q_\zeta^+  \}$ is a 2-design, and $\{Q_\eta^- \}$ has purity $1/2$. 
		Given that both $\{Q_\xi \}$  and $\{Q_\zeta^+  \}$ are  generalized 2-designs, it follows that $\{Q_\eta^-\}$ is also a generalized 2-design.  Therefore, statement~3 implies statement~4. This observation completes the proof of \thref{thm:UFSsupp}.
	\end{proof}

	\begin{proof}[Proof of \thref{thm:UFSmin}]
		Suppose $\{\Pi_\xi\}$  is tight coherent. 
		According to \thref{thm:UFSsupp}, $\{\Pi_\xi\}$ is a union of two POVMs $\{\Pi_\zeta^+ \}$ and $\{\Pi_\eta^- \}$ on $\caH_+$ and $\caH_-$, respectively, $\{Q_\zeta^+  \}$ is a 2-design,  $\{Q_\eta^- \}$ is a generalized 2-design of 
		purity $1/2$, and $Q_\eta^-$ are proportional to rank-2 projectors. According to \pref{pro:gSICg2design}, both $\{Q_\zeta^+  \}$ and   $\{Q_\eta^- \}$ have at least $d^2$ elements, so $\{\Pi_\xi\}$ has at least $2d^2$ elements. If the lower bound is saturated, then  both $\{Q_\zeta^+  \}$ and   $\{Q_\eta^- \}$ have  $d^2$ elements, so  $\{Q_\zeta^+  \}$ forms a SIC by \crref{cor:SIC2design}, and   $\{Q_\eta^- \}$ forms a generalized SIC of purity $1/2$ by \pref{pro:gSICg2design}. The converse is an easy consequence of \pref{pro:gSICg2design} and \thref{thm:UFSsupp}. 
	\end{proof}

\subsection{Minimal tight coherent measurements in dimension~3}
According to \thref{thm:UFSmin}  in the main text, any tight coherent POVM on $\caH$ with $2d^2$ elements when $d\geq3$ is determined by a SIC and a generalized SIC composed of rank-2 projectors, and vice versa.  When $d=3$, if $\{B_\eta\}$ for $\eta=1,2,\ldots, 9$ is such a generalized SIC, then $\{1-B_\eta\}$ is a SIC \cite{Zaun11,ReneBSC04}. Therefore, minimal tight coherent POVMs in dimension~3 are in one-to-one correspondence with pairs of SICs. 

Since all SICs in dimension 3 are known \cite{Zaun11,ReneBSC04, Appl05, Zhu10, HughS16,Szol14}, all minimal tight coherent  POVMs in dimension~3 can be constructed explicitly. More precisely,  all SICs  in dimension 3 are covariant with respect to the Heisenberg-Weyl group, which is generated by the cyclic shift operator $X$  and the phase operator $Z:=\diag(1,\rme^{2\pi\rmi/3},\rme^{4\pi\rmi/3})$.  In addition, every SIC is equivalent to a SIC of the form $\{X^jZ^k|\psi(\phi)\>\}_{j,k=0,1,2}$ \cite{Zhu10,Zhu16Q}, where
\begin{equation}
|\psi(\phi)\rangle:=\frac{1}{\sqrt{2}}(0,1,-\rme^{\rmi \phi})^\rmT,\quad 0\leq \phi\leq \frac{\pi}{9}.
\end{equation}
Suppose $\{|\psi_\zeta\>\}$ and $\{|\varphi_\zeta\>\}$ are two  SICs in dimension 3. Let 
\begin{equation}
\Pi_\zeta^+ :=\frac{2}{3}(|\psi_\zeta\>\<\psi_\zeta|)^{\otimes 2}, \quad \Pi_\zeta^- :=\frac{1}{3}P_-(1-|\varphi_\zeta\>\<\varphi_\zeta|)^{\otimes 2}P_-,\quad \zeta=1,2,\ldots, 9.
\end{equation}
Then the union of $\{\Pi_\zeta^+ \}$ and $\{\Pi^-_\zeta\}$, that is, the POVM composed of all $\Pi_\zeta^+ $ and $\Pi^-_\zeta$,   is tight coherent. Conversely,   every minimal tight coherent POVM in dimension~3 has this form and thus can be constructed explicitly.

\end{document}